\def\UseBibLatex{1}
\providecommand{\SoCGVer}[1]{}%
\providecommand{\NotSoCGVer}[1]{#1}%
\def\input@path{{lipics/}{../lipics/}}
\providecommand{\SoCGVer}[1]{#1}%
\providecommand{\NotSoCGVer}[1]{}%
\newcommand{\SarielComp}[1]{}
\newcommand{\NotSarielComp}[1]{#1}%
\newcommand{\SarielComp}[1]{#1}%
\newcommand{\NotSarielComp}[1]{}%
\newcommand{\IfPrinterVer}[2]{#2}%
   \titleformat{\paragraph}[runin]
   {\normalfont\bfseries}
   {\theparagraph}
   {1em}
   {\addperiod}
   \newcommand{\addperiod}[1]{#1.}%
\providecommand{\BibLatexMode}[1]{}
\providecommand{\BibTexMode}[1]{#1}
  \renewcommand{\BibLatexMode}[1]{}
  \renewcommand{\BibTexMode}[1]{#1}
  \renewcommand{\BibLatexMode}[1]{#1}
  \renewcommand{\BibTexMode}[1]{}
\newcommand{\UsePackage}[1]{%
  \IfFileExists{../styles/#1.sty}{%
      \usepackage{../styles/#1}%
   }{%
      \IfFileExists{./styles/#1.sty}{%
         \usepackage{styles/#1}%
      }{%
         \usepackage{#1}%
      }%
   }%
}
   \theoremstyle{plain}%
   \newtheorem{fact}[theorem]{Fact}
   \newtheorem{invariant}[theorem]{Invariant}
   \newtheorem{question}[theorem]{Question}
   \newtheorem{prop}[theorem]{Proposition}
   \newtheorem{openproblem}[theorem]{Open Problem}
   \theoremstyle{plain}%
   \newtheorem{defn}[theorem]{Definition}
   \newtheorem{problem}[theorem]{Problem}
   \newtheorem{xca}[theorem]{Exercise}
   \newtheorem{exercise_h}[theorem]{Exercise}
   \newtheorem{assumption}[theorem]{Assumption}%
   \newtheorem{proofof}{Proof of\!}%
\theoremstyle{plain}%
\newtheorem{theorem}{Theorem}[section]
\newtheorem{lemma}[theorem]{Lemma}
\newtheorem{corollary}[theorem]{Corollary}
\newtheorem{claim}[theorem]{Claim}%
\theoremstyle{plain}%
\newtheorem*{remark:unnumbered}[theorem]{Remark}%
\newtheorem{remark}[theorem]{Remark}%
\newtheorem{definition}[theorem]{Definition}
\newtheorem{defn}[theorem]{Definition}
\newcommand{\myqedsymbol}{\rule{2mm}{2mm}}
\theoremstyle{nonumberplain}%
\newtheorem{proof}{Proof:}%
\definecolor{nalmostblack}{rgb}{0, 0, 0.7}
\providecommand{\emphic}[2]{%
   \textcolor{nalmostblack}{%
      \textbf{\emph{#1}}}%
   \index{#2}}
\providecommand{\emphi}[1]{\emphic{#1}{#1}}
\definecolor{almostblack}{rgb}{0, 0, 0.5}
\providecommand{\emphw}[1]{{\emph{{\textcolor{almostblack}{#1}}}}}%
\providecommand{\emphOnly}[1]{\emph{\textcolor{almostblack}{\textbf{#1}}}}
\numberwithin{figure}{section}%
\numberwithin{table}{section}%
\numberwithin{equation}{section}%
\newcommand{\SarielThanks}[1]{\thanks{Department of Computer Science;
      University of Illinois; 201 N. Goodwin Avenue; Urbana, IL,
      61801, USA; %
      \href{mailto:sariel.spam@illinois.edu}%
      {sariel@illinois.edu}; %
      \url{http://sarielhp.org/}. #1}}
\newcommand{\HLink}[2]{\hyperref[#2]{#1~\ref*{#2}}}
\newcommand{\HLinkSuffix}[3]{\hyperref[#2]{#1\ref*{#2}{#3}}}
\newcommand{\tablab}[1]{\label{table:#1}}
\newcommand{\tabref}[1]{\HLink{Table}{table:#1}}
\newcommand{\figlab}[1]{\label{fig:#1}}
\newcommand{\figref}[1]{\HLink{Figure}{fig:#1}}
\newcommand{\thmlab}[1]{{\label{theo:#1}}}
\newcommand{\thmref}[1]{\HLink{Theorem}{theo:#1}}
\newcommand{\corlab}[1]{\label{cor:#1}}
\newcommand{\corref}[1]{\HLink{Corollary}{cor:#1}}%
\providecommand{\deflab}[1]{\label{def:#1}}
\newcommand{\defref}[1]{\HLink{Definition}{def:#1}}
\newcommand{\clmlab}[1]{\label{claim:#1}}
\newcommand{\clmref}[1]{\HLink{Claim}{claim:#1}}
\newcommand{\seclab}[1]{\label{sec:#1}}
\newcommand{\secref}[1]{\HLink{Section}{sec:#1}}
\newcommand{\rectA}{\Mh{B}}%
\newcommand{\rectB}{\Mh{D}}%
\newcommand{\DW}{\times}
\newcommand{\shrinkDY}[2]{#1_{\boxminus #2}}
\newcommand{\Rects}{\Mh{\mathcal{R}}}%
\newcommand{\itemlab}[1]{\label{item:#1}}
\newcommand{\itemref}[1]{\HLinkSuffix{}{item:#1}{}}
\newcommand{\apndlab}[1]{\label{apnd:#1}}
\newcommand{\apndref}[1]{\HLink{Appendix}{apnd:#1}}
\newcommand{\remlab}[1]{\label{rem:#1}}
\newcommand{\remref}[1]{\HLink{Remark}{rem:#1}}%
\newcommand{\lemlab}[1]{\label{lemma:#1}}
\newcommand{\lemref}[1]{\HLink{Lemma}{lemma:#1}}%
\providecommand{\eqlab}[1]{}%
\renewcommand{\eqlab}[1]{\label{equation:#1}}
\providecommand{\remove}[1]{}%
\newcommand{\Set}[2]{\left\{ #1 \;\middle\vert\; #2 \right\}}
\newcommand{\pth}[2][\!]{\mleft({#2}\mright)}%
\newcommand{\ceil}[1]{\left\lceil {#1} \right\rceil}
\newcommand{\floor}[1]{\left\lfloor {#1} \right\rfloor}
\newcommand{\brc}[1]{\left\{ {#1} \right\}}
\newcommand{\cardin}[1]{\left| {#1} \right|}%
\renewcommand{\th}{th\xspace}
\renewcommand{\Re}{\mathbb{R}}%
\newlist{compactenumA}{enumerate}{5}%
\setlist[compactenumA]{topsep=0pt,itemsep=-1ex,partopsep=1ex,parsep=1ex,%
   label=(\Alph*)}%
\newlist{compactenuma}{enumerate}{5}%
\setlist[compactenuma]{topsep=0pt,itemsep=-1ex,partopsep=1ex,parsep=1ex,%
   label=(\alph*)}%
\newlist{compactenumI}{enumerate}{5}%
\setlist[compactenumI]{topsep=0pt,itemsep=-1ex,partopsep=1ex,parsep=1ex,%
   label=(\Roman*)}%
\newlist{compactenumi}{enumerate}{5}%
\setlist[compactenumi]{topsep=0pt,itemsep=-1ex,partopsep=1ex,parsep=1ex,%
   label=(\roman*)}%
\newlist{compactitem}{itemize}{5}%
\setlist[compactitem]{label=\ensuremath{\bullet}}%
\setlist[compactitem]{topsep=0pt,itemsep=-1ex,partopsep=1ex,parsep=1ex,%
   label=\ensuremath{\bullet}}%
\providecommand{\IntRange}[1]{\mleft\llbracket #1 \mright\rrbracket}
\newcommand{\IRX}[1]{\IntRange{#1}}%
\newcommand{\disk}{\Mh{\ocircle}}
\newcommand{\diskVY}[2]{\disk_{\downarrow}^{#1}\pth{#2}}%
\newcommand{\si}[1]{#1}
\providecommand{\Mh}[1]{#1}%
\newcommand{\eps}{\varepsilon}
\newcommand{\FF}{\Mh{\mathcal{F}}}%
\newcommand{\DT}{\Mh{\mathcal{D}}}%
\newcommand{\DG}{\Mh{\mathcal{D}}}%
\newcommand{\etal}{\textit{et~al.}\xspace}
\newcommand{\Term}[1]{\textsf{#1}}
\newcommand{\QSPD}{\Term{QSPD}\xspace}
\newcommand{\StavThanks}[1]{%
   \thanks{Department of Computer Science; University of Illinois; 201
      N. Goodwin Avenue; Urbana, IL, 61801, USA; %
      \href{mailto:stava2.spam@illinois.edu}%
      {stava2@illinois.edu}; %
      \url{https://publish.illinois.edu/stav-ashur}. %
      #1}}
\newcommand{\pa}{\Mh{p}}%
\newcommand{\pb}{\Mh{q}}%
\newcommand{\pc}{\Mh{u}}%
\newcommand{\pd}{\Mh{v}}%
\newcommand{\px}{\Mh{x}}%
\newcommand{\py}{\Mh{y}}%
\newcommand{\pz}{\Mh{z}}%
\newcommand{\dGY}[2]{\Mh{\mathsf{d}}\pth{#1,#2}}%
\newcommand{\dGZ}[3]{\Mh{\mathsf{d}_{#1}}\pth{#2,#3}}%
\newcommand{\dY}[2]{\left\| #1  #2 \right\|}%
\newcommand{\angleX}[1]{\sphericalangle #1}
\newcommand{\dsY}[2]{\mathsf{d}\pth{#1,#2}}
\newcommand{\body}{\Mh{C}}%
\newcommand{\grid}{\Mh{\mathsf{K}}}%
\newcommand{\EucG}{\Mh{\EuScript{K}}_\PS}
\providecommand{\G}{\Mh{G}}%
\renewcommand{\G}{\Mh{G}}%
\newcommand{\GA}{\Mh{H}}%
\newcommand{\GB}{\Mh{I}}%
\providecommand{\GB}{\Mh{I}}%
\renewcommand{\GB}{\Mh{I}}%
\newcommand{\PS}{\Mh{P}}%
\newcommand{\PSA}{\Mh{Q}}%
\newcommand{\PX}{\Mh{X}}%
\newcommand{\PY}{\Mh{Y}}%
\newcommand{\DotProd}[2]{\permut{{#1},{#2}}}
\newcommand{\permut}[1]{\left\langle {#1} \right\rangle}
\newcommand{\Line}{\Mh{\ell}}%
\newcommand{\PSup}{\Mh{P}_\uparrow}%
\newcommand{\PSdown}{\Mh{P}_\downarrow}%
\newcommand{\QS}{\Mh{\mathcal{Q}}}%
\newcommand{\liftX}[1]{\mathrm{lift}\pth{#1}}%
\newcommand{\rect}{{\Mh{R}}}%
\newcommand{\EG}{\Mh{E}}%
\newcommand{\EGX}[1]{\Mh{E}\pth{#1}}%
\newcommand{\region}{\Mh{\mathcalb{r}}}%
\newcommand{\gminus}{-}%
\newcommand{\interiorX}[1]{\mathrm{int}\pth{#1}}%
\newcommand{\restrictY}[2]{#1 \cap {#2}}
\newcommand{\cpX}[1]{\Mh{\mathrm{c{}p}}\pth{#1}}%
\newcommand{\diamX}[1]{\mathrm{diam}\pth{#1}}%
\newcommand{\spread}{\Mh{\Phi}}
\newcommand{\spreadX}[1]{\spread\pth{#1}}
\newcommand{\WS}{\Mh{\mathcal{W}}}%
\newcommand{\WeightX}[1]{\Mh{\omega} \pth{#1}}
\newcommand{\diameterX}[1]{\mathrm{d{}i{}am}\pth{#1}}
\newcommand{\SSPD}{\Term{SSPD}\xspace}%
\newcommand{\PSB}{\Mh{B}}%
\newcommand{\PSC}{\Mh{C}}%
\newcommand{\PSX}{\Mh{X}}%
\newcommand{\PSY}{\Mh{Y}}%
\newcommand{\WSPD}{\Term{WSPD}\xspace}%
\newcommand{\epsA}{\Mh{\vartheta}}%
\newcommand{\GY}[2]{\Mh{\mathcal{S}}\pth{#1, #2}}%
\newcommand{\cen}{\Mh{c}}%
\newcommand{\Pair}{\Mh{\Xi}}%
\newcommand{\QSup}{\QS_{\uparrow}}
\newcommand{\QSdown}{\QS_{\downarrow}}
\newcommand{\SaveContent}[2]{%
   \expandafter\newcommand{#1}{#2}%
}
\newcommand{\RestatementOf}[2]{
   \noindent%
   \textbf{Restatement of #1.}
   {\em #2{}}%
}
\DeclareFontFamily{U}{BOONDOX-calo}{\skewchar\font=45 }
\DeclareFontShape{U}{BOONDOX-calo}{m}{n}{<-> s*[1.05] BOONDOX-r-calo}{}
\DeclareFontShape{U}{BOONDOX-calo}{b}{n}{<-> s*[1.05] BOONDOX-b-calo}{}
\DeclareMathAlphabet{\mathcalb}{U}{BOONDOX-calo}{m}{n}
\SetMathAlphabet{\mathcalb}{bold}{U}{BOONDOX-calo}{b}{n}
\DeclareMathAlphabet{\mathbcalb}{U}{BOONDOX-calo}{b}{n}
\newcommand{\CHX}[1]{\mathsf{ch}\pth{#1}}%
\newcommand{\rinX}[1]{\Mh{r}_{\mathrm{in}}\pth{#1}}%
\newcommand{\routX}[1]{\Mh{R}_{\mathrm{out}}\pth{#1}}%
\newcommand{\arX}[1]{\Mh{\mathsf{a{}r}}\pth{#1}}%
\newcommand{\Elp}{\Mh{\mathcal{E}}}
\newcommand{\cell}{\Mh{\mathsf{C}}}%
\newcommand{\Of}{\Mh{\mathcal{O}}}%
\newcommand{\Oeps}{\Mh{\mathcal{O}_\eps}}%
\newcommand{\gConst}{\Mh{\tau}}%
\newcommand{\xSlabX}[1]{{\protect\overleftrightarrow{#1}}}
\newcommand{\ySlabX}[1]{\updownarrow\!{#1}}
\newcommand{\widthX}[1]{\Mh{\mathsf{wd}}\pth{#1}} \smallskip%
   \newcommand{\myparagraph}[1]{%
      \noindent%
      \textbf{#1.}
   }%
   \newcommand{\myparagraph}[1]{%
      \paragraph{#1}
   }%
\providecommand{\TPDF}[2]{\texorpdfstring{#1}{#2}}
\newcommand{\sqr}{\mathcalb{s}}%
\newcommand{\sqrA}{\mathcalb{t}}%
\newcommand{\seg}{s}%
\newcommand{\origin}{o}%
\newcommand{\constA}{c_1}
\newcommand{\constB}{c_2}
\newcommand{\constC}{c_3}
\newcommand{\constD}{c_4}
\newcommand{\diamC}{\Mh{\mathcalb{d}}}%
\newcommand{\cone}{\Mh{\mathcalb{c}}}%
\newcommand{\coneB}{c}%
\newcommand{\ConeSet}{\Mh{\EuScript{C}}}%
\newcommand{\dir}{\Mh{\mathsf{n}}}
\newcommand{\nnZ}[3]{\Mh{\mathsf{nn}}_{#1} \pth{#2, #3}}
\newcommand{\Trap}{\Mh{T}}%
\newcommand{\Traps}{\Mh{\mathcal{T}}}%
\newcommand{\CC}{\Mh{\mathcal{C}}}%
\newcommand{\Body}{\CC}
\newcommand{\senseX}[1]{\Mh{\mathrm{sen}} \pth{#1}}%
\newcommand{\Triangles}{\bm\Delta}%
\title{Local Spanners Revisited}
   \author{%
      Stav Ashur%
      \StavThanks{}%
      \and%
      Sariel Har-Peled%
      \SarielThanks{Work on this paper was partially supported by a
         NSF AF award CCF-1907400.  }%
   }%
   \author{Stav Ashur}%
   {Department of Computer Science, University of Illinois, 201
      N. Goodwin Avenue, Urbana, IL 61801, USA}%
   {stava2@illinois.edu}%
   {https://orcid.org/0000-0003-0533-8978}%
   {}%
   \author{Sariel Har-Peled}%
   {Department of Computer Science, University of Illinois, 201
      N. Goodwin Avenue, Urbana, IL 61801, USA}%
   {sariel@illinois.edu}%
   {https://orcid.org/0000-0003-2638-9635}%
   {Work on this paper was partially supported by a NSF AF award
      CCF-1907400.}%
	\authorrunning{S. Ashur and S. Har-Peled} %
	\keywords{Geometric graphs, Fault-tolerant spanners}
\begin{document}

\maketitle

\begin{abstract}
    For a set of points $\PS \subseteq \Re^2$, and a family of regions
    $\FF$, a \emph{local $t$-spanner} of $\PS$, is a sparse graph $\G$
    over $\PS$, such that, for any region $\region \in \FF$, the
    subgraph restricted to $\region$, denoted by
    $\restrictY{\G}{\region} = \G_{\PS \cap \region}$, is a
    $t$-spanner for all the points of $\region \cap \PS$.

    We present algorithms for the construction of local spanners with
    respect to several families of regions, such as homothets of a
    convex region. Unfortunately, the number of edges in the resulting
    graph depends logarithmically on the spread of the input point
    set. We prove that this dependency can not be removed, thus
    settling an open problem raised by Abam and Borouny.  We also show
    improved constructions (with no dependency on the spread) of local
    spanners for fat triangles, and regular $k$-gons. In particular,
    this improves over the known construction for axis parallel
    squares.

    We also study a somewhat weaker notion of local spanner where one
    allows to shrink the region a ``bit''. Any spanner is a weak local
    spanner if the shrinking is proportional to the
    diameter. Surprisingly, we show a near linear size construction of
    a weak spanner for axis-parallel rectangles, where the shrinkage
    is \emph{multiplicative}.
\end{abstract}

\section{Introduction}

For a set $\PS$ of points in $\Re^d$, the \emphw{Euclidean graph}
$\EucG = \bigl(\PS, \binom{\PS}{2}\bigr)$ of $\PS$ is an undirected
graph.  Here, an edge $\pa \pb \in \EG$ is associated with the segment
$\pa\pb$, and its weight is the (Euclidean) length of the segment.
Let $\G=(\PS,\EG)$ and $\GB=(\PS,\EG')$ be two graphs over the same
set of vertices (usually $\GB$ is a subgraph of $\G$). Consider two
vertices $\pa,\pb \in \PS$, and parameter $t \geq 1$.  A path $\pi$
between $\pa$ and $\pb$ in $\GB$, is a \emphw{$t$-path}, if the length
of $\pi$ in $\GB$ is at most $t\cdot\dGZ{\G}{\pa}{\pb}$, where
$\dGZ{\G}{\pa}{\pb}$ is the length of the shortest path between $\pa$
and $\pb$ in $\G$.  The graph $\GB$ is a \emphw{$t$-spanner} of $\G$
if there is a $t$-path in $\GB$, for any $\pa,\pb\in \PS$.  Thus, for
a set of points $\PS\subseteq \Re^d$, a graph $\G$ over $\PS$ is a
\emphw{$t$-spanner} if it is a $t$-spanner of the euclidean graph
$\EucG$. There is a lot of work on building geometric spanners, see
\cite{ns-gsn-07} and references there in.

\paragraph*{Fault-tolerant spanners}

An \emphw{$\FF$-fault-tolerant spanner} for $\PS\subseteq \Re^d$, is a
graph $\G=(\PS, \EG)$, such that for any region $\region$ (i.e., the
``attack''), the graph $\G \gminus \region$ is a $t$-spanner of
$\EucG - \region$ (See \defref{def:residual:graph} for a formal
definition of this notation).  Here $\G \gminus \region$ denotes the
graph after one deletes from $\G$ all the vertices in
$\PS \cap \region$, and all the edges in $\G$ that their corresponding
segments intersect $\region$.  Surprisingly, as shown by Abam \etal
\cite{abfg-rftgs-09}, such fault-tolerant spanners can be constructed
where the attack region is any convex set. Furthermore, these spanners
have a near linear number of edges.

Fault-tolerant spanners were first studied with vertex and edge
faults, meaning that some arbitrary set of maximum size $k$ of
vertices and edges has failed. Levcopoulos \etal \cite{lns-iacfts-02}
showed the existence of $k$-vertex/edges fault tolerant spanners for a
set of points $\PS$ in some metric space. Their spanner had
$\Of(kn\log n)$ edges, and weight, i.e. sum of edge weights, bounded
by $f(k)\cdot wt(MST(\PS))$ for some function $f$. Lukovszki
\cite{l-nrftgs-99} later achieved a similar construction, improving
the number of edges to $\Of(kn)$, and was able to prove that the
result is asymptotically tight.

\paragraph*{Local spanners}

Recently, Abam and Borouny \cite{ab-lgs-21} introduced the notion of
local spanners.  For a family of regions $\FF$, a graph
$\G = (\PS,\EG)$ is a \emphw{local $t$-spanner} for $\FF$, if for any
$\region \in \FF$, the subgraph of $\G$ induced on $\PS \cap \region$
is a $t$-spanner.
Specifically, this induced subgraph $ \restrictY{\G}{\region}$
contains a $t$-path between any $\pa,\pb \in \PS \cap \region$ (note
that we keep an edge in the subgraph only if both its endpoints are in
$\region$, see \defref{def:residual:graph}).

Abam and Borouny \cite{ab-lgs-21} showed how to construct such
spanners for axis-parallel squares and vertical slabs. In this work,
we are further extend their results.  They also showed how to
construct such spanners for disks if one is allowed to add Steiner
points. Abam and Borouny left the question of how to construct local
spanners for disks as an open problem.

To appreciate the difficulty in constructing local spanners, observe
that unlike regular spanners, the construction has to take into
account many different scenarios as far as which points are available
to be used in the spanner. As a concrete example, a local spanner for
axis-parallel rectangle requires quadratic number of edges, see
\figref{bad:rectangles}.

\begin{figure}[h]
    \centerline{\includegraphics{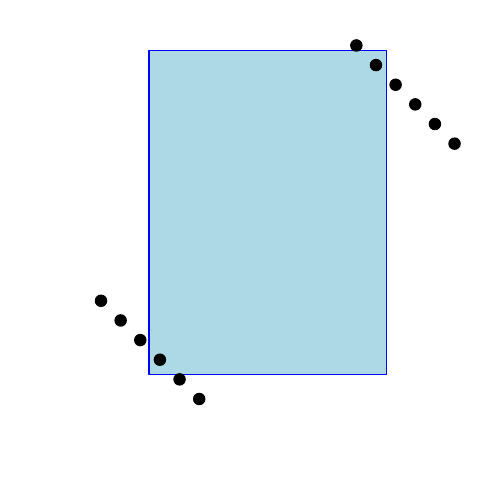}}
    \caption{For any point in the top diagonal and bottom diagonal,
       there is a fat axis parallel rectangle that contains only these
       two points. Thus, a local spanner requires quadratic size in
       this case.  }
    \figlab{bad:rectangles}
\end{figure}

Namely, regular spanners can rely on using midpoints in their path
under the assurance that they are always there. For local spanners
this is significantly harder as natural midpoints might
``disappear''. Intuitively, a local spanner construction needs to use
midpoints that are guaranteed to be present judging only from the
source and destination points of the path.

\paragraph*{A good jump is hard to find}

Most constructions for spanners can be viewed as searching for a way
to build a path from the source to the destination by finding a
``good'' jump, either by finding a way to move locally from the source
to a nearby point in the right direction, as done in the
$\theta$-graph construction, or alternatively, by finding an edge in
the spanner from the neighborhood of the source to the neighborhood of
the destination, as done in the spanner constructions using
well-separated pairs decomposition (\WSPD). Usually, one argues
inductively that the spanner must have (sufficiently short) paths from
the source to the start of the jump, and from the end of the jump to
the destination, and then, combining these implies that the resulting
new path is short.  These ideas guide our constructions as
well. However, the availability of specific edges depends on the query
region, making the search for a good jump significantly more
challenging. The constructions have to guarantee that there are many
edges available, and that at least one of them is useful as a jump
regardless of the chosen region.

\begin{table}[t]
    \centering

    \begin{tabular}{|l|c|c||c|c|}
      \hline
      Region
      &
        \# edges
      &
        Paper
      &
        New \# edges
      &
        Location in paper
      \\
      \hline
      \multicolumn{5}{c}{ Local $(1+\eps)$-spanners$\Bigr.$}
      \\
      \hline
      Halfplanes
      &
        $\Of( \eps^{-2} n \log n )\bigr.$
      &
        \cite{abfg-rftgs-09}
      &
      &
      \\
      \hline
      Axis-parallel squares
      &
        $\Oeps (n \log^6 n) \Bigr.$
      &
        \cite{ab-lgs-21}
      &
        $\Of\pth{\eps^{-3} n \log n}$
      &
        \remref{improved}%
      \\
      \hline
      Vertical slabs
      &
        $\Of (\eps^{-2} n \log n) \Bigr.$
      &
        \cite{ab-lgs-21}
      &
      &
      \\
      \hline
      Disks+Steiner points
      &
        $\Oeps (n ) \Bigr.$
      &
        \cite{ab-lgs-21}
      &
      &
      \\
      \hline
      Disks
      &
      &
      &
        $\Of\pth{ \eps^{-2} n\log \spread  }\Bigr.$
      &
        \thmref{main:1}%
      \\

      \cline{4-5}

      &
      &
      &
        $\Omega(n \log \spread)\Bigr.$
      &
        \lemref{l:s:lower:bound}%
      \\
      \hline
      Homothets convex shape
      &
      &
      &
        $\Of\pth{ \eps^{-2} n\log \spread  }\Bigr.$
      &
        \thmref{main:1}%
      \\
      \hline
      Homothets $\alpha$-fat triangles
      &
      &
      &
        $\Of\pth{ (\alpha\eps)^{-1} n}\Bigr.$
      &
        \thmref{l:s:triangle}%
      \\
      \hline
      Homothets triangles
      &
      &
      &       
        $\Omega\pth{ n \log \spread }\Bigr.$
      &
        \lemref{l:b:triangles}%
      \\
      \hline
      \multicolumn{5}{c}{$\delta$-weak local $(1+\eps)$-spanners$\Bigr.$}
      \\
      \hline
      Bounded convex shape
      &
      &
      &
        $\Of\bigl( (\eps^{-1} + \delta^{-2}) n \bigr)\Bigr. $
      &
        \lemref{w:l:s:regions}%
      \\
      \hline
      \multicolumn{5}{c}{$(1-\delta)$-local $(1+\eps)$-spanners$\Bigr.$}
      \\
      \hline%
      Axis-parallel rectangles
      &
      &
      &
        $\Of\bigl((\eps^{-2} + \delta^{-2}) n \log^2 n \bigr)\Bigr.$
      &
        \thmref{a:l:s:rectangles}%
      \\
      \hline

    \end{tabular}
    \smallskip%
    \caption{Known and new results. The notation $\Oeps$ hides
       polynomial dependency on $\eps$ which is not specified in the
       original work.}
    \tablab{results_summary}
\end{table}

\subsection*{Our results}

Our results are summarized in \tabref{results_summary}.

\paragraph*{Almost local spanners}

We start by showing that regular geometric spanners are local spanners
if one is required provide the spanner guarantee only to shrunken
regions. Namely, if $\G$ is a $(1+\eps)$-spanner of $\PS$, then for
any convex region $\Body$, the graph $ \restrictY{\G}{\Body}$ is a
spanner for $\Body' \cap \PS$, where $\Body'$ is the set of all points
in $\Body$ that are in distance at least $\eps\cdot \diamX{\Body}$
from its boundary.

\paragraph*{Homothets}
A \emphw{homothet} of a convex region $\Body$, is a translated and
scaled copy of $\Body$.  In \secref{disks} we present a construction
of spanners, which surprisingly, is not only fault-tolerant for all
convex regions, but is also a local spanner for homothets of a
prespecified convex region.  This in particular works for disks, and
resolves the aforementioned open problem of Abam and Borouny
\cite{ab-lgs-21}. Our construction is somewhat similar to the original
construction of Abam \etal \cite{abfg-rftgs-09}. For a parameter
$\eps>0$ the construction of a $(1+\eps)$-local spanner for homothets
takes $\Of\pth{ \eps^{-2} n\log \spread \log n }$ time, and the
resulted spanner is of size $\Of\pth{ \eps^{-2} n\log \spread }$,
where $\spread$ is the spread of the input point set $\PS$, and
$n=|\PS|$. We also provide a lower bound showing that this logarithmic
dependency on $\spread$ cannot be avoided.

The dependency on the spread $\spread$ in the above construction is
somewhat disappointing. However.  the lower bound constructions,
provided in \secref{lower:bound}, show that this is unavoidable for
disks or homothets of triangles.

Thus, the natural question is what are the cases where one can avoid
the ``curse of the spread'' -- that is, cases where one can construct
local spanners of near-linear size independent of the spread of the
input point set.

\paragraph*{The basic building block: $\Body$-Delaunay triangulation}

A key ingredient in the above construction is the concept of Delaunay
triangulations induced by homothets of a convex body. Intuitively, one
replaces the unit disk (of the standard $L_2$-norm) by the provided
convex region. It is well known \cite{cd-vdbcdf-85} that such diagrams
exist, have linear complexity in the plane, and can be computed
quickly.  In \secref{del:homothets} we review these results, and
restate the well-known property that the $\Body$-Delaunay
triangulation is connected when restricted to a homothet of $\Body$.
By computing these triangulations for carefully chosen subsets of the
input point set, we get the results stated above.

Specifically, we use well-separated and semi-separated decompositions
to compute these subsets.

\paragraph*{Fat triangles}
In \secref{triangles} we give a construction of local spanners for the
family $\FF$ of homothets of a given triangle $\triangle$, and get a
spanner of size $\Of\pth{ (\alpha\eps)^{-1}n }$ in
$\Of\pth{(\alpha\eps)^{-1}n\log n}$ time, where $\alpha$ is the
smallest angle in $\triangle$. This construction is a careful
adaptation of the $\theta$-graph spanner construction to the given
triangle, and it is significantly more technically challenging than
the original construction.

\paragraph*{$k$-regular polygons}

It seems natural that if one can handle fat triangles, then homothets
of $k$-regular polygons should readily follow by a simple
decomposition of the polygon into fat triangles. Maybe surprisingly,
this is not the case -- a critical configuration might involve two
points that are on the interior of two non-adjacent edges of a
homothet of the input polygon. We overcome this by first showing that
sufficiently narrow trapezoids, provide us with a good jump somewhere
inside the trapezoid, assuming one computes the Delaunay triangulation
induced by the trapezoid, and that the source and destination lie on
the two legs of the trapezoid. Next, we show that such a polygon can
be covered by a small number of narrow trapezoids and fat
triangles. By building appropriate graphs for each trapezoid/triangle
in the collection, we get a spanner for homothets of the given
$k$-regular polygon, with size that has no dependency on the
spread. Of course, the size does depend on $k$.  See \secref{k:gons}
for details, and \thmref{k:gon} for the precise result.

\paragraph*{Quadrant separated pair decomposition (\QSPD)}

In \apndref{qspd}, we describe a novel pair-decompos\-\si{ition}.
Specifically, the \QSPD breaks the input point set $\PS$ into pairs,
such that for any pair $\{\PX, \PY\}$ we have the property that there
is a translated axis system such that $\PX$ and $\PY$ belong to two
antipodal quadrants.  In $d$ dimensions there is such a decomposition
with $\Of( n \log^{d-1} n)$ pairs, and total weight
$\Of( n \log^{d} n)$.  A somewhat similar idea was used by Abam and
Borouny \cite{ab-lgs-21} for the $d=1$ case. We believe this
decomposition might be useful and is of independent interest.

\paragraph*{Multiplicative weak local spanner for rectangles}

In \apndref{w:l:s:rect}, we use \QSPD{}s to construct a weak local
spanner for axis parallel rectangles.  Here, the constructed graph
$\G$ over $\PS$, has the property that for any axis-parallel rectangle
$\rect$, the graph $\restrictY{\G}{\rect}$ is a $(1+\eps)$-spanner for
all the points of $\bigl((1-\eps)\rect \bigr)\cap \PS$, where
$(1-\eps)\rect$ is the scaling of the rectangle by a factor of
$1-\eps$ around its center. Importantly, this works for narrow
rectangles where this form of multiplicative shrinking is still
meaningful (unlike the diameter based shrinking mentioned
above). Contrast this with the lower bound (illustrated in
\figref{bad:rectangles}) of $\Omega(n^2)$ on the size of local spanner
if one does not shrink the rectangles. See \thmref{a:l:s:rectangles}
for details of the precise result.

\bigskip

See \tabref{results_summary} for a summary of known results and
comparisons to the results of this paper.

\section{Preliminaries}

\paragraph*{Residual graphs}
\begin{definition}
    \deflab{def:residual:graph} Let $\FF$ be a family of regions in
    the plane. For a fault region $\region \in \FF$ and a geometric
    graph $\G$ on a point set $\PS$, let $\G \gminus \region$ be the
    residual graph after removing from it all the points of $\PS$ in
    $\region$ and all the edges that their corresponding segments
    intersect $\region$. Similarly, let $\restrictY{\G}{\region}$
    denote the graph restricted to $\region$.  Formally, let
    \begin{equation*}
	\G \gminus \region%
	=%
	\bigl( \PS \setminus \region, \Set{ uv \in \EG }{ uv \cap
           \interiorX{\region} = \emptyset} \bigr)
	\qquad\text{and}\qquad%
	\restrictY{\G}{\region}%
	=%
	\bigl( \PS \cap {\region},
	\Set{uv \in \EG}{ uv \subseteq {\region} } \bigr).
    \end{equation*}
    where $\interiorX{\region}$ denotes the interior of $\region$.
\end{definition}

\subsection{On various pair decompositions}

For sets $\PSX, \PSY$, let
\begin{math}
    \PSX \otimes \PSY%
    =%
    \Set{\brc{\px,\py}}{ \px\in \PSX,\, \py\in \PSY, \px \ne \py }
\end{math}
be the set of all the (unordered) pairs of points formed by the sets
$\PSX$ and $\PSY$.

\begin{defn}[Pair decomposition]
    \deflab{pair:decomposition}%
    For a point set $\PS$, a \emphic{pair
       decomposition}{pair!decomposition} of $\PS$ is a set of pairs
    \begin{equation*}
        \WS = \brc{\bigl. \brc{\PSX_1,\PSY_1},\ldots,\brc{\PSX_s,\PSY_s}},
    \end{equation*}
    such that
    \begin{enumerate*}[label=(\Roman*)]
        \item $\PSX_i,\PSY_i\subseteq \PS$ for every $i$,
        \item $\PSX_i \cap \PSY_i = \emptyset$ for every $i$, and
        \item
        $\bigcup_{i=1}^s \PSX_i \otimes \PSY_i = \PS \otimes \PS$.
    \end{enumerate*}
    Its \emphi{weight} is
    $\WeightX{\WS} = \sum_{i=1}^s \pth{ \cardin{\PSX_i } +
       \cardin{\PSY_i}}$.
\end{defn}

The \emphi{closest pair} distance of a set of points
$\PS \subseteq \Re^d$, is
\begin{math}
    \cpX{\PS} = \underset{\pa, \pb \in \PS, \pa \neq \pb}{\min}
    \dY{\pa}{\pb}.
\end{math}
The \emphi{diameter} of $\PS$ is
\begin{math}
    \diamX{\PS} = \underset{\pa, \pb \in \PS}{\max}\dY{\pa}{\pb}.
\end{math}
The \emphi{spread} of $\PS$ is
$\spreadX{\PS} = \diamX{\PS} / \cpX{\PS}$, which is the ratio between
the diameter and closest pair distance.  While in general the weight
of a \WSPD (defined below) can be quadratic, if the spread is bounded,
the weight is near linear.  For $\PSX, \PSY \subseteq \Re^d$, let
$\dsY{\PSX}{\PSY} = \underset{\pa \in \PSX, \pb \in \PSY}{\min}
\dY{\pa}{\pb}$ be the \emphi{distance} between the two sets.

\begin{defn}
    \deflab{well:separated}%
    \deflab{WSPD}%
    Two sets $\PSX, \PSY \subseteq \Re^d$ are
    \begin{align*}
      \text{\emphi{$1/\eps$-well-separated}}
      \qquad
      &\text{if}\qquad
        \mathbf{max} \pth{ \diameterX{\PSX}, \diameterX{\PSY} } \leq
        \eps \cdot \dsY{\PSX}{\PSY},
      \\
      \text{and} \qquad \text{\emphi{$1/\eps$-semi-separated}}
      \qquad
      &\text{if}\qquad
        \mathbf{min} \pth{ \diameterX{\PSX}, \diameterX{\PSY} }
        \leq
        \eps \cdot \dsY{\PSX}{\PSY}.
    \end{align*}
    For a point set $\PS$, a \emphOnly{well-separated pair
       decomposition} (\emphOnly{W{S}{P}D{}}) of $\PS$ with parameter
    $1/\eps$ is a pair decomposition of $\PS$ with a set of pairs
    \begin{math}
        \WS = \brc{\bigl.
           \brc{\PSB_1,\PSC_1},\ldots,\brc{\PSB_s,\PSC_s}},
    \end{math}
    such that for all $i$, the sets $\PSB_i$ and $\PSC_i$ are
    $(1/\eps)$-separated. The notion of $1/\eps$-\SSPD (a.k.a
    \emphi{semi-separated pairs decomposition}) is defined
    analogously.
\end{defn}

\begin{lemma}[\cite{ah-ncsa-12}]
    \lemlab{s:s:p:d:spread}%
    Let $\PS$ be a set of $n$ points in $\Re^d$, with spread
    $\spread = \spreadX{\PS}$, and let $\eps > 0$ be a
    parameter. Then, one can compute a $(1/\eps)$-\WSPD $\WS$ for
    $\PS$ of total weight
    $\WeightX{\WS} = \Of(n \eps^{-d} \log \spread)$. Furthermore, any
    point of $\PS$ participates in at most
    $\Of\pth{ \eps^{-d} \log \spread}$ pairs.
\end{lemma}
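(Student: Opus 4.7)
The plan is to run the standard Callahan--Kosaraju \WSPD construction on a quadtree of $\PS$, using bounded spread to limit its depth and hence the per-point participation in the output pairs.

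First, I would build an (uncompressed) quadtree $T$ of $\PS$: enclose $\PS$ in an axis-aligned cube of side length $\Theta(\diamX{\PS})$, and recursively subdivide any cell of side length exceeding $\cpX{\PS}/\sqrt{d}$. For a node $u$ of $T$, let $\cell_u$ be its cell and $\PS_u = \PS \cap \cell_u$. Since $\spread = \diamX{\PS}/\cpX{\PS}$ and each subdivision halves the side, every root-to-leaf path of $T$ has length $\Of(\log \spread)$, so each point of $\PS$ belongs to at most $\Of(\log \spread)$ cells of $T$.

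Second, invoke the usual \WSPD procedure on $T$: starting from the pair $(\text{root}, \text{root})$, recursively process pairs of nodes $(u,v)$. If $\cell_u$ and $\cell_v$ are $(1/\eps)$-well-separated, emit $\brc{\PS_u, \PS_v}$ into $\WS$; otherwise recurse on the children of whichever node has the larger cell diameter. This yields a valid $(1/\eps)$-\WSPD because the procedure only recurses when separation fails, so every pair of points is eventually covered by an emitted well-separated pair.

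Third, for the weight analysis, apply the classical packing argument in $\Re^d$: for a fixed node $u$, any node $v$ with $\diamX{\cell_v} \le \diamX{\cell_u}$ that is $(1/\eps)$-well-separated from $u$ while the parent of $v$ is not, must lie inside a ball of radius $\Of(\diamX{\cell_u}/\eps)$ around $\cell_u$ and have side length $\Theta(\diamX{\cell_u})$ up to a constant factor. By a volume packing argument, there are only $\Of(\eps^{-d})$ such $v$, so each node of $T$ participates in $\Of(\eps^{-d})$ emitted pairs. Combining with the depth bound, each point $p\in \PS$ lies in at most $\Of(\eps^{-d}\log \spread)$ pairs, and summing over the $n$ points gives $\WeightX{\WS} = \Of(n \eps^{-d} \log \spread)$.

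The main obstacle is that the packing argument alone only bounds the \emph{number of pairs} by $\Of(n \eps^{-d})$, while the \emph{weight} can be $\Omega(n^2)$ for arbitrary point sets. The bounded-spread hypothesis is precisely what caps each point's participation via the $\Of(\log \spread)$ depth bound, converting the per-cell pair count into the claimed near-linear total weight. A secondary care point is handling empty chains in $T$, which can be skipped (or the quadtree compressed) without affecting either the depth-per-point bound or the packing argument.
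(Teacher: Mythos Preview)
The paper does not give a proof of this lemma at all: it is stated as a cited result from \cite{ah-ncsa-12}, with no accompanying argument. Your proposal is essentially the standard proof of that cited result (quadtree of depth $\Of(\log\spread)$, Callahan--Kosaraju recursion, packing argument showing each quadtree node participates in $\Of(\eps^{-d})$ output pairs), and it is correct. One small point worth tightening: your packing step asserts that the partner cell $\cell_v$ has side length $\Theta(\diamX{\cell_u})$, which is true but deserves one sentence of justification---because the recursion always splits the larger of the two cells, the two cells in any processed pair are at adjacent quadtree levels, hence within a factor of~$2$ in side length. With that remark the argument is complete.
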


\begin{theorem}[\cite{ah-ncsa-12,h-gaa-11}]
    \thmlab{S:S:P:D:main}%
    Let $\PS$ be a set of $n$ points in $\Re^d$, and let $\eps > 0$ be
    a parameter. Then, one can compute a $(1/\eps)$-\SSPD for $\PS$ of
    total weight $\Of\pth{n \eps^{-d} \log n}$. The number of pairs in
    the \SSPD is $\Of\pth{n \eps^{-d}}$, and the computation time is
    $\Of\pth{ n \eps^{-d} \log n }$.
\end{theorem}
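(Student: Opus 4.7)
The plan is to mirror the classical Callahan--Kosaraju decomposition, but with the termination condition weakened from well-separation to semi-separation, and to charge the resulting pairs more carefully so that the weight becomes $\Of\pth{n \eps^{-d} \log n}$ rather than the potentially quadratic weight of a standard \WSPD.

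First, I would build a compressed quadtree $T$ on $\PS$ in $\Of(n \log n)$ time; after compression, $T$ has $\Of(n)$ nodes regardless of the spread of $\PS$. For each node $v \in T$, let $B_v$ be its cell, $\PS_v = \PS \cap B_v$, and $\Delta_v = \diameterX{B_v}$. Next, I would run a Callahan--Kosaraju style recursion on pairs of nodes. Starting from $(\mathrm{root}, \mathrm{root})$, for a current pair $(v, u)$: if $B_v$ and $B_u$ are $(1/\eps)$-semi-separated (that is, $\min\pth{\Delta_v, \Delta_u} \le \eps \cdot \dsY{B_v}{B_u}$), emit the pair $\pth{\PS_v, \PS_u}$; otherwise, recurse by replacing with its children the node whose cell has the larger diameter. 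With standard bookkeeping this takes $\Of(1)$ per recursive call, so the running time is proportional to the number of pairs generated plus the $\Of(n\log n)$ cost of building $T$.

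Second, I would bound the \emph{number} of emitted pairs by $\Of(n \eps^{-d})$ via a packing argument: for any node $v$, the set of nodes $u$ such that $(v, u)$ is emitted with $v$ on the ``small'' side (i.e., $\Delta_u \ge \Delta_v$) consists of pairwise-disjoint cells that lie in an annulus of radius $\Theta(\Delta_v/\eps)$ around $B_v$ and have diameter comparable to $\Delta_v$; a standard packing argument in $\Re^d$ yields only $\Of(\eps^{-d})$ such cells. Summing over the $\Of(n)$ nodes gives $\Of(n\eps^{-d})$ pairs, which also bounds the total running time.

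The main obstacle is the \emph{weight} bound $\Of(n \eps^{-d} \log n)$, which is the crucial improvement over \WSPD, whose weight can be $\Theta(n^2)$. The scheme is to charge contributions point-by-point. For a fixed $p \in \PS$, one counts pairs in which $p$ participates by stratifying according to the side on which $p$ sits and the ``scale'' of that side. On the small side, the cell containing $p$ at a given scale is paired with only $\Of(\eps^{-d})$ other cells by the packing argument above; on the large side, the containing cell at a given scale is similarly paired with only $\Of(\eps^{-d})$ small cells in the relevant distance range. The delicate point, and the main departure from the weaker $\Of(n \eps^{-d} \log \spread)$ bound of \lemref{s:s:p:d:spread}, is replacing $\log \spread$ by $\log n$: this is accomplished by exploiting the compressed edges of $T$, showing that long chains of single-child ancestors above $p$ contribute no new pairs (the endpoints of such a chain already form semi-separated pairs with the outside world), so that $p$ has only $\Of(\log n)$ ``branching'' ancestors contributing to its per-point tally of $\Of(\eps^{-d} \log n)$. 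Summing over all $n$ points yields the claimed weight.
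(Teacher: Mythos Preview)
The paper does not prove this theorem at all: it is quoted verbatim from \cite{ah-ncsa-12,h-gaa-11} and used as a black box throughout (e.g., in \corref{S:S:P:D:angular} and \thmref{k:gon}). There is consequently no ``paper's own proof'' to compare your attempt against.

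For what it is worth, your sketch is broadly in the spirit of the constructions in the cited references: compressed quadtree plus a Callahan--Kosaraju recursion with the weaker semi-separated stopping rule, a packing bound for the number of pairs, and a per-point charging argument for the weight that exploits compression to trade $\log\spread$ for $\log n$. The part you flag as ``delicate'' is indeed where the real work lies; your description of it (only $\Of(\log n)$ branching ancestors, each contributing $\Of(\eps^{-d})$ pairs) is the right intuition, but as written it is a plan rather than a proof --- in particular, the claim that on the \emph{large} side a cell is paired with only $\Of(\eps^{-d})$ small cells ``in the relevant distance range'' needs an explicit bound on that range coming from the fact that the parent pair was \emph{not} semi-separated, and the interaction of this with compressed edges has to be argued carefully. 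If you want to turn this into an actual proof, the cited sources carry out exactly this accounting.
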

 
\begin{lemma}
    \lemlab{chop:easy}%
    Given an $\alpha$-\SSPD $\WS$ of a set $\PS$ of $n$ points in
    $\Re^d$ and a parameter $\beta \geq 2$, one can refine $\WS$ into
    an $\alpha\beta$-\SSPD $\WS'$, such that
    $|\WS'| = \Of(|\WS|/\beta^d)$ and
    $\WeightX{\WS'} = \Of(\WeightX{\WS}/\beta^d)$.
\end{lemma}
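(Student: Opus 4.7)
The plan is to produce $\WS'$ by \emph{coarsening} $\WS$: merging pairs that share the same scale and whose small sides lie in the same coarse grid cells. The slack from relaxing the semi-separation parameter from $\alpha$ to $\alpha\beta$ is exactly what absorbs the diameter growth caused by these unions. For every pair $\brc{\PSX_i,\PSY_i}\in \WS$, WLOG assume $\diamX{\PSX_i} \leq \diamX{\PSY_i}$, so that $\diamX{\PSX_i} \leq \alpha\cdot \dsY{\PSX_i}{\PSY_i}$, and classify the pair at scale level $j$ such that $\dsY{\PSX_i}{\PSY_i} \in [2^{j},2^{j+1})$.

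\textbf{Merging within a scale.} At scale $j$, impose an axis-aligned grid $\grid_j$ of cell side $\ell_j = c\alpha\beta\cdot 2^{j}$ for a small constant $c$ to be fixed at the end. Since $\diamX{\PSX_i}\leq 2\alpha\cdot 2^{j}$, the small side $\PSX_i$ is contained in $\Of(1)$ cells of $\grid_j$ (as soon as $c\beta \geq 2$, which holds because $\beta\geq 2$). Assign to $\brc{\PSX_i,\PSY_i}$ a \emph{fingerprint} $(\cell_X,\cell_Y)$, where $\cell_X$ is the minimal axis-aligned union of $\grid_j$-cells meeting $\PSX_i$ and $\cell_Y$ is the $\grid_j$-cell containing the point of $\PSY_i$ nearest to $\PSX_i$. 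Merge all pairs at scale $j$ sharing a fingerprint into a single new pair by taking unions of the $\PSX_i$s and of the $\PSY_i$s, and let $\WS'$ be the collection of all such merged pairs across scales. A merged pair has small side diameter $\Of(\alpha\beta\cdot 2^{j})$ and distance between its two sides $\Omega(2^{j})$, so tuning $c$ converts the $\Of(\cdot)$ into an exact $\alpha\beta$-semi-separation inequality.

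\textbf{Counting and the main obstacle.} For each fixed $X$-fingerprint $\cell_X$ at scale $j$, the possible $Y$-fingerprints sit in an annulus of radii $\Theta(2^{j})$ around $\cell_X$ and therefore number $\Theta((\alpha\beta)^{-d})$; by contrast, at the finer resolution $\Theta(\alpha\cdot 2^{j})$ implicit in $\WS$ the same annulus contains $\Theta(\alpha^{-d})$ possibilities, a $\beta^{d}$-fold gap. Summing over scales and $X$-fingerprints yields $|\WS'|=\Of(|\WS|/\beta^{d})$. The weight bound follows by applying the same estimate point-by-point: for each $\pa\in\PS$ and scale $j$, the number of merged pairs containing $\pa$ on the $X$-side equals the number of distinct coarse $Y$-fingerprints among the fine pairs of $\WS$ containing $\pa$, which is a $\beta^{d}$-factor smaller than the fine count. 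The main hurdle is to verify that this per-point, per-scale compression composes cleanly across scales without hidden blow-up; this is handled by the amortized charging standard in quadtree/grid-based \SSPD constructions, together with a careful treatment of degenerate merged pairs (those with $\cell_X=\cell_Y$ or adjacent), which can be discarded since their coverage can always be re-routed through neighboring nondegenerate fingerprints.
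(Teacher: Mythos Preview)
Your proposal goes in the opposite direction from the paper. The paper's proof is three lines: for each pair $\{\PSX,\PSY\}$ with $\diamX{\PSX}\le\diamX{\PSY}$, take the smallest axis-parallel cube containing $\PSX$, partition it into a uniform grid of $\lceil\sqrt d\,\beta\rceil^d$ subcubes, and replace the pair by the (at most $O(\beta^d)$) pairs $\{\PSX\cap\sqrA,\PSY\}$ for nonempty cells $\sqrA$. This \emph{chops} the small side and \emph{increases} both the number of pairs and the weight by a factor $O(\beta^d)$; the printed bounds $O(|\WS|/\beta^d)$ and $O(\WeightX{\WS}/\beta^d)$ are almost certainly typos for $O(\beta^d|\WS|)$ and $O(\beta^d\,\WeightX{\WS})$. (In the lemma's sole use, $\beta=10$ is a constant, so the distinction never surfaces downstream.) Note also the paper's convention: ``$\alpha$-semi-separated'' means $\min(\diamX{\PSX},\diamX{\PSY})\le(1/\alpha)\,\dsY{\PSX}{\PSY}$, so going from an $\alpha$-\SSPD to an $\alpha\beta$-\SSPD with $\beta\ge2$ is a \emph{strengthening} of the separation --- which is why chopping, not merging, is what is needed.

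You have read the $/\beta^d$ literally and the separation convention backwards (you write $\diamX{\PSX_i}\le\alpha\cdot\dsY{\PSX_i}{\PSY_i}$), and then tried to \emph{merge} pairs to shrink the decomposition. Under the paper's convention this target is simply false: every $\alpha\beta$-\SSPD is already an $\alpha$-\SSPD, so no refinement can have smaller weight or size than the optimum $\alpha$-\SSPD. Even under your reversed convention the argument has real gaps. Your count $|\WS'|=O(|\WS|/\beta^d)$ compares the number of coarse $Y$-fingerprints to a supposed $\Theta(\alpha^{-d})$ count ``implicit in $\WS$'', but $\WS$ is an \emph{arbitrary} input \SSPD with no grid structure, so no such count is implicit; an input that already has near-minimal weight admits no $\beta^d$-fold compression. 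The weight bound is deferred wholesale to unspecified ``standard charging'', and discarding the ``degenerate'' merged pairs may destroy the covering property, since a point-pair covered in $\WS$ only by a pair with adjacent fingerprints has nowhere to go in $\WS'$.
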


\begin{proof}
    The algorithm scans the pairs of $\WS$. For each pair
    $\Pair = \{ \PSX, \PSY \} \in \WS$, assume that
    $\diameterX{\PSX} < \diameterX{\PSY}$. Let $\sqr$ be the smallest
    axis-parallel cube containing $\PSX$, and denote its sidelength by
    $r$.  Let $r' = r / \ceil{\sqrt{d} \beta }$.  Partition $\sqr$
    into a grid of cubes of sidelength $r'$, and let $T_\Pair$ be the
    resulting set of squares. The algorithm now add the set pairs
    \begin{equation*}
        \Set{ \{ \PSX \cap \sqrA, \PSY \} }{ \sqrA \in T_\Pair }
    \end{equation*}
    to the output \SSPD. Clearly, the resulting set is now
    $\alpha\beta$-semi separated, as we chopped the smaller part of
    each pair into $\beta$ smaller portions.
\end{proof}

\begin{defn}%
    \deflab{angular:separated}%
    An \emphi{$\eps$-double-wedge} is a region between two lines,
    where the angle between the two lines is at most $\eps$.

    Two point sets $\PX$ and $\PY$ that each lie in their own cone of
    a shared $\eps$-double-wedge are \emphi{$\eps$-angularly
       separated}.
\end{defn}

\SaveContent{\LemmaRefineDWBody}{%
   Given a $(1/\eps)$-\SSPD $\WS$ of $n$ points in the plane, one can
   refine $\WS$ into a $(1/\eps)$-\SSPD $\WS'$, such that each pair
   $\Pair = \{ \PSX, \PSY \} \in \WS'$ is contained in a
   $\eps$-double-wedge $\DW_\Pair$, such that $\PSX$ and $\PSY$ are
   contained in the two different faces of the double wedge
   $\DW_\Pair$. We have that $|\WS'| = \Of(|\WS|/\eps)$ and
   $\WeightX{\WS'} = \Of(\WeightX{\WS}/\eps)$. The construction time
   is proportional to the weight of $\WS'$.%
}

\begin{lemma}[Proof in \apndref{refine:d:w}]
    \lemlab{refine:d:w}%
    \LemmaRefineDWBody{}%
\end{lemma}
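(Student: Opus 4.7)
The plan is to refine each SSPD pair by splitting it angularly about a reference point in its smaller-diameter side. Fix a pair $\{\PSX, \PSY\} \in \WS$ and, by symmetry, assume $\diamX{\PSX} \leq \diamX{\PSY}$, so that the semi-separation gives $\diamX{\PSX} \leq \eps \cdot \dsY{\PSX}{\PSY}$. To obtain a constant factor of slack in the geometric analysis, I would first apply \lemref{chop:easy} with a suitable constant $\beta$ so that the pairs satisfy the tighter inequality $\diamX{\PSX} \leq (\eps/c) \cdot \dsY{\PSX}{\PSY}$ for a sufficiently large constant $c$; this changes $|\WS|$ and $\WeightX{\WS}$ by only a constant factor.

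For each resulting pair, pick an arbitrary $\pa \in \PSX$ and partition the $2\pi$ range of directions around $\pa$ into $k = \Theta(1/\eps)$ cones $\cone_1, \ldots, \cone_k$ of equal angular width $\Theta(\eps)$, with axes $\dir_1, \ldots, \dir_k$. For every $j$ with $\PSY_j := \PSY \cap \cone_j$ non-empty, add $\{\PSX, \PSY_j\}$ to $\WS'$. Coverage of $\PS \otimes \PS$ is preserved since every $\py \in \PSY$ lies in exactly one cone, and the pair $\{\PSX, \PSY_j\}$ is still $(1/\eps)$-semi-separated because $\dsY{\PSX}{\PSY_j} \geq \dsY{\PSX}{\PSY}$ while $\diamX{\PSX}$ is unchanged.

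The core geometric step, and the main technical obstacle, is exhibiting an $\eps$-double-wedge $\DW_\Pair$ whose two antipodal thin wedges contain $\PSX$ and $\PSY_j$, respectively. The plan is to place the vertex $\pc$ on the ray from $\pa$ in the cone's axis direction $\dir_j$ at some distance $\tau$, chosen to balance two competing constraints: $\tau \geq \Omega(\diamX{\PSX}/\eps)$, so that $\PSX$ (contained in the disk of radius $\diamX{\PSX}$ around $\pa$) subtends angular width at most $\eps$ from $\pc$ in direction $-\dir_j$; and $\tau \leq O(\dsY{\PSX}{\PSY})$, so that $\PSY_j$ (lying in a $\Theta(\eps)$-cone from $\pa$ in direction $+\dir_j$, at distance at least $\dsY{\PSX}{\PSY}$ from $\pa$) subtends angular width at most $\eps$ from $\pc$ in direction $+\dir_j$. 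The tightened separation $\diamX{\PSX} \leq (\eps/c) \cdot \dsY{\PSX}{\PSY}$ is precisely what makes the admissible range of $\tau$ non-empty, so such a $\pc$ exists and yields the required double-wedge.

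Finally, for the bounds, each original pair spawns at most $k = O(1/\eps)$ new pairs, so $|\WS'| = O(|\WS|/\eps)$. For the weight, the $\PSX$ side of each original pair appears in up to $k$ new pairs contributing $O(|\PSX|/\eps)$, while the $\PSY$ side is partitioned among the cones contributing $|\PSY|$ total; summing over all original pairs yields $\WeightX{\WS'} = O(\WeightX{\WS}/\eps)$. The construction simply computes, for each $\py \in \PSY$, the angular cone containing $\py$ (constant time per point) and emits the corresponding pairs, taking $O(\WeightX{\WS'})$ time overall.
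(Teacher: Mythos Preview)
Your proposal is correct and follows essentially the same approach as the paper: first tighten the semi-separation via \lemref{chop:easy}, then partition the larger side $\PSY$ of each pair into $O(1/\eps)$ angular cones centered at (a point of) the smaller side $\PSX$, and verify that each resulting sub-pair fits in an $\eps$-double-wedge. The only cosmetic difference is that the paper constructs the double-wedge as the cross tangents between $\CHX{\PSX}$ and $\CHX{\PSY \cap \cone}$ and bounds its angle directly, whereas you place an explicit apex $\pc$ on the cone axis at a distance $\tau$ chosen to satisfy the two angular constraints; both yield the same conclusion.
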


\begin{corollary}
    \corlab{S:S:P:D:angular}%
    Let $\PS$ be a set of $n$ points in the plane, and let $\eps > 0$
    be a parameter. Then, one can compute a $(1/\eps)$-\SSPD for $\PS$
    such that every pair is $\eps$-angularly separated.  The total
    weight of the \SSPD is $\Of\pth{n \eps^{-3} \log n}$, the number
    of pairs in the \SSPD is $\Of\pth{n \eps^{-3}}$, and the
    computation time is $\Of\pth{ n \eps^{-3} \log n }$.
\end{corollary}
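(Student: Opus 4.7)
The plan is to combine \thmref{S:S:P:D:main} with \lemref{refine:d:w} in the obvious way. First, I would invoke \thmref{S:S:P:D:main} with $d = 2$ to produce a $(1/\eps)$-\SSPD $\WS$ for $\PS$ with $|\WS| = \Of(n \eps^{-2})$ pairs, total weight $\Of(n \eps^{-2} \log n)$, computed in $\Of(n \eps^{-2} \log n)$ time.

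Next, I would feed $\WS$ into \lemref{refine:d:w}, which refines it into a $(1/\eps)$-\SSPD $\WS'$ where every pair $\Pair = \{\PSX, \PSY\} \in \WS'$ is contained in an $\eps$-double-wedge $\DW_\Pair$ with $\PSX$ and $\PSY$ sitting in the two different cones of $\DW_\Pair$. By the very definition of \emph{$\eps$-angularly separated} in \defref{angular:separated}, this means every pair in $\WS'$ is $\eps$-angularly separated, which is the property we want. The lemma adds a factor of $1/\eps$ to both the number of pairs and the weight, giving $|\WS'| = \Of(n \eps^{-3})$ and $\WeightX{\WS'} = \Of(n \eps^{-3} \log n)$, and the construction time is proportional to $\WeightX{\WS'}$, i.e., $\Of(n \eps^{-3} \log n)$.

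Adding the running times of the two phases yields the claimed $\Of(n \eps^{-3} \log n)$ bound, completing the proof. There is no real obstacle here since both building blocks do all the work; the only thing to check is that the conclusion of \lemref{refine:d:w} literally matches \defref{angular:separated}, which it does.
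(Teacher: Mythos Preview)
Your proposal is correct and is exactly the intended argument: the corollary is stated immediately after \lemref{refine:d:w} precisely because it follows by applying that lemma to the \SSPD produced by \thmref{S:S:P:D:main}, and the arithmetic on pairs, weight, and time matches what you wrote.
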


\subsection{Weak local spanners for fat convex regions}
\seclab{convex regions}

\begin{defn}
    Given a convex region $\body$, let
    \begin{equation*}
        \shrinkDY{\body}{\delta}%
        =%
        \Set{ \pa \in \body }{ \dsY{\pa}{ \Re^2 \setminus \body} \geq \delta \cdot
           \diameterX{\body}}.
    \end{equation*}
    Formally, $\shrinkDY{\body}{\delta}$ is the Minkowski difference
    of $\body$ with a disk of radius $\delta \cdot \diameterX{\body}$.
\end{defn}

\begin{defn}
    Consider a (bounded) set $\body$ in the plane. Let $\rinX{\body}$
    be the radius of the largest disk contained inside $\body$.
    Similarly, $\routX{\body}$ is the smallest radius of a disk
    containing $\body$.

    The \emphi{aspect ratio} of a region $\body$ in the plane is
    $\arX{\body} = \routX{\body}/\rinX{\body}$. Given a family $\FF$
    or regions in the plane, its \emphw{aspect ratio} is
    $\arX{\FF} = \max_{\body \in \FF} \arX{\body}$.
\end{defn}

Note, that if a convex region $\body$ has bounded aspect ratio, then
$\shrinkDY{\body}{\delta}$ is similar to the result of scaling $\body$
by a factor of $1-\Of(\delta)$. On the other hand, if $\body$ is long
and skinny then this region is much smaller. Specifically, if $\body$
has width smaller than $2 \delta \cdot \diameterX{\body}$, then
$\shrinkDY{\body}{\delta}$ is empty.

\begin{lemma}
    \lemlab{w:l:s:regions}%
    Given a set $\PS$ of $n$ points in the plane, and parameters
    $\delta, \eps \in (0,1)$.  One can construct a graph $\G$ over
    $\PS$, in $\Of( (\eps^{-1} + \delta^{-2}) n \log n)$ time, and
    with $\Of\bigl( (\eps^{-1} + \delta^{-2}) n \bigr) $ edges, such
    that for any (bounded) convex $\body$ in the plane, we have that
    for any two points
    $\pa, \pb \in \PS \cap \shrinkDY{\body}{\delta}$ the graph
    $\restrictY{ \body }{\PS}$ has $(1+\eps)$-path between $\pa$ and
    $\pb$.
\end{lemma}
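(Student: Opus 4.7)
The plan is to apply the ``almost local'' observation noted earlier in the paper: any ordinary $(1+\eps')$-spanner is automatically a local spanner on convex regions that have been shrunk by an amount sufficient to absorb the spanner's slack. Concretely, I would let $\G$ be any standard $(1+\eps')$-spanner on $\PS$ (for instance a $\theta$-graph or a \WSPD-based spanner) with $\eps' = \min(\eps,\, \delta^2)$. Such constructions yield $\Of(n/\eps')$ edges in $\Of((n\log n)/\eps')$ time, and since $1/\eps' \le 1/\eps + 1/\delta^2$, the stated bound of $\Of\bigl((\eps^{-1} + \delta^{-2})n\bigr)$ edges built in $\Of\bigl((\eps^{-1} + \delta^{-2})n\log n\bigr)$ time follows immediately.

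For correctness, fix a bounded convex $\body$ together with points $\pa, \pb \in \PS \cap \shrinkDY{\body}{\delta}$, and let $\pi$ be a $(1+\eps')$-path in $\G$ between them. Its length is at most $(1+\eps')\|\pa-\pb\| \le (1+\eps)\|\pa-\pb\|$, so the only non-trivial task is showing that $\pi$ survives the restriction to $\body$. For any vertex $v$ of $\pi$,
\begin{equation*}
    \|\pa-v\| + \|v-\pb\| \;\le\; \mathrm{length}(\pi) \;\le\; (1+\eps')\|\pa-\pb\|,
\end{equation*}
so $v$ lies in the closed ellipse $E$ with foci $\pa, \pb$ and focal sum $(1+\eps')\|\pa-\pb\|$. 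A routine computation gives the semi-minor axis of $E$ as $\tfrac12\sqrt{(1+\eps')^2 - 1}\,\|\pa-\pb\| \le \sqrt{\eps'}\,\|\pa-\pb\|$, so $E$ is contained in the stadium of radius $\sqrt{\eps'}\,\|\pa-\pb\|$ around the segment $\pa\pb$; by $\eps' \le \delta^2$, this radius is at most $\delta\,\diameterX{\body}$.

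Finally, since $\shrinkDY{\body}{\delta}$ is a Minkowski difference of convex sets, it is convex; as it contains both $\pa$ and $\pb$, the entire segment $\pa\pb$ lies in $\shrinkDY{\body}{\delta}$. By the definition of $\shrinkDY{\body}{\delta}$, the closed disk of radius $\delta\,\diameterX{\body}$ around every point of $\pa\pb$ is contained in $\body$, so the stadium above -- and therefore $E$ -- is contained in $\body$. Consequently every vertex of $\pi$ lies in $\PS \cap \body$, and convexity of $\body$ then forces every edge of $\pi$ (a segment joining two points of $\body$) into $\body$ as well. Hence $\pi$ is a $(1+\eps)$-path in $\restrictY{\G}{\body}$, as required. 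The only non-routine step is the ellipse-to-stadium bound, which is precisely what dictates the choice $\eps' = \Theta(\delta^2)$; everything else is convexity and accounting.
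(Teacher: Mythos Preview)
Your proof is correct and follows essentially the same approach as the paper's own proof: build a standard $(1+\eps')$-spanner with $\eps' = \min(\eps,\delta^2)$, then argue that any $(1+\eps')$-path between two points of $\shrinkDY{\body}{\delta}$ stays inside the thin ellipse with those foci, and that this ellipse is contained in $\body$. If anything, you are slightly more explicit than the paper---you spell out that $\shrinkDY{\body}{\delta}$ is convex so that the segment $\pa\pb$ lies in it, and you use the stadium to contain the ellipse---whereas the paper compresses this into a one-line distance comparison at the co-vertex.
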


\begin{proof}
    Let $\epsA = \min( \eps, \delta^2)$. Construct, in
    $\Of(\epsA^{-1} n \log n)$ time, a standard $(1+\epsA)$-spanner
    $\G$ for $\PS$ using $\Of( \epsA^{-1} n)$ edges
    \cite{ams-dagss-99}.

    So, consider any body $\body \in \FF$, and any two vertices
    $\pa, \pb \in \PS \cap \body'$, where
    $\body'=\shrinkDY{\body}{\delta}$, Let $\ell = \dY{\pa}{\pb}$, let
    $\pi$ be the shortest path between $\pa$ and $\pb$ in $\G$, and
    let $\Elp$ be the locus of all points $\pc$, such that
    $\dY{\pa}{\pc} + \dY{\pc}{\pb} \leq (1+\epsA) \ell$. The region
    $\Elp$ is an ellipse that contains $\pi$. The furthest point from
    the segment $\pa\pb$ in this ellipse is realized by the co-vertex
    of the ellipse. Formally, it is one of the two intersection points
    of the boundary of the ellipse with the line orthogonal to
    $\pa \pb$ that passes through the middle point $\cen$ of this
    segment, see \figref{ellipse}. Let $\pz$ be one of these points.

    \begin{figure}[h]
        \centerline{\includegraphics{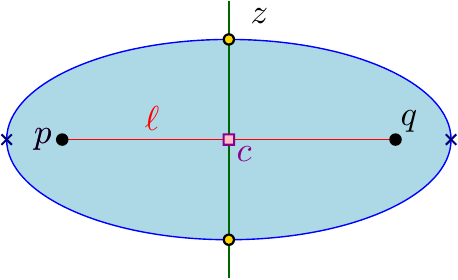}}
        \caption{An illustration of the settings in the proof of
           \lemref{w:l:s:regions} with $\Elp$ shown in blue.}
        \figlab{ellipse}
    \end{figure}

    We have that
    \begin{math}
        \dY{\pa}{\pz} = (1+\epsA)\ell/2.
    \end{math}
    Setting $h = \dY{\pz}{\cen}$, we have that
    \begin{equation*}
        h%
        =%
        \sqrt{\dY{\pa}{\pz}^2 - \dY{\pa}{\cen}^2 }
        =%
        \frac{\ell}{2} \sqrt{(1+\epsA)^2 - 1}
        =%
        \frac{\sqrt{\epsA (2+\epsA)}}{2} \ell%
        \leq
        \sqrt{\epsA} \ell
        \leq
        \sqrt{\epsA}\cdot \diameterX{\body}.
    \end{equation*}
    as $\ell \leq \diameterX{\body'} \leq \diameterX{\body}$.

    For any point $\px \in \body'$, we have that
    $\dsY{\px}{\Re^2 \setminus \body} \geq \delta \cdot
    \diamX{\body}$.  As such, to ensure that
    $\pi \subseteq \Elp \subseteq \body$, we need that
    \begin{math}
        \delta \cdot \diamX{\body} \geq h,
    \end{math}
    which holds if
    \begin{math}
        \delta \cdot \diamX{\body} \geq \sqrt{\epsA} \cdot
        \diameterX{\body}.
    \end{math}
    This in turn holds if $\epsA \leq \delta^2$. Namely, we have the
    desired properties if $\epsA = \min( \eps, \delta^2)$.
\end{proof}

\section{Local spanners of homothets of convex region}
\seclab{disks}

Let $\CC$ be a bounded convex and closed region in the plane (e.g., a
disk).  A \emphi{homothet} of $\CC$ is a scaled and translated copy of
$\CC$.  A point set $\PS$ is in \emphi{general position} with respect
to $\CC$, if no four points of $\PS$ lie on the boundary of a homothet
of $\CC$, and no three points are colinear.

A graph $\G=(\PS, \EG)$ is a \emphw{$\CC$-local $t$-spanner} for $\PS$
if for any homothet $\region$ of $\CC$ we have that
$\restrictY{\G}{\region}$ is a $t$-spanner of $\EucG \cap \region$.

\subsection{Delaunay triangulation for homothets}
\seclab{del:homothets}

\begin{defn}[\cite{cd-vdbcdf-85}]
    \deflab{c:del:triang}%
    Given $\CC$ as above, and a point set $\PS$ in general position
    with respect to $\CC$, the \emphi{$\CC$-Delaunay triangulation} of
    $\PS$, denoted by $\DG_{\CC}(\PS)$, is the graph formed by edges
    between any two points $\pa,\pb\in \PS$ such that there is a
    homothet of $\CC$ that contains only $\pa$ and $\pb$ and no other
    point of $\PS$.
\end{defn}

\begin{theorem}[\cite{cd-vdbcdf-85}]
    \thmlab{c:t:construct_delaunay_graph}%
    For any convex shape $\CC$ and a set of points $\PS$,
    $\DG_{\CC}(\PS)$ can be computed in $\Of(n \log n)$ time.
    Furthermore, the triangulation $\DG_{\CC}(\PS)$ has $\Of(n)$
    edges, vertices, and faces.
\end{theorem}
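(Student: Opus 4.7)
The plan is to establish two facts: that $\DG_{\CC}(\PS)$ has linear combinatorial complexity, and that it can be constructed in $\Of(n \log n)$ time. Both rest on a key geometric lemma: the boundaries of two homothets of the same convex body $\CC$ meet in at most two points. For translates this follows from the fact that two translates of a convex curve meet at most twice; a simple scaling argument handles the case of different sizes.

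For the combinatorial bound, I would show that $\DG_{\CC}(\PS)$, drawn with straight-line edges, is planar, and then apply Euler's formula to get $|E|\le 3n-6$ and $|F| \le 2n - 4$. Let $\pa\pb$ and $\pc\pd$ be two edges witnessed by empty homothets $\CC_1, \CC_2$, and suppose for contradiction that their segments properly cross. By general position I may continuously shrink each witness until its two witnessed points lie on the boundary while no other points of $\PS$ enter. Since $\CC_i \cap \PS$ is fixed to its two witnesses, $\pa, \pb \notin \CC_2$ and $\pc, \pd \notin \CC_1$. Applying the two-intersection lemma, $\partial\CC_1 \cap \partial\CC_2$ has at most two points, which split $\partial\CC_1$ into an arc lying inside $\CC_2$ and an arc lying outside; both $\pa$ and $\pb$ sit on the outer arc, and symmetrically for $\pc,\pd$. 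A short case analysis on the cyclic order of $\pa,\pb,\pc,\pd$ around the two boundaries then forces either the segments to be non-crossing or one of the four points to lie inside the other witness, in each case a contradiction.

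For the construction, I would first build the dual $\CC$-Voronoi diagram and then read off $\DG_\CC(\PS)$ from it in linear time. The diagram admits a sweepline construction in the style of Fortune's algorithm: the wavefront is a sequence of translated copies of a portion of $\partial\CC$ emanating from sites already swept over, site events occur at the $y$-coordinates of points, and tri-tangent events correspond to empty homothets of $\CC$ circumscribing three sites. Maintaining the beach line in a balanced search tree and the events in a priority queue gives the $\Of(n \log n)$ time bound; correctness and linearity of the event count ride on the planarity argument above.

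The main obstacle is the non-crossing proof for general, possibly non-centrally-symmetric $\CC$, since the induced \emph{convex distance function} need not be symmetric and the usual Euclidean bisector arguments do not transfer directly. One must invoke convexity in a coordinate-free way, and rule out tangency configurations between $\partial\CC_1$ and $\partial\CC_2$ using the general-position assumption. Once that lemma is in hand, the sweepline adaptation is largely mechanical, replacing the Euclidean disk by $\CC$ throughout in the predicates defining site and tri-tangent events.
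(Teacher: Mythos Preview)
The paper does not prove this theorem at all; it is stated purely as a citation to Chew and Drysdale \cite{cd-vdbcdf-85}, with no accompanying argument. So there is no ``paper's own proof'' to compare against --- your sketch is essentially a reconstruction of the content of that reference.

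That said, two technical remarks on your outline. First, the claim that the boundaries of two homothets of $\CC$ meet in at most two \emph{points} is false when $\CC$ is not strictly convex (two homothetic squares can share an entire edge). What is true, and what the planarity argument actually needs, is that the two boundaries \emph{cross} at most twice, equivalently that $\partial\CC_1$ is split by $\CC_2$ into at most one arc inside and one arc outside. Your non-crossing argument goes through with that corrected statement. Second, the $\Of(n\log n)$ algorithm in \cite{cd-vdbcdf-85} is divide-and-conquer, not a Fortune-style sweep; Fortune's sweep postdates that paper, and adapting it to general convex distance functions is possible but is later work and requires some care with the beach-line structure when $\CC$ has flat portions. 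Either route yields the stated bound, so this is a matter of attribution rather than correctness.
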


\begin{figure}[h]
    \includegraphics[page=1]{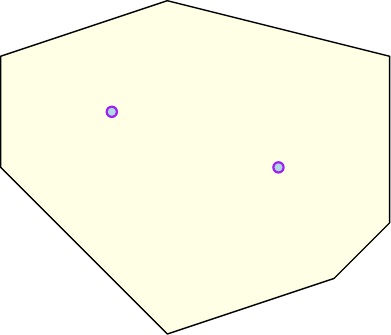}%
    \hfill%
    \includegraphics[page=2]{figs/shrink_2}%
    \hfill%
    \includegraphics[page=3]{figs/shrink_2}
    \caption{Shrinking of homothets so two points becomes on the
       boundary of the homothet.}
    \figlab{shrink:h}
\end{figure}

\begin{lemma}
    \lemlab{shrink:shrank}%
    Let $\CC$ be a convex bounded body, and let $\PS$ be a set of
    points in general position with respect to $\CC$. Then, if $C$ is
    a homothet of $\CC$ that contains two points
    $\pa, \pb \in \CC \cap \PS$, then there exists a homothet
    $C' \subseteq C$ of $\CC$ such that $\pa, \pb \in \partial C'$.
\end{lemma}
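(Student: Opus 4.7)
The plan is to obtain $C'$ as a minimum-scale homothet of $\CC$ that sits inside $C$ and still contains both $\pa$ and $\pb$, and then to show that a minimizer automatically has both points on its boundary. Parameterizing homothets of $\CC$ as $C(\lambda,v) = \lambda\CC + v$ with $\lambda > 0$ and $v \in \Re^2$, consider
\begin{equation*}
   \mathcal{S} \;=\; \bigl\{(\lambda,v)\;\big|\;
   C(\lambda,v) \subseteq C,\ \pa,\pb \in C(\lambda,v)\bigr\}.
\end{equation*}
This set is nonempty (it contains the parameters of $C$ itself), closed (each of the three constraints is a closed condition on $(\lambda,v)$), and bounded (the containment in $C$ caps $\lambda$ by the scale of $C$, and once $\lambda$ is bounded, the condition $\pa \in \lambda\CC + v$ confines $v$ to a bounded region). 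So $\mathcal{S}$ is compact and $\lambda$ attains a minimum on it at some $(\lambda^*,v^*)$; set $C' := \lambda^*\CC + v^* \subseteq C$.

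It remains to show that $\pa,\pb \in \partial C'$. I argue by contradiction: suppose (without loss of generality) that $\pa$ lies in the interior of $C'$. The tool will be a continuous homothety that shrinks $C'$ toward a carefully chosen point $c \in C'$. For $\mu \in [0,1]$, let
\begin{equation*}
   C_\mu \;=\; \bigl\{(1-\mu) y + \mu c : y \in C'\bigr\}
   \;=\; (1-\mu)\lambda^* \CC \,+\, \bigl[(1-\mu)v^* + \mu c\bigr],
\end{equation*}
which is again a homothet of $\CC$ of scale $(1-\mu)\lambda^*$. Since $C'$ is convex and $c \in C'$, we have $C_\mu \subseteq C' \subseteq C$ for every $\mu$, so the containment constraint in $\mathcal{S}$ persists.

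I now pick $c$ so that both $\pa$ and $\pb$ remain in $C_\mu$ for all sufficiently small $\mu > 0$. If $\pb \in \partial C'$, take $c = \pb$; then $\pb$ is the fixed point of the homothety, hence $\pb \in C_\mu$ for every $\mu$, and $\pa$, being in the open interior of $C'$, stays inside $C_\mu$ for all small $\mu$ by continuity of the homothety. If instead $\pb$ is also in the interior of $C'$, take $c$ to be any interior point of $C'$; both $\pa$ and $\pb$ then remain inside $C_\mu$ for small $\mu$ by the same continuity argument. In either case $((1-\mu)\lambda^*,\,(1-\mu)v^*+\mu c) \in \mathcal{S}$ for small $\mu>0$, with strictly smaller scale, contradicting the minimality of $\lambda^*$. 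The argument is symmetric in $\pa$ and $\pb$, so both points lie on $\partial C'$. The only real thing to be careful about is the continuity step, which is routine: the open interior of $C'$ is open in the plane and $C_\mu \to C'$ in Hausdorff distance as $\mu \to 0$, so a point in the interior of $C'$ is eventually contained in every $C_\mu$.
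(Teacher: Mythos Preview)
Your proof is correct, but it takes a different route from the paper. The paper gives an explicit two-step construction: first scale $C$ down about $\pa$ until $\pb$ lands on the boundary, obtaining $C_1$; then scale $C_1$ down about $\pb$ until $\pa$ lands on the boundary, obtaining $C'$. Since the second shrinking fixes $\pb$ and $C' \subseteq C_1$, the point $\pb$ (already on $\partial C_1$) remains on $\partial C'$, and the inclusions $C' \subseteq C_1 \subseteq C$ are immediate because each step is a contraction about a point of the current body.

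Your argument instead minimizes the scale $\lambda$ over all admissible homothets and derives the boundary condition by contradiction via a one-parameter shrinking family. This is a perfectly standard variational approach and works fine; the only implicit fact you rely on (and could state explicitly) is that $\lambda$ is bounded away from $0$ on $\mathcal{S}$ because $\pa \neq \pb$ forces $\lambda \cdot \diamX{\CC} \geq \dY{\pa}{\pb} > 0$, so the infimum is attained at a genuine homothet. The paper's construction is more elementary (no compactness, just two explicit limits) and slightly more informative, since it actually exhibits $C'$; your approach is cleaner to state and generalizes readily to more constraints, but is nonconstructive.
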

\begin{proof}
    The idea is to apply a shrinking process of $C$, as illustrated in
    \figref{shrink:h}.  Consider the mapping
    $f_{\beta, \pd} : \px \mapsto \beta (\px - \pd) + \pd $. It is a
    scaling of the plane around $\pd$ by a factor of $\beta$. Let
    $\beta'$ be the minimum value of $\beta$ such that
    $C_1 = f_{\beta,\pa}(C)$ contains $\pb$ (i.e., we shrink $C$
    around $\pa$ till $\pb$ becomes a boundary point). Next, shrink
    $C'$ around $\pb$, till $\pa$ becomes a boundary point --
    formally, let $\beta''$ be the minimum value of $\beta$ such that
    $C' = f_{\beta,\pb}(C_1)$ contains $\pa$. Since
    $C' \subseteq C_1 \subseteq C$, and $\pa, \pb \in \partial C'$,
    the claim follows.
\end{proof}

The following standard claim, usually stated for the standard Delaunay
triangulations, also holds for homothets.

\begin{claim}
    \clmlab{c:t:connected}%
   Let $\CC$ be a bounded close convex shape.  Given a set of points
   $\PS \subseteq \Re^2$ in general position with respect to $\CC$,
   let $\DG = \DG_{\CC}(\PS)$ be the $\CC$-Delaunay triangulation of
   $\PS$. For any homothet $C$ of $\CC$, we have that
   $\restrictY{\DG}{C}$ is connected.%
\end{claim}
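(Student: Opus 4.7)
The plan is a minimal-counterexample argument based on the scale of a small witness homothet. For a pair $\px, \py \in \PS \cap C$, let $\lambda(\px, \py)$ denote the smallest scaling factor of a homothet of $\CC$ that is contained in $C$ and contains both points; by \lemref{shrink:shrank} this infimum is attained by a homothet with $\px, \py$ on its boundary. Suppose for contradiction that $\restrictY{\DG}{C}$ has two points in different components, and pick a disconnected pair $(\pa, \pb)$ minimizing $\lambda$; let $C^* \subseteq C$ be its minimal witness, so $\pa, \pb \in \partial C^*$. If $C^* \cap \PS = \{\pa, \pb\}$, then by \defref{c:del:triang} the edge $\pa\pb$ belongs to $\DG$, and convexity of $C$ gives $\pa\pb \subseteq C$, so $\pa\pb \in \restrictY{\DG}{C}$ --- a direct contradiction. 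Otherwise, pick $\pc \in (C^* \cap \PS) \setminus \{\pa, \pb\}$ and split on where $\pc$ lies relative to $\partial C^*$.

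If $\pc$ is in the interior of $C^*$, reuse the shrinking map $f_{\beta, \pa}$ from the proof of \lemref{shrink:shrank}: taking the smallest $\beta$ such that $\pc \in f_{\beta, \pa}(C^*)$, the fact that $\pc$ is strictly inside forces $\beta < 1$, producing a homothet $C_1 \subseteq C^* \subseteq C$ with $\pa, \pc \in \partial C_1$ and scale strictly smaller than that of $C^*$. Hence $\lambda(\pa, \pc) < \lambda(\pa, \pb)$, and symmetrically $\lambda(\pc, \pb) < \lambda(\pa, \pb)$. By the minimality of $(\pa, \pb)$, both $\{\pa, \pc\}$ and $\{\pc, \pb\}$ lie in common components of $\restrictY{\DG}{C}$, so $\pa$ and $\pb$ do too --- contradiction.

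The remaining subcase $\pc \in \partial C^*$ is the delicate one, since shrinking no longer produces a strictly smaller homothet. The general-position hypothesis that no four points of $\PS$ lie on the boundary of a single homothet forces $\pc$ to be the unique additional point, giving $C^* \cap \PS = \{\pa, \pb, \pc\}$ with interior disjoint from $\PS$. A continuity argument along the one-parameter family of homothets of $\CC$ that keep $\pa, \pb$ on their boundary then yields a nearby homothet $C'$ that still contains $\pa, \pb$ but excludes $\pc$ (and contains no other points of $\PS$, since those are strictly outside $C^*$ and remain so under a small perturbation), which certifies that $\pa\pb$ is a $\CC$-Delaunay edge via \defref{c:del:triang}. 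Convexity of $C$ closes the contradiction. This boundary subcase is the main obstacle: the shrinking toolkit of \lemref{shrink:shrank} is no longer strict, and one must invoke the combinatorial structure of $\DG$ granted by general position to replace it.
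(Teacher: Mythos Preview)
Your argument is essentially the same as the paper's: both reduce to a witness homothet with $\pa,\pb$ on its boundary, locate an interior point $\pc$, shrink to strictly smaller witnesses for $\{\pa,\pc\}$ and $\{\pc,\pb\}$, and recurse. The paper inducts on the number of interior points of the witness; you phrase it as a minimal-counterexample on the scale $\lambda$. These are interchangeable well-orderings of the same descent.

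There is one genuine slip in your case split. You write ``pick $\pc \in (C^* \cap \PS)\setminus\{\pa,\pb\}$ and split on where $\pc$ lies,'' and then in the boundary subcase conclude that $C^* \cap \PS = \{\pa,\pb,\pc\}$ with interior disjoint from $\PS$. That inference is not valid: general position only limits the number of \emph{boundary} points to three, so if $C^*$ happens to have both a third boundary point and some interior points, and you picked the boundary one, your conclusion fails. The fix is immediate: if any point of $\PS$ lies in the interior of $C^*$, choose $\pc$ there and run the interior subcase; only when the interior is empty do you fall into the boundary subcase, and then general position does give exactly three points on $\partial C^*$.

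On the boundary subcase itself: you are actually more careful than the paper, which in its base case (``$m=0$ interior points, hence $\pa\pb$ is an edge, as $C$ testifies'') silently ignores the possibility of a third boundary point. Your continuity argument along the one-parameter family of homothets through $\pa,\pb$ is the standard way to close this; it is exactly the argument that an empty homothet with three boundary points yields a Delaunay triangle. It would be worth stating explicitly that (i) the perturbed homothet $C'$ need not stay inside $C$ --- it only needs to certify the Delaunay edge $\pa\pb$, and $\pa\pb \subseteq C$ already follows from convexity --- and (ii) in one of the two directions along the family, $\pc$ leaves $C'$ (this is where general position, specifically that the three-point configuration is non-degenerate, is used).
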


\begin{proof}
    We prove that for any homothet $C$ with two points
    $\pa,\pb\in \PS$ on its boundary, there is a path between $\pa$
    and $\pb$ in $\restrictY{\DT}{C}$, and \lemref{shrink:shrank} will
    immediately imply the general statement. The proof is by induction
    over the number $m$ of points of $\PS$ in the interior of $C$. If
    $m=0$ then $C$ contains no points of $\PS$ in its interior, and
    thus $\pa \pb$ is an edge of the Delaunay triangulation, as $C$
    testifies.

    \begin{figure}[h]
        \phantom{}\hfill%
        \includegraphics[page=1]{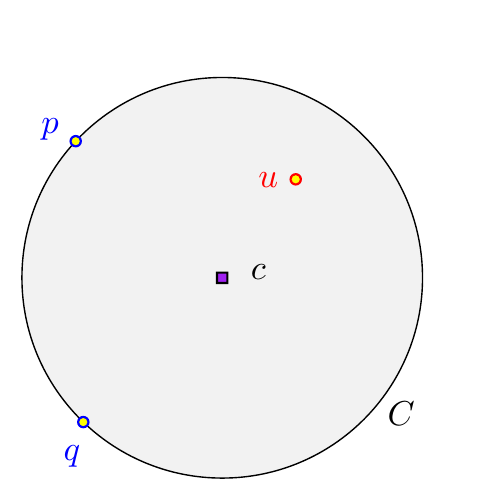}%
        \hfill%
        \includegraphics[page=2]{figs/shrink}%
        \hfill%
        \includegraphics[page=3]{figs/shrink}%
        \hfill%
        \phantom{}%
        \caption{An illustration of the proof of
           \clmref{c:t:connected} in the case that $C$ is a disk.}
        \figlab{shrink}
    \end{figure}

    Otherwise, let $\pc\in \PS$ be a point in the interior of
    $C$. From \lemref{shrink:shrank} we get that there exists a
    homothet $C'$ of $C$ with $C'\subseteq C$, such that $\pa$ and
    $\pc$ lie on the boundary of $C'$ Thus, by induction, there is a
    path $\gamma'$ between $\pa$ and $\pc$ in
    $\restrictY{\DT}{C'} \subseteq \restrictY{\DT}{C}$. Similarly,
    there must be a homothet $C''$, that gives rise to a path
    $\gamma''$ between $\pc$ and $\pb$, and concatenating the two
    paths results in a path between $\pa$ and $\pb$ in
    $\restrictY{\DT}{C}$.
\end{proof}

\subsection{The generic construction}
\seclab{disk_construction}

The input is a set $\PS$ of $n$ points in the plane (in general
position) with spread $\spread = \spreadX{\PS}$, and a parameter
$\eps \in (0,1)$. We have a convex body $\CC$ that defines the
``unit'' ball. The task is to construct a local spanner for any
homothet of $\CC$.

The algorithm computes a $(1/\epsA)$-\WSPD $\WS$ of $\PS$ using the
algorithm of \lemref{s:s:p:d:spread}, where $\epsA = \eps/6$.  For
each pair $\Pair = \{\PSX, \PSY \} \in \WS$, the algorithm computes
the $\CC$-Delaunay triangulation
$\DT_{\Pair} = \DG_{\CC}(\PSX \cup \PSY)$. The algorithm adds all the
edges in $\DT_{\Pair} \cap (\PSX \otimes \PSY)$ to the computed graph
$\G$.

\subsubsection{Analysis}

\myparagraph{Size} %
For each pair $\Pair = \{\PSX, \PSY\}$ in the \WSPD, its
$\CC$-Delaunay triangulation contains at most $\Of( |\PSX| + |\PSY|)$
edges. As such, the number of edges in the resulting graph is bounded
by
\begin{math}
    \sum_{\{\PSX, \PSY\} \in \WS} O\bigl( |\PSX| + |\PSY| \bigr) =%
    \Of\pth{ \WeightX{\WS} }%
    =%
    \Of\pth{ \frac{n\log \spread}{{\epsA}^{2}}},
\end{math}
by \lemref{s:s:p:d:spread}.

\myparagraph{Construction time} %
The construction time is bounded by
\begin{equation*}
    \sum_{\{\PSX, \PSY\} \in \WS} O\bigl( (|\PSX| + |\PSY|) \log
    (|\PSX| + |\PSY|) \bigr) =%
    \Of\pth{ \WeightX{\WS} \log n }%
    =%
    \Of\pth{ \frac{n\log \spread \log n}{{\epsA}^{2}}}.    
\end{equation*}

\begin{lemma}[Local spanner property]
    For $\PS, \CC, \eps$ as above, let $\G$ be the graph constructed
    above for the point set $\PS$. Then, for any homothet $C$ of $\CC$
    and any two points $\px, \py \in \PS \cap C$, we have that
    $\restrictY{\G}{C}$ has a $(1+\eps)$-path between $\px$ and
    $\py$. That is, $\G$ is a $\CC$-local $(1+\eps)$-spanner.
\end{lemma}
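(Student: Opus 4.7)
The plan is to prove the spanner property by strong induction on the rank of $\dY{\px}{\py}$ in the (finite) sorted multiset of pairwise distances in $\PS$, with the trivial base case $\px = \py$. For the inductive step, set $\ell = \dY{\px}{\py}$ and let $\Pair = \{\PSX, \PSY\} \in \WS$ be the unique \WSPD pair with $\px \in \PSX$ and $\py \in \PSY$. Since $\px, \py \in (\PSX \cup \PSY) \cap C$, applying \clmref{c:t:connected} to $\DT_\Pair = \DG_\CC(\PSX \cup \PSY)$ and the homothet $C$ produces a path $\pi$ in $\restrictY{\DT_\Pair}{C}$ joining $\px$ to $\py$.

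Next I would label each vertex of $\pi$ according to whether it lies in $\PSX$ or in $\PSY$. Because the two endpoints of $\pi$ carry different labels, some edge $\pa\pb$ of $\pi$ must be a cross-pair edge, with $\pa \in \PSX \cap C$ and $\pb \in \PSY \cap C$. Such a cross-pair edge was added to $\G$ by the construction, and convexity of $C$ forces the segment $\pa\pb$ to lie in $C$, so $\pa\pb \in \restrictY{\G}{C}$; this is the ``good jump''. Well-separatedness of $\Pair$ then yields the standard estimates $\dY{\px}{\pa} \leq \diamX{\PSX} \leq \epsA \ell$, $\dY{\pb}{\py} \leq \diamX{\PSY} \leq \epsA \ell$, and $\dY{\pa}{\pb} \leq (1+2\epsA)\ell$.

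The plan is then to recurse on $(\px, \pa)$ and $(\pb, \py)$: both endpoints of each subproblem lie in $\PS \cap C$, and the two distances are strictly smaller than $\ell$ (coincidences reduce to the base case). The induction hypothesis supplies $(1+\eps)$-paths inside $\restrictY{\G}{C}$ of length at most $(1+\eps)\epsA \ell$ each; concatenating them with the jump edge $\pa\pb$ gives a $\px$-to-$\py$ path in $\restrictY{\G}{C}$ of total length at most $(1+2\epsA)\ell + 2(1+\eps)\epsA\ell = (1 + 4\epsA + 2\epsA\eps)\ell$, which is $\leq (1+\eps)\ell$ exactly when $\epsA \leq \eps/(4+2\eps)$, and in particular for the chosen $\epsA = \eps/6$ whenever $\eps \leq 1$.

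The main obstacle --- and what distinguishes local spanners from ordinary \WSPD spanners --- is guaranteeing that every ingredient of the constructed path stays inside $C$. The jump edge is easy once its endpoints are found on a path that already lives in $\restrictY{\DT_\Pair}{C}$, together with convexity of $C$. The recursive subpaths are the real difficulty, and the enabling ingredient is \clmref{c:t:connected}: it is precisely the connectedness of the restricted $\CC$-Delaunay triangulation that forces the jump endpoints $\pa, \pb$ to lie in $C$, which is exactly what makes the inductive hypothesis applicable to the two subproblems.
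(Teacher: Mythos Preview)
Your proof is correct and follows essentially the same approach as the paper: induction on the rank of $\dY{\px}{\py}$, use of \clmref{c:t:connected} to find a cross-pair Delaunay edge $\pa\pb$ with both endpoints in $C$, well-separatedness to bound $\dY{\px}{\pa},\dY{\pb}{\py}\leq\epsA\ell$ and $\dY{\pa}{\pb}\leq(1+2\epsA)\ell$, and recursion on the two short subproblems inside $C$. Your arithmetic is in fact slightly tighter than the paper's (yielding $\epsA\leq\eps/(4+2\eps)$ rather than $\epsA\leq\eps/6$), and your explicit remark that the jump endpoints lie in $C$ because they appear on a path in $\restrictY{\DT_\Pair}{C}$ makes the applicability of the inductive hypothesis clearer than the paper does.
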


\begin{proof}
    Fix a homothet $C$ of $\CC$, and consider two points
    $\pa, \pb \in \PS \cap C$.  The proof is by induction on the
    distance between $\pa$ and $\pb$ (or more precisely, the rank of
    their distance among the $\binom{n}{2}$ pairwise distances).
    Consider the pair $\Pair = \{ \PSX, \PSY \}$ such that
    $\px \in \PSX$ and $\py \in \PSY$.

    \remove{%
       For the base case, consider the case that $\px$ is the
       nearest-neighbor to $\py$ in $\PS$, and $\py$ is the
       nearest-neighbor to $\px$ in $\PS$.  It must be, because of the
       separation property of $\Pair$, that $\PSX$ and $\PSY$ are
       singletons. Indeed, if $\PSX$ contains another point, then
       $\py$ would not be the nearest-neighbor to $\px$ (this is true
       for $\epsA < 1/2$). As such, $\px \py \in \DT_\Pair$,
       $\px, \py \in C$, and the edge $\px \py \in \EGX{\G}$, implying
       the claim.  }%
    
    If $\px \py \in \DT_\Pair$ then the claim holds, so assume this is
    not the case. By the connectivity of $\DT_\Pair \cap C$, see
    \clmref{c:t:connected}, there must be points
    $\px' \in \PSX \cap C$, $\py' \in \PSY \cap C$, such that
    $\px'\py' \in \EGX{ \DT_\Pair}$. As such, by construction, we have
    that $\px'\py' \in \EGX{\G}$. Furthermore, by the separation
    property, we have that
    \begin{equation*}
        \max \pth{ \diameterX{\PSX}, \diameterX{\PSY} }%
        \leq%
        \epsA \, \dsY{\PSX}{\PSY}%
        \leq%
        \epsA \ell,
    \end{equation*}
    where $\ell = \dY{\px}{\py}$. In particular,
    $\dY{\px'}{\px} \leq \epsA \ell$ and
    $\dY{\py'}{\py} \leq \epsA \ell$. As such, by induction, we have
    $\dGZ{\G}{\px}{\px'} \leq (1+\eps)\dY{\px}{\px'} \leq
    (1+\eps)\epsA \ell$ and
    $\dGZ{\G}{\py}{\py'} \leq (1+\eps)\dY{\py}{\py'} \leq
    (1+\eps)\epsA \ell$.  Furthermore,
    $\dY{\px'}{\py'} \leq (1+2\epsA)\ell$. As $\px'\py' \in \EGX{\G}$,
    we have
    \begin{align*}
      \dGZ{\G}{\px}{\py}%
      &\leq%
        \dGZ{\G}{\px}{\px'}%
        +\dY{\px'}{\py'}
        +
        \dGZ{\G}{\py'}{\py}%
        \leq%
        (1+\eps)\epsA \ell
        +(1+2\epsA)\ell
        + (1+\eps)\epsA \ell
        \leq%
        \pth{ 2\epsA +1+2\epsA + 2\epsA } \ell
      \\&%
      =%
      \pth{ 1+ 6\epsA  } \ell
      \leq%
      \pth{ 1+ \eps  } \dY{\px}{\py},
    \end{align*}
    if $\epsA \leq \eps/6$.
\end{proof}

\myparagraph{The result} %
We thus get the following.

\begin{theorem}
    \thmlab{main:1}%
    Let $\CC$ be a bounded convex shape in the plane, let $\PS$ be a
    given set of $n$ points in the plane (in general position), and
    let $\eps \in (0,1/2)$ be a parameter. The above algorithm
    constructs a $\CC$-local $(1+\eps)$-spanner $\G$. The spanner has
    $\Of\pth{ \eps^{-2} n\log \spread }$ edges, and the construction
    time is $\Of\pth{ \eps^{-2} n\log \spread \log n }$.  Formally,
    for any homothet $C$ of $\CC$, and any two points
    $\pa, \pb \in \PS \cap C$, we have a $(1+\eps)$-path in
    $\restrictY{\G}{C}$.
\end{theorem}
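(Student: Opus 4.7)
The plan is to assemble the theorem as a packaging of the three analysis components already established immediately before its statement: the edge-count bound, the construction-time bound, and the local-spanner guarantee. The construction itself is the one described in \secref{disk_construction}: build a $(1/\epsA)$-\WSPD of $\PS$ with $\epsA = \eps/6$ via \lemref{s:s:p:d:spread}, and for each pair $\Pair = \{\PSX,\PSY\}$ add to $\G$ all edges of the $\CC$-Delaunay triangulation $\DT_\Pair = \DG_\CC(\PSX \cup \PSY)$ that cross between $\PSX$ and $\PSY$.

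First I would bound the edge count. By \thmref{c:t:construct_delaunay_graph}, $\DT_\Pair$ has $\Of(|\PSX|+|\PSY|)$ edges, so summing over pairs gives $\Of(\WeightX{\WS}) = \Of(\epsA^{-2} n \log \spread) = \Of(\eps^{-2} n \log \spread)$ by \lemref{s:s:p:d:spread}. The construction time adds a $\log n$ factor since each $\DT_\Pair$ is computed in $\Of((|\PSX|+|\PSY|)\log(|\PSX|+|\PSY|))$ time (by \thmref{c:t:construct_delaunay_graph}), which sums to $\Of(\eps^{-2} n \log \spread \log n)$. The \WSPD itself is computed within the same asymptotic time bound.

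Next I would verify the spanner property. Fix a homothet $C$ of $\CC$ and two points $\pa, \pb \in \PS \cap C$; the proof proceeds by induction on the rank of $\dY{\pa}{\pb}$ among the pairwise distances in $\PS$. Let $\Pair = \{\PSX,\PSY\} \in \WS$ be the pair with $\pa \in \PSX$, $\pb \in \PSY$. If $\pa\pb \in \DT_\Pair$ we are done, since then $\pa\pb \in \EG$ and both endpoints lie in $C$. Otherwise, by \clmref{c:t:connected} applied to $\DT_\Pair$ restricted to the homothet $C$, there exist $\pa' \in \PSX \cap C$ and $\pb' \in \PSY \cap C$ with $\pa'\pb' \in \EGX{\DT_\Pair} \subseteq \EGX{\G}$. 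Well-separation of $\Pair$ gives $\dY{\pa}{\pa'}, \dY{\pb}{\pb'} \leq \epsA \ell$ with $\ell = \dY{\pa}{\pb}$, so induction applies to the strictly shorter pairs $(\pa,\pa')$ and $(\pb,\pb')$, and the three-piece concatenation yields a path of length at most $(1+6\epsA)\ell \leq (1+\eps)\ell$.

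The only subtlety, and the one place I would be careful, is the base of the induction: one must ensure that when $\dY{\pa}{\pb}$ is the smallest positive distance then either $\pa\pb \in \DT_\Pair$ directly, or the well-separation forces the candidate intermediate pair $(\pa',\pb')$ to have strictly shorter separation; since $\dY{\pa}{\pa'} \leq \epsA\ell < \ell$ (for $\epsA < 1$) and similarly for $\pb'$, the induction decreases the rank of the pair distance, so this is fine. Putting the three components together yields the theorem statement verbatim.
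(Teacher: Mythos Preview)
Your proposal is correct and matches the paper's own argument essentially line for line: the size and time bounds are obtained exactly as in the paper by summing the linear-size Delaunay triangulations over the \WSPD pairs and invoking \lemref{s:s:p:d:spread}, and the local-spanner property is proved by the same induction on the rank of $\dY{\pa}{\pb}$, using \clmref{c:t:connected} to extract a crossing edge $\pa'\pb'$ inside $C$ and well-separation to bound the detour by $(1+6\epsA)\ell$. Your handling of the base case is, if anything, slightly more explicit than the paper's.
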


\subsubsection{Applications and comments}

The following defines a ``visibility'' graph when we are restricted to
a region $R$, where two points are visible if there is a witness
homothet contained in $R$ having both points on its boundary.
\begin{defn}
    Let $\CC$ be a bounded convex shape in the plane.  Given a region
    $R$ in the plane and a point set $\PS$, consider two points
    $\pa, \pb \in \PS$. The edge $\pa \pb$ is \emphw{safe} in $R$ if
    there is a homothet $C$ of $\CC$, such that
    $\pa,\pb \in C \subseteq R$. The \emphi{safe graph} for $\PS$ and
    $R$, denoted by $\GY{\PS}{R}$, is the graph formed by all the safe
    edges in $\PS$ for $R$. Note, that this graph might have a
    quadratic number of edges in the worst case.
\end{defn}

Observe that $\GY{\PS}{\Re^2}$ is a clique. Surprisingly, the spanner
graph described above, when restricted to region $R$, is a spanner for
$\GY{\PS}{R}$.

\begin{corollary}
    \corlab{safe:graph}
	
   Let $\CC$ be a bounded convex body, $\PS$ be a set of $n$ points in
   the plane, $\eps \in (0,1)$ be a parameter, and let $\G$ be a
   $\CC$-local $(1+\eps)$-spanner of $\PS$.
	
   Consider a region $R$ in the plane, and the associated graph
   $\GA = \GY{\PS}{R}$, we have that $\restrictY{\G}{R}$ is a
   $(1+\eps)$-spanner for $\GA$. Formally, for any two points
   $\pa, \pb \in \PS \cap R$, we have that
   $\dGZ{\restrictY{\G}{R}}{\pa}{\pb}\leq (1+\eps)\dGZ{\GA}{\pa}{\pb}
   $.
	
   In particular, for any convex region $D$, the graph $\G \gminus D$
   is a $(1+\eps)$-spanner for $\GY{\PS}{\Re^2} \gminus D$.
\end{corollary}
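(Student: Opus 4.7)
The plan is to prove the main statement by emulating, with a $(1+\eps)$ multiplicative loss, every shortest path of $\GA = \GY{\PS}{R}$ by a walk in $\restrictY{\G}{R}$; the ``in particular'' sentence then follows by specializing $R$ to the complement of $D$.

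Fix $\pa, \pb \in \PS \cap R$ and let $\pi = \pa_0\pa_1\cdots\pa_k$ be a shortest path in $\GA$ with $\pa_0 = \pa$ and $\pa_k = \pb$. The key step would unpack the definition of the safe graph: each edge $\pa_i\pa_{i+1}$ of $\pi$ comes with a witness homothet $C_i$ of $\CC$ such that $\pa_i, \pa_{i+1} \in C_i \subseteq R$. Applying the $\CC$-local $(1+\eps)$-spanner property of $\G$ to $C_i$ yields a path in $\restrictY{\G}{C_i}$ between $\pa_i$ and $\pa_{i+1}$ of length at most $(1+\eps)\dY{\pa_i}{\pa_{i+1}}$. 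Because $C_i \subseteq R$, every edge on that subpath has both its endpoints and its defining segment contained in $R$, so the subpath actually lies in $\restrictY{\G}{R}$. Concatenating the $k$ subpaths produces a walk from $\pa$ to $\pb$ in $\restrictY{\G}{R}$ of length at most $(1+\eps)\sum_{i}\dY{\pa_i}{\pa_{i+1}} = (1+\eps)\dGZ{\GA}{\pa}{\pb}$, as desired.

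For the ``in particular'' consequence, I would take $R = \Re^2 \setminus \interiorX{D}$, so that by \defref{def:residual:graph} the restriction $\restrictY{\G}{R}$ coincides with $\G \gminus D$. By the main claim it then suffices to argue that every edge $\pa\pb$ of $\GY{\PS}{\Re^2} \gminus D$ is also an edge of $\GY{\PS}{R}$, since this makes distances in $\GY{\PS}{R}$ no larger than those in $\GY{\PS}{\Re^2} \gminus D$. Given such an edge, convexity of $D$ supplies a line $L$ strictly separating the segment $\pa\pb$ from $D$, and I would construct the witness homothet by scaling $\CC$ and pushing its center far along the outward normal of $L$ on the $\pa\pb$-side: for a sufficiently large scale the center becomes $\CC$-closer to $\pa$ and $\pb$ than to any point of $D$, so the scaled homothet contains $\pa, \pb$ while remaining disjoint from $D$.

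The main obstacle is exactly this last geometric lemma: for an arbitrary pair $\pa,\pb$ whose connecting segment merely avoids the convex set $D$, it is not obvious how to produce a single homothet of $\CC$ that simultaneously contains both points and stays disjoint from $D$, since a small homothet may be too ``round'' or elongated in the wrong direction to fit beside $D$. Everything else is a straightforward telescoping of the local $(1+\eps)$-guarantee along the edges of a shortest path in $\GA$.
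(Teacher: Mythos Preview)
Your proof of the first assertion is essentially identical to the paper's: take a shortest path in $\GA$, replace each edge by the $(1+\eps)$-path guaranteed inside its witness homothet, and concatenate.

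For the ``in particular'' clause the paper takes a shorter route that dissolves the obstacle you flag. Rather than manufacturing a \emph{finite} homothet of $\CC$ that contains $\pa,\pb$ and misses $D$, the paper simply observes that a halfplane is an ``infinite'' homothet of $\CC$---the limit of homothets as the scale tends to infinity while keeping a supporting line fixed---and therefore the $\CC$-local $(1+\eps)$-spanner property applies to halfplanes as well. Now for an edge $\pa\pb$ of $\GY{\PS}{\Re^2}\gminus D$, convexity of $D$ gives a halfplane $H$ containing the segment $\pa\pb$ and disjoint from $\interiorX{D}$; the local property for $H$ immediately yields a $(1+\eps)$-path in $\restrictY{\G}{H}\subseteq \G\gminus D$. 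So the halfplane \emph{itself} is the witness region, and there is nothing to construct. Your scaling-and-pushing idea is in fact exactly the sequence of homothets converging to that halfplane, so it can be completed, but recognizing the halfplane as the right witness short-circuits the work and shows why the obstacle you anticipated is not genuine.
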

\begin{proof}
    Consider the shortest path $\pi = u_1 u_2 \ldots u_k$ between
    $\pa$ and $\pb$ realizing $\dGZ{\GA}{\pa}{\pb}$. Every edge
    $e_i = u_i u_{i+1}$ has a homothet $C_i$ such that
    $u_i, u_{i+1} \in C_i \subseteq R$. As such, there is a
    $(1+\eps)$-path between $u_i$ and $u_{i+1}$ in
    $\restrictY{\G}{C_i} \subseteq \restrictY{\G}{R}$. Concatenating
    these paths directly yields the desired result.
	
    The second claim follows by observing that the complement of $D$
    is the union of halfspaces, and halfspaces can be considered to be
    ``infinite'' homothets of $\CC$. As such, the above argument
    applies verbatim.
\end{proof}

\begin{remark}
    The above implies that local spanners for homothets are also
    robust to convex region faults. Namely, this construction both
    provides a local spanner and a fault tolerant spanner, where the
    locality is for homothets of the given shape, and the faults are
    for any convex regions.
\end{remark}

\subsection{Lower bounds}
\seclab{lower:bound}

\subsubsection{A lower bound for local spanner for disks}

The result of \thmref{main:1} is somewhat disappointing as it depends
on the spread of the point set (logarithmically, but still).  Next, we
show a lower bound proving that this dependency is unavoidable, even
in the case of disks.

\SoCGVer{\bigskip}%
\myparagraph{Some intuition} A natural way to attempt a
spread-independent construction is to try and emulate the construction
of Abam \etal \cite{abfg-rftgs-09} and use a \SSPD instead of a \WSPD,
as the total weight of the \SSPD is near linear (with no dependency on
the spread). Furthermore, after some post processing, one can assume
every pair $\Pair = \{ \PSX, \PSY \}$ is angularly $\eps$-separated --
that is, there is a double wedge with angle $\leq \eps$, such that
$\PSX$ and $\PSY$ are of different sides of the double wedge. The
problem is that for the local disk $\disk$, it might be that the
bridge edge between $\PSX$ and $\PSY$ that is in
$\DT_\Pair \cap \disk$ is much longer than the distance between the
two points of interest. This somewhat counter-intuitive situation is
illustrated in \figref{bad}.

\begin{figure}[h]
    \phantom{} \hfill%
    \includegraphics[page=1,width=0.3\linewidth]{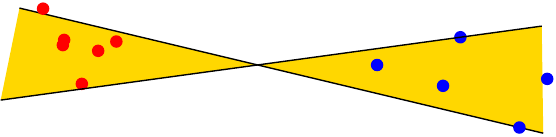}
    \hfill%
    \includegraphics[page=2,width=0.3\linewidth]{figs/bad_example}
    \hfill%
    \includegraphics[page=3,width=0.3\linewidth]{figs/bad_example}
    \hfill%
    \phantom{}
    \caption{A bridge too far -- the only surviving bridge between the
       red and blue points is too far to be useful if the sets of
       points are not well separated.}
    \figlab{bad}
\end{figure}

\SaveContent{\LemmaDiskLowerBound}%
{%
   For $\eps =1/4$, and parameters $n$ and $\spread \geq 1$, there is
   a point set $\PS$ of $n + \ceil{\log \spread}$ points in the plane,
   with spread $\Of( n \spread )$, such that any local
   $(1+\eps)$-spanner of $\PS$ for disks, must have
   $\Omega( n \log \spread )$ edges.  }

\begin{lemma}
    \lemlab{l:s:lower:bound}
    \LemmaDiskLowerBound

\end{lemma}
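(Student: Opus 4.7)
The plan is to construct an explicit $n + \ceil{\log \spread}$-point configuration such that, at each of $\ceil{\log \spread}$ geometric scales, a collection of $n$ witness disks each force a distinct edge of the local spanner, giving $\Omega(n \log \spread)$ edges in total.

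\textbf{Construction.} Take $\PS$ to consist of $n$ \emph{cluster} points $p_1,\dots,p_n$ placed nearly collinearly along the $x$-axis at $p_j = (j,0)$ (with a tiny generic perturbation ensuring general position), together with $m = \ceil{\log\spread}$ \emph{witness} points $q_1,\dots,q_m$ placed at $|q_i| = \Theta(2^i)$ on a curve whose angular layout is chosen so that no chain of $q_i$'s can form a $(1+\eps)$-path between a cluster point and a witness. The total count is $n + \ceil{\log\spread}$, the closest-pair distance is $\Omega(1)$, and the diameter is $\Of(n + 2^m) = \Of(n\spread)$, so $\spreadX{\PS} = \Of(n\spread)$, as required.

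\textbf{Forcing edges.} For each pair $(p_j, q_i)$, I would exhibit a witness disk $D_{i,j}$ that is tangent to the $x$-axis at $p_j$ and passes through $q_i$; tangency at $p_j$ guarantees that $D_{i,j}$ contains no other cluster point $p_k$. By the placement of the witnesses, any other $q_{i'}\in D_{i,j}$ is forced to lie angularly far from the line $p_j q_i$, so that a direct triangle-inequality calculation (with $\eps = 1/4$) gives
\begin{equation*}
    \dY{p_j}{q_{i_1}} + \dY{q_{i_1}}{q_{i_2}} + \cdots + \dY{q_{i_k}}{q_i}
    \;>\; (1+\eps)\,\dY{p_j}{q_i}
\end{equation*}
for every chain through intermediate witnesses inside $D_{i,j}$. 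Hence $\restrictY{\G}{D_{i,j}}$ must contain the edge $p_j q_i$ itself, and so $\G$ has at least $n \cdot m = \Omega(n\log\spread)$ edges.

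\textbf{The main obstacle} is certifying the no-shortcut property for all $n \cdot m$ witness disks simultaneously. Placing the $q_i$'s on a common ray (say $q_i = (0, 2^i)$) already makes each $D_{i,j}$ contain no other $p_k$, but then $D_{i,j}$ also swallows every lower-scale $q_{i'}$ with $i' < i$, and the chain $p_j \to q_1 \to \cdots \to q_i$ collapses to a $(1+o(1))$-path, exactly as illustrated by \figref{bad}. Spreading the witnesses across both geometrically growing radii \emph{and} angularly separated directions restores the stretch above $1+\eps$; the bulk of the technical work is a direct geometric computation verifying that, for $\eps = 1/4$, the chosen combination of radial separation and angular offset forces every $k$-step detour through witnesses to overshoot the direct distance by more than a $1+\eps$ factor, so that the witness disk indeed certifies the edge $p_j q_i$.
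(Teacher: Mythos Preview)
Your overall plan---$n$ collinear cluster points, $\ceil{\log\spread}$ witnesses at geometrically growing scales, and disks tangent to the cluster line isolating one cluster point with one witness---is exactly the paper's, and you correctly identify the central obstacle: when the witnesses sit on a common ray, the tangent disk through $q_i$ swallows all lower-scale witnesses and the chain through them is a $(1+o(1))$-path.

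Where the proposal remains a sketch is the fix. You assert that angular spreading makes ``every $k$-step detour through witnesses overshoot,'' deferring this to a ``direct geometric computation,'' but that computation is the entire difficulty. A two-hop detour $p_j \to q_{i'} \to q_i$ is ruled out only if $q_{i'}$ lies outside the ellipse with foci $p_j,q_i$ and string length $(1+\eps)\dY{p_j}{q_i}$; with $\eps=1/4$ this ellipse is fat (semi-minor axis $\tfrac{3}{8}\dY{p_j}{q_i}$), and you must exclude \emph{every} lower-scale witness from \emph{every} such lens across $n$ choices of $j$ and $\Theta(\log\spread)$ scales. With only $2\pi$ of total angle available for $\Theta(\log\spread)$ witnesses, some pair at different scales is forced to nearly the same direction, and that pair is exactly the dangerous one. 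You have not supplied a placement or a calculation that survives this.

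The paper's device sidesteps chains entirely. It places the witnesses iteratively in order of \emph{decreasing} $x$-coordinate (halving at each step), choosing the $y$-coordinate of each new $\pb_j$ just barely outside \emph{every} disk $\diskVY{\pa_i}{\pb_{j'}}$ already defined for $j'<j$. The payoff is that $\diskVY{\pa_i}{\pb_j}$ then contains none of the \emph{nearer} witnesses $\pb_{j+1},\ldots,\pb_{M-1}$; the only witnesses it can contain are the \emph{farther} ones $\pb_1,\ldots,\pb_{j-1}$, and a one-line estimate on $x$-coordinates gives $\dY{\pa_i}{\pb_k}\geq \tfrac{4}{3}\dY{\pa_i}{\pb_j}$ for every $k<j$. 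Thus any path from $\pa_i$ that visits another witness has its \emph{first} edge already exceeding $(1+\tfrac14)\dY{\pa_i}{\pb_j}$, forcing the direct edge---no multi-hop analysis, no angular geometry.
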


\begin{figure}[h]
    \centering%
    \includegraphics{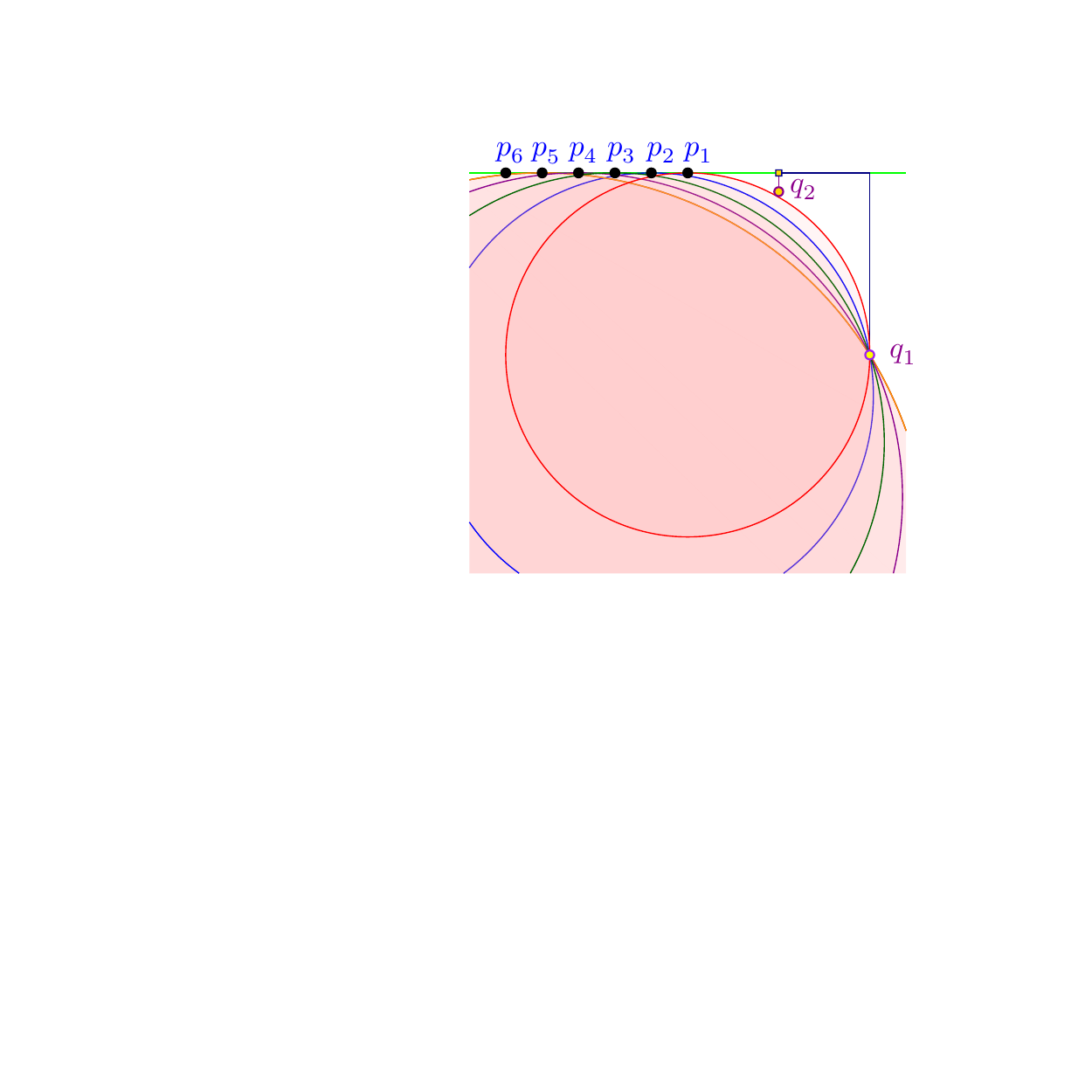}%
    \caption{The set of disks $D_1$, and the construction of $\pb_2$.}
    \figlab{flower}
\end{figure}

\begin{proof}
    Let $\pa_i = (-i, 0)$, for $i=1, \ldots, n$.  Let
    $M = 1+ \ceil{ \log_2 \spread }$ and $\pb_1 = (n 2^M, -1)$.  For a
    point $\pa$ on the $x$-axis, and a point $\pb$ below the $x$-axis
    and to the right of $\pa$, let $\diskVY{\pa}{\pb}$ be the disk
    whose boundary passes through $\pa$ and $\pb$, and its center has
    the same $x$-coordinate as $\pa$.
	
    In the $j$\th iteration, for $j=2,\ldots, M-1$, Let
    $x_j = n2^{M-j+1} = x(\pb_{j-1})/2$, and let $y_j < 0$ be the
    maximum $y$-coordinate of a point that lies on the intersection of
    the vertical line $x =x_j$ and the disks of
    $D_1 \cup \cdots \cup D_j$ where
	
    \begin{equation*}
        D_j = \Set{ \diskVY{\pa_i}{\pb_{j-1}} }{i=1,\ldots, n},
    \end{equation*}
    see \figref{flower} for an illustration of $D_1 $.
    
    Let $\pb_j = (x_j, 0.99y_j)$.

    Clearly, the point $\pb_j$ lies outside all the disks of
    $D_1 \cup \ldots \cup D_j$. The construction now continues to the
    next value of $j$. Let
    $\PS = \{ \pa_1, \ldots, \pa_n, \pb_2, \ldots, \pb_M \}$. We have
    that $|\PS| = n + M-1$.
	
    The minimum distance between any points in the construction is $1$
    (i.e., $\dY{\pa_1}{\pa_2}\bigr.$). Indeed $x(\pb_{M-1}) = 4n$ and
    thus $\dY{\pb_{M-1}}{\pa_1} \geq 2n$.  The diameter of $\PS$ is
    $\dY{\pa_1}{\pb_1} = \sqrt{(n +n2^{M})^2+ 1} \leq 2 n 2^M $. As
    such, the spread of $\PS$ is bounded by
    $\leq n 2^{M+1} = \Of( n \spread)$.
	
    For any $i$ and $j$, consider the disk
    $\diskVY{\pa_i}{\pb_j}$. This disk does not contain any point of
    $\pa_1,\ldots, \pa_{i-1}, \pa_{i+1}, \ldots, \pa_{n}$ since its
    interior lies below the $x$-axis. By construction it does not
    contain any point $\pb_{j+1}, \ldots, \pb_{M-1}$. This disk
    potentially contains the points $\pb_{j-1}, \ldots, \pb_1$, but
    observe that for any index $k \in \IRX{j-1}$, we have that
    \begin{equation*}
        \dY{\pa_i}{\pb_k} %
        =%
        \sqrt{ \pth{i + n 2^{M-k+1} }^2 + \bigl( y(\pb_j) \bigr)^2 },
    \end{equation*}
    which implies that
    \begin{math}
        n 2^{M-k+1}%
        \leq%
        \dY{\pa_i}{\pb_k} %
        <%
        n( 2^{M-k+1} +2).
    \end{math}
    We thus have that
    \begin{equation*}
        \frac{\dY{\pa_i}{\pb_k}}{\dY{\pa_i}{\pb_j}}
        \geq%
        \frac{n 2^{M-k+1}}{n( 2^{M-j+1} +2)}
        =%
        \frac{ 2^{M-j}\cdot 2^{j-k}}{ 2^{M-j} +1}
        =%
        \frac{  2^{j-k}}{ 1 + 1/2^{M-j}}
        \geq
        \frac{2}{1+1/2}
        = \frac{4}{3}
        > 1+\eps,
    \end{equation*}
    since $j \in \IRX{M-1}$.  Namely, the shortest path in $\G$
    between $\pa_i$ and $\pb_j$, can not use any of the points
    $\pb_{1}, \ldots \pb_{j-1}$. As such, the graph $\G$ must contain
    the edge $\pa_i \pb_j$. This implies that
    $|\EGX{\G}| \geq n (M-1)$, which implies the claim.
\end{proof}

\subsubsection{A lower bound for triangles}
 
\begin{lemma}%
    \lemlab{l:b:triangles}%
    For any $n > 0$, and $\Phi = \Omega(n)$, one can compute a set
    $\PS$ of $n+ \Of(\log\Phi)$ points, with spread $\Of(\Phi n)$, and
    a triangle $\triangle$, such that any $\triangle$-local
    $(3/2)$-spanner of $\PS$ requires $\Omega\pth{n\log\Phi }$ edges.
\end{lemma}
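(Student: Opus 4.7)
The plan is to mirror the lower bound construction for disks in \lemref{l:s:lower:bound}, replacing empty disks with empty homothets of a triangle. Fix $\triangle$ to be a downward-pointing isoceles triangle (any fixed triangle works after an affine change of coordinates). Place $\pa_i = (-i, 0)$ for $i = 1, \ldots, n$, and inductively construct $M = 1 + \ceil{\log_2 \Phi}$ points $\pb_1, \ldots, \pb_{M-1}$ at geometrically decreasing $x$-coordinates below the $x$-axis, exactly as in \figref{flower}. The total number of points is $n + \Of(\log \Phi)$, and the geometric spacing yields spread $\Of(\Phi n)$, identical to the disk construction.

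For each pair $(\pa_i, \pb_j)$, I associate a canonical homothet $T_{i,j}$ of $\triangle$ analogous to $\diskVY{\pa_i}{\pb_j}$: the unique minimal homothet of $\triangle$ whose top edge is horizontal and whose boundary passes through both $\pa_i$ and $\pb_j$. Its existence as a genuine homothet follows from \lemref{shrink:shrank}: start with any large homothet of $\triangle$ containing both points and shrink until both lie on the boundary. By choosing the $y$-coordinates of the $\pb_j$'s carefully during the inductive construction, I ensure first that $T_{i,j}$ contains no $\pa_k$ with $k \neq i$, since all $\pa_k$'s lie on the $x$-axis while the interior of $T_{i,j}$ lies strictly below it, and second that $T_{i,j}$ contains no $\pb_k$ with $k > j$, since $y(\pb_k)$ is chosen low enough to lie strictly below every triangle homothet determined by pairs involving previously placed points — in direct analogy to the inductive definition of $y_j$ in the disk proof.

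Now fix any $\triangle$-local $(3/2)$-spanner $\G$ of $\PS$, and consider the restriction $\restrictY{\G}{T_{i,j}}$. The only points of $\PS$ inside $T_{i,j}$ are $\pa_i$, $\pb_j$, and possibly some $\pb_k$ with $k < j$. Since $x(\pb_k) = 2^{j-k} x(\pb_j) \geq 2 x(\pb_j)$, a near-verbatim repetition of the distance calculation in the disk proof gives $\dY{\pa_i}{\pb_k} \geq 2 \dY{\pa_i}{\pb_j} > (3/2) \dY{\pa_i}{\pb_j}$, so no path through any such $\pb_k$ can be a $(3/2)$-path from $\pa_i$ to $\pb_j$. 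Hence $\restrictY{\G}{T_{i,j}}$ must contain the direct edge $\pa_i \pb_j$, and summing over all $n (M-1)$ pairs yields $\Omega(n \log \Phi)$ edges in $\G$.

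The main obstacle is the geometric verification for triangles: unlike disks, which are symmetric about any diameter, a triangle has a fixed orientation, so one must carefully pick both the shape of $\triangle$ and the inductive placement of $y(\pb_j)$ so that every witness homothet $T_{i,j}$ is genuinely empty of all other $\pa_k$'s and of all later $\pb_k$'s. Making this verification precise — in particular, showing that the finite family of forbidden homothets at step $j$ admits a $y_j$ lying strictly below all of them with enough slack that $\pb_j$ is still a valid anchor for later iterations — is where the triangle argument diverges from the disk case. The counting, the spread analysis, and the final $\Omega(n\log \Phi)$ bound are otherwise identical.
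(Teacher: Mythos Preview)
Your proposal has a fundamental gap. You fix $\triangle$ to be a downward-pointing isoceles triangle and remark that ``any fixed triangle works after an affine change of coordinates.'' This cannot be right: the paper's own \thmref{l:s:triangle} builds, for every $\alpha$-fat triangle, a $\triangle$-local $(1+\eps)$-spanner with $O\bigl((\alpha\eps)^{-1}n\bigr)$ edges, independent of the spread. Taking $\eps=1/2$, any fixed fat triangle admits a local $(3/2)$-spanner of size $O(n)$, so no $\Omega(n\log\Phi)$ lower bound is possible for such a shape. The lemma is existential in $\triangle$, and the paper exploits this by choosing a \emph{thin} right triangle whose aspect ratio grows with $\Phi$ (vertices $(0,0)$, $(0,1)$, $(8\Phi h,0)$ with $h=\lceil\log_2\Phi\rceil$).

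There is also a concrete flaw in your isolation of the $\pa_k$'s. You assert $T_{i,j}$ contains no $\pa_k$ with $k\neq i$ because ``the interior of $T_{i,j}$ lies strictly below'' the $x$-axis. But your $T_{i,j}$ has its horizontal top edge on the $x$-axis, so that edge is an entire segment containing many of the $\pa_k$'s, and boundary points belong to $\restrictY{\G}{T_{i,j}}$. The disk argument you are mimicking works precisely because $\diskVY{\pa_i}{\pb_j}$ is \emph{tangent} to the $x$-axis at the single point $\pa_i$; a triangle edge cannot be tangent to a line at a point.

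The paper sidesteps both problems with a genuinely different construction. The $n$ ``near'' points $\pc_j$ are placed on the anti-diagonal line $x+y=-1$, and each witnessing homothet $\triangle_{i,j}$ is anchored with its right-angle corner at $\pc_j$. Because the $\pc_k$'s lie on a line of slope $-1$, no other $\pc_k$ can satisfy both $x\geq x(\pc_j)$ and $y\geq y(\pc_j)$, so only $\pc_j$ survives in that corner. The $h=O(\log\Phi)$ ``far'' points $\pb_i$ are placed by an explicit formula (no inductive step) so that they are monotone in signed distance to the hypotenuse direction; sliding the hypotenuse of the homothet to pass through $\pb_i$ then captures exactly $\{\pb_i,\pb_{i+1},\ldots,\pb_h\}$. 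Your final dilation computation is essentially the same as the paper's, but the point configuration and the choice of $\triangle$ that make it applicable are not the ones you propose.
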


\begin{figure}[b]
    \centering \includegraphics{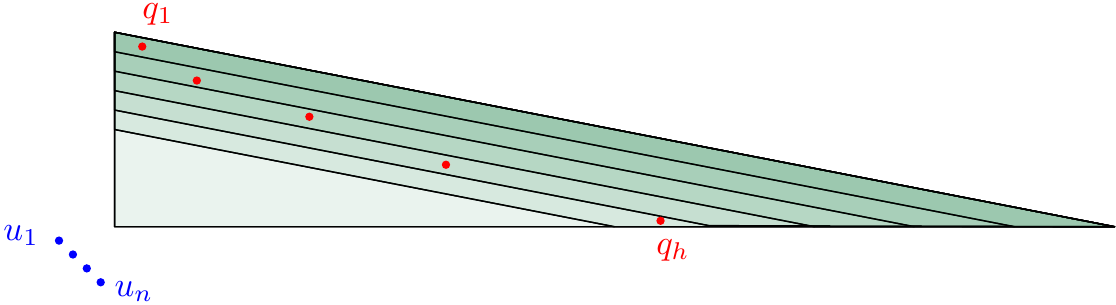}
    \caption{An Illustration of the construction of
       \lemref{l:b:triangles}.}
    \figlab{tri_low_bd}
\end{figure}

\begin{proof}
    Let $h = \ceil{\log_2 \Phi}$.  Let $\triangle$ be the triangle
    formed by the points $(0,0)$, $(0,1)$ and $(8\Phi h,0)$.  The
    hypotenuse of this triangle lies on the line
    $\Line \equiv \frac{1}{8\Phi h}x + y = 1$, and let
    $v = \bigl(\frac{1}{8\Phi h}, 1\bigr)$ be the vector orthogonal to
    this line.
	
    For $i \in \IRX{h}$ and $j \in \IRX{n}$, let
    \begin{align*}
      \pb_i = \bigl(2^{i+1}, 1 - i/h \bigr)
      \qquad \text{ and }\qquad
      \pc_j = \bigl(\tfrac{j}{n}-1, -\tfrac{j}{n} \bigr),
    \end{align*}
    and let $\PS = \{ \pb_1, \ldots, \pb_h, \pc_1, \ldots, \pc_n\}$,
    see \figref{tri_low_bd}.  Observe that
    $\cpX{\PS} = \dY{\pc_1}{\pc_2} = \sqrt{2}/n$, and as such we have
    that
    $\spreadX{\PS} = n \cdot \diamX{\PS}/\sqrt{2} \leq n(4\Phi + 2n)
    \leq 8 \Phi n$, as $\Phi \geq n$.  Observe that
    \begin{equation*}
	\DotProd{\pb_{i+1} - \pb_i}{v}%
	=%
	\DotProd{(2^{i+1},-\tfrac{1}{h})}{ \bigl(\tfrac{1}{8\Phi h},
           1\bigr)}
	\leq
	\tfrac{4\Phi}{8\Phi h} - \tfrac{1}{h} < 0.
    \end{equation*}
    That is, the points $\pb_1, \ldots, \pb_i$ are increasing in
    distance from $\Line$.
	
    Let $\triangle_{i,j}$ be the homothet of $\triangle$, that has its
    bottom left corner at $\pc_j$, and its hypotenuse passes through
    $\pb_i$. By the above,
    $\PS(i,j) = \triangle_{i,j} \cap \PS = \{ \pc_j, \pb_i, \pb_{i+1},
    \ldots \pb_h \}$.  Any $(1+\eps)$-spanner for $\PS(i,j)$ must
    contain the edge $\pc_j \pb_i$. Indeed, we have, for any $k$, that
    $2^{k+1} \leq \dY{\pc_j}{\pb_{k}} \leq 2^{k+1} +3$. As such, any
    path on a graph induced on $\PS(i,j)$ from $\pc_j$ to $\pb_i$ that
    uses (say) a midpoint $\pb_k$, for $k >i$, must have dilation at
    least
    \begin{equation*}
	\frac{\dY{\pc_j}{\pb_k} +
           \dY{\pb_k}{\pb_i}}{\dY{\pc_j}{\pb_{i}}}
	\geq%
	\frac{2^{k+1} + 2^{k} }{2^{i+1} + 3}
	\geq%
	\frac{3 \cdot 2^{i+1} }{(1+3/4) 2^{i+1} }
	=
	\frac{12}{7}
	>
	\frac{3}{2}.
    \end{equation*}
	
    Thus, any $\triangle$-local $3/2$-spanner for homothets of
    $\triangle$, must contain the edge $\pb_i \pc_j$, for any
    $i \in \IRX{h}$ and $j \in \IRX{n}$. Thus, such a spanner must
    have $ \Omega( n \log \Phi)$ edges, as claimed.
\end{proof}

\subsection{Local spanners for fat triangles}
\seclab{triangles}

While local spanners for homothets of an arbitrary convex shape are
costly, if we are given a triangle $\triangle$ with the single
constraint that $\triangle$ is not too ``thin'', then one can
construct a $\triangle$-local $t$-spanner with a number of edges that
does not depend on the spread of the points. See \figref{tri_low_bd}
for an illustration of a construction showing that dependency if
"thin" triangles are allowed.

\begin{defn}
    A triangle $\triangle$ is \emphi{$\alpha$-fat} if the smallest
    angle in $\triangle$ is at least $\alpha$.
\end{defn}

\begin{figure}[t]
    \centering \phantom{}%
    \hfill%
    \includegraphics[page=1]{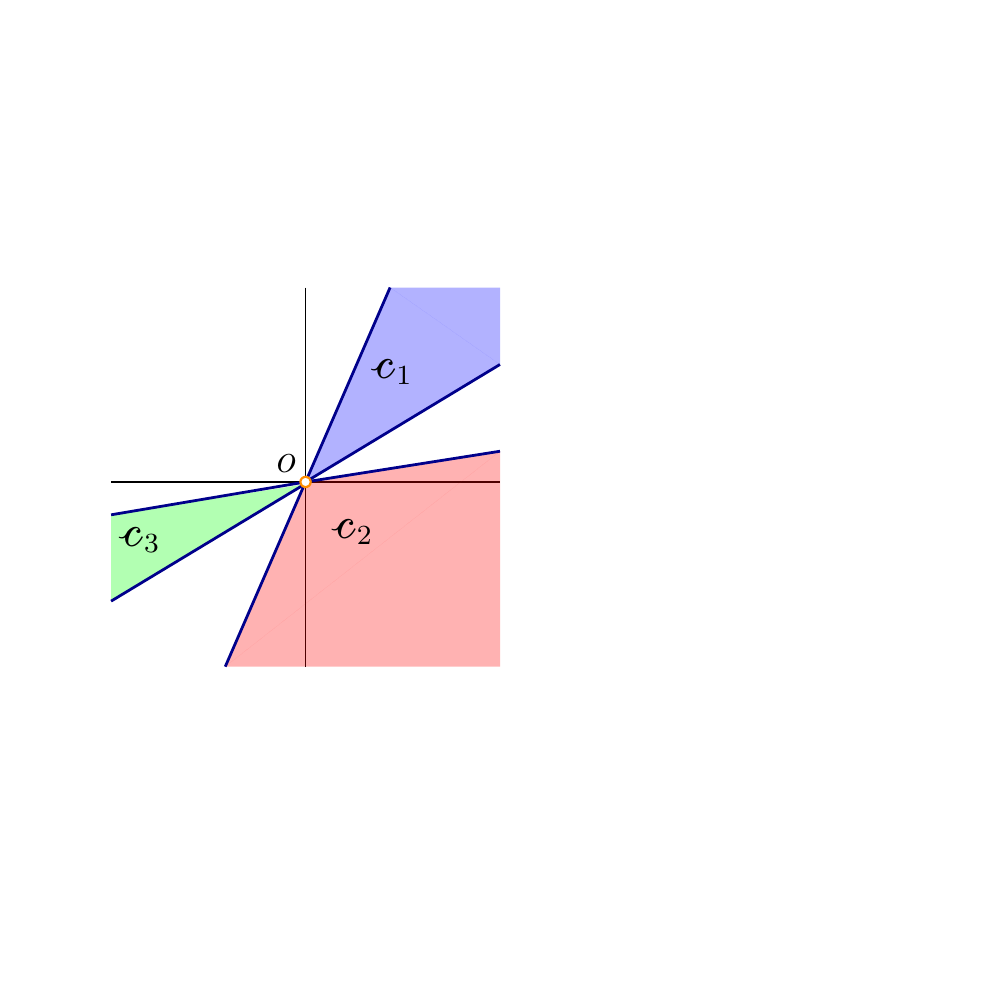}%
    \hfill%
    \includegraphics[page=2]{figs/triangle_cones}%
    \hfill%
    \phantom{}%
    \caption{For the triangle $\triangle$ with angles
       $\alpha_1,\alpha_2$, and $\alpha_3$ we create the cones
       $\cone_1,\cone_2$, and $\cone_3$.}
    \figlab{tri_cones}
\end{figure}

\subsubsection{Construction}
\seclab{tri_construction}

The input is a set $\PS$ of $n$ points in the plane, an $\alpha$-fat
triangle $\triangle$, and an approximation parameter $\eps \in
(0,1)$. Let $v_i$ denote the $i$\th vertex of $\triangle$, $\alpha_i$
be the adjacent angle, and let $e_i$ denote the opposing edge, for
$i\in \IRX{3}$.  Let
$\cone_i = \Set{ (\pa - v_i)t }{ \pa \in e_i \text{ and } t \geq 0}$
denote the cone with an apex at the origin induced by the $i$\th
vertex of $\triangle$.  Let $\dir_i$ be the outer normal of
$\triangle$ orthogonal to $e_i$.  See \figref{tri_cones} for an
illustration. Let $\ConeSet_i$ be a minimum size partition of
$\cone_i$ into cones each with angle in the range $[\beta/2, \beta]$,
where $\beta = \eps \alpha/\gamma$, and $\gamma>1$ is a constant to be
determined shortly.  For each point $\pa\in\PS$, and a cone
$\cone \in \ConeSet_i$, let $\nnZ{i}{\pa}{\cone}$ be the first point
in $(\PS - \pa)\cap ( \pa + \cone)$ ordered by the direction $\dir_i$
(it is the ``nearest-neighbor'' to $\pa$ in $\pa+\cone$ with respect
to the direction $\dir_i$).

\paragraph*{The construction}
Let $\G$ be the graph over $\PS$ formed by connecting every point
$\pa \in \PS$ to $\nnZ{i}{\pa}{\cone}$, for all $i \in \IRX{3}$ and
$\cone \in \ConeSet_i$.

\subsubsection{Analysis}

\begin{lemma}
    \lemlab{cone_edge:triangles}%
   Let $\pa \in \PS$, $\cone \in \ConeSet_i$, and
   $\pc =\nnZ{i}{\pa}{\cone}$, and let $\pb$ be a point in
   $(\PS \cap (\pa + \cone)) \setminus \{\pa, \pc\}$.  We have that
   \begin{math}
       \dY{\pa}{\pc}%
       +%
       (1+\eps)\dY{\pb}{\pc}%
       \leq%
       (1+\eps)\dY{\pa}{\pb}
   \end{math}
   and
   \begin{math}
       \dY{\pb}{\pc}%
       \leq%
       \dY{\pa}{\pb}.
   \end{math}
\end{lemma}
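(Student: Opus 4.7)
The plan is to reduce both inequalities to trigonometric facts after choosing good coordinates. Place $\pa$ at the origin with $\dir_i$ along the positive $x$-axis, and let $\theta_\pc, \theta_\pb$ denote the signed angles of $\pc - \pa$ and $\pb - \pa$ relative to $\dir_i$, writing $\ell_\pc = \dY{\pa}{\pc}$ and $\ell_\pb = \dY{\pa}{\pb}$. The argument then rests on three ingredients:
\emph{(i) angular bound from fatness.} Since $\dir_i$ is the outer normal to $e_i$, elementary right-triangle geometry shows that the two rays bounding $\cone_i$ make angles $\pi/2 - \alpha_{i-1}$ and $\pi/2 - \alpha_{i+1}$ with $\dir_i$ on opposite sides, so $\alpha$-fatness gives $|\theta_\pc|,\,|\theta_\pb| \leq \pi/2 - \alpha$.
\emph{(ii) cone-width bound.} Since $\pc, \pb \in \pa + \cone$ with $\cone \in \ConeSet_i$ of angular width at most $\beta = \eps\alpha/\gamma$, we have $\phi := |\theta_\pc - \theta_\pb| \leq \beta$, and also $\beta \leq \alpha$ (for $\gamma \geq 1$).
\emph{(iii) minimality.} Since $\pc = \nnZ{i}{\pa}{\cone}$ minimizes $\DotProd{\cdot - \pa}{\dir_i}$, we have $\ell_\pc \cos\theta_\pc \leq \ell_\pb \cos\theta_\pb$.

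For $\dY{\pb}{\pc} \leq \dY{\pa}{\pb}$, the law of cosines in $\triangle \pa \pc \pb$ gives $\dY{\pb}{\pc}^2 = \ell_\pc^2 + \ell_\pb^2 - 2\ell_\pc \ell_\pb \cos\phi$, so it suffices to prove $\ell_\pc \leq 2\ell_\pb\cos\phi$, and by (iii) this reduces to $\cos\theta_\pb \leq 2\cos\theta_\pc\cos\phi$. The product-to-sum identity yields $2\cos\theta_\pc \cos\phi = \cos\theta_\pb + \cos\omega$, where $\omega \in \{\theta_\pc - \phi,\theta_\pc + \phi\}$ is the one not equal to $\theta_\pb$. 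By (i) and (ii), $|\omega| \leq (\pi/2 - \alpha) + \beta \leq \pi/2$, so $\cos\omega \geq 0$ and the claim follows.

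For the stronger inequality, rewrite it as $\ell_\pb - \dY{\pb}{\pc} \geq \ell_\pc/(1+\eps)$. Multiplying by $\ell_\pb + \dY{\pb}{\pc}$ and using the identity $\ell_\pb^2 - \dY{\pb}{\pc}^2 = \ell_\pc\bigl(2\ell_\pb\cos\phi - \ell_\pc\bigr)$ reduces it to $(1+\eps)(2\ell_\pb\cos\phi - \ell_\pc) \geq \ell_\pb + \dY{\pb}{\pc}$. Setting $r = \ell_\pc/\ell_\pb$, combining (iii) with the estimate $\cot\alpha \leq \pi/(2\alpha)$ (valid for $\alpha \in (0,\pi/2]$) gives the quantitative bound $r \leq \cos\theta_\pb/\cos\theta_\pc \leq 1 + \pi\beta/(2\alpha) = 1 + \pi\eps/(2\gamma)$. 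A direct Taylor expansion in the small parameters $r - 1 = O(\eps/\gamma)$ and $\phi \leq \beta = O(\eps\alpha/\gamma)$ then shows that the inequality holds provided $\gamma$ is chosen as a sufficiently large absolute constant.

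The main obstacle is the second inequality in the worst-case regime where $|\theta_\pc|$ is near the extremal value $\pi/2 - \alpha$: in that case $\ell_\pc$ may exceed $\ell_\pb$ by a factor of roughly $1 + \beta/\alpha$, so the $\eps$-slack on the right-hand side is only just large enough to absorb the discrepancy. This is precisely why the sub-cones must have angular opening $\beta = \Theta(\eps\alpha)$ (rather than the $\Theta(\eps)$ used in the standard $\theta$-graph construction), and hence why the final edge count has the $1/(\alpha\eps)$ dependence claimed in \thmref{l:s:triangle}.
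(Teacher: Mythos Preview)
Your approach is correct and genuinely different from the paper's. The paper splits into two cases according to whether $\dY{\pa}{\pc} \geq \dY{\pa}{\pb}$ or not: in the first it works with the angles of $\triangle\pa\pb\pc$ and the law of sines (using that $\angleX{\pc}\geq\alpha$), and in the second it projects $\pc$ orthogonally onto $\pa\pb$. Your argument is a unified law-of-cosines computation in coordinates aligned with $\dir_i$; the fatness enters through the bound $|\theta|\leq\pi/2-\alpha$, which you convert into the ratio estimate $r\leq 1+\pi\eps/(2\gamma)$ via $\cot\alpha\leq\pi/(2\alpha)$. Both routes identify the same tight regime ($\ell_\pc$ slightly exceeding $\ell_\pb$, cone edge near the boundary of $\cone_i$), and your closing paragraph explaining why $\beta=\Theta(\eps\alpha)$ is forced is a nice addition.

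There is one small gap. You establish only the one-sided bound $r\leq 1+\pi\eps/(2\gamma)$, yet then invoke ``a Taylor expansion in the small parameter $r-1$''; but $r=\ell_\pc/\ell_\pb$ can be anywhere in $(0,\,1+O(\eps/\gamma)]$, so $r-1$ need not be small. The fix is a single monotonicity observation: writing your target as
\[
F(r)\;:=\;(1+\eps)(2\cos\phi - r)\;-\;1\;-\;\sqrt{r^2-2r\cos\phi+1}\;\geq\;0,
\]
one has $F'(r) = -(1+\eps) - (r-\cos\phi)/\sqrt{(r-\cos\phi)^2+\sin^2\phi} \leq -(1+\eps)+1 = -\eps < 0$, so $F$ is strictly decreasing in $r$ and it suffices to check $F$ at the maximal value $r=1+\pi\eps/(2\gamma)$, where your expansion is legitimate. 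With this line added, the argument is complete.
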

\begin{figure}[ht]
    \centering%
    \includegraphics{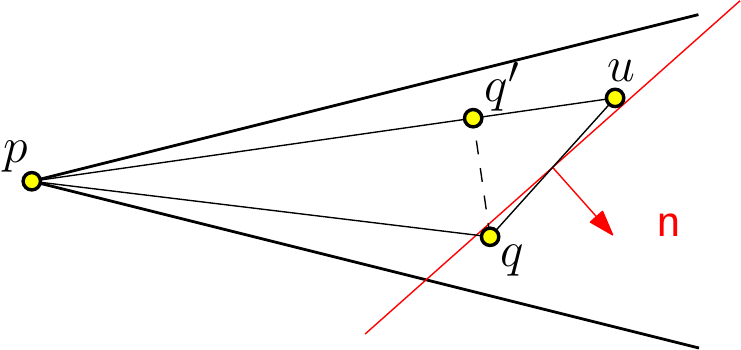}
    \caption{The case that $\dY{\pa}{\pb}\leq \dY{\pa}{\pc}$ in
       \lemref{cone_edge:triangles}. The vector used to determine
       $\nnZ{i}{\pa}{\cone}$ is shown in red, and denoted $\dir$}
    
    \figlab{tri_cone_edge:a}
\end{figure}
    
\begin{proof}
    Consider the triangle $\Delta\pa\pb\pc$ and denote the angles at
    $\pa,\pb,$ and $\pc$ by $\angleX{\pa},\angleX{\pb}$, and
    $\angleX{\pc}$ respectively.  Since the angle of $\cone$ is
    smaller than $60$ degrees (for an appropriate choice of $\gamma$),
    we have that
    $\dY{\pb}{\pc}\leq \max \{\dY{\pa}{\pc},\dY{\pa}{\pb}\}$.

    Consider the case that $\dY{\pa}{\pb}\leq \dY{\pa}{\pc}$,
    illustrated in \figref{tri_cone_edge:a}.  Observe that
    $\angleX{\pc} \leq \angleX{\pb}$.  As such
    $\angleX{\pc} \leq \pi/2$.  Furthermore,
    $\angleX{\pc} \geq \alpha \gg \eps \alpha/\gamma = \beta \geq
    \angleX{\pa}$. Similarly,
    $\angleX{\pb} \in [\alpha, \pi -\alpha]$.  By the $1$-Lipshitz of
    $\sin$, and as $\sin x \approx x$, for small $x$, and for $\gamma$
    sufficiently large, we have that
    \begin{equation*}
        \sin( \angleX{\pb} + \angleX{\pa}) \in [1-\eps/4, 1+\eps/4] \sin
        \angleX{\pb}%
        \qquad\text{and}\qquad
        \sin \angleX{\pa} \leq (\eps/4) \sin \angleX{\pc}.
    \end{equation*}
    As such, by the law of sines, we have that
    \begin{math}
        \frac{\dY{\pb}{ \pc}}{\sin{\angleX{\pa}}}%
        =%
        \frac{\dY{\pa}{ \pb}}{\sin{\angleX{\pc}}}%
        =%
        \frac{\dY{\pa}{ \pc}}{\sin{\angleX{\pb}}}.%
    \end{math}
    This implies that
    \begin{equation*}
        \dY{\pa}{\pc}%
        +%
        (1+\eps)\dY{\pb}{\pc}%
        =%
        \pth{\frac{\sin{\angleX{\pb}}}{\sin{\angleX{\pc}}} +(1+\eps)
           \frac{\sin{\angleX{\pa}}}{\sin{\angleX{\pc}}} }\dY{\pa}{ \pb}.        
    \end{equation*}
    Observe, by the above that
    \begin{equation*}
        \frac{\sin{\angleX{\pb}}}{\sin{\angleX{\pc}}}
        +(1+\eps)
        \frac{\sin{\angleX{\pa}}}{\sin{\angleX{\pc}}}        
        \leq
        \frac{\sin{\angleX{\pb}}}{\sin{\pth{\angleX{\pa} + \angleX{\pb}}}}
        +(1+\eps)
        \frac{\eps}{4}
        \leq%
        \frac{\sin{\angleX{\pb}}}{(1-\eps/4)\sin{\pth{\angleX{\pb}}}}
        +(1+\eps)
        \frac{\eps}{4}
        \leq%
        1+\eps.
    \end{equation*}
    
    \begin{figure}[h]
        \centering%
        \includegraphics[page=2]{figs/triangle_cone_edge}
        \caption{The case that $\dY{\pa}{\pb}> \dY{\pa}{\pc}$ in
           \lemref{cone_edge:triangles}.}%
        \figlab{tri_cone_edge:b}
    \end{figure}
    
    The other possibility is that $\dY{\pa}{\pb} > \dY{\pa}{\pc}$,
    illustrated in \figref{tri_cone_edge:b}.  Let $\pc'$ be the
    projection of $\pc$ to $\pa \pb$. Observe that
    \begin{equation*}
        \dY{\pc}{\pc'}%
        =%
        \dY{\pa}{\pc'} \tan \angleX{\pa}%
        \leq%
        2\beta\dY{\pa}{\pc'} \leq (\eps/8) \dY{\pa}{\pc'}.
    \end{equation*}
    Observe that
    $\cos \angleX{\pa} \geq 1 - (\angleX{\pa})^2/2 \geq 1-\eps^2/8$ as
    $\angleX{\pa}$ is an angle smaller than (say) $\eps/16$.  As such
    $1/\cos \angleX{\pa} \leq 1+\eps^2/4$.  This implies that
    \begin{math}
        \dY{\pa}{\pc}%
        \leq%
        {\dY{\pa}{\pc'}} / {\cos \angleX{\pa} } \leq %
        (1+\eps^2/4)\dY{\pa}{\pc'}.
    \end{math}
    We thus have that
    \begin{align*}
      \tau
      &=%
        \dY{\pa}{\pc}%
        +%
        (1+\eps)\dY{\pb}{\pc}%
        \leq%
        (1+\eps^2/4)\dY{\pa}{\pc'}
        +(1+\eps)\pth{ \Bigl.
        \dY{\pc}{\pc'} + \dY{\pc'}{\pb}}
      \\&
      \leq%
      \bigl(1+\eps^2/4 + (1+\eps)\eps/8\bigr)\dY{\pa}{\pc'} 
      + (1+\eps)\dY{\pc'}{\pb}
      \leq%
      (1+\eps)\dY{\pa}{\pb}.
    \end{align*}
\end{proof}

\begin{lemma}
    \lemlab{shrink:triangles}%
    Let $\triangle$ be a triangle that contains two points $\pa,
    \pb$. Then, there is a homothet $\triangle'\subseteq \triangle$ of
    $\triangle$, such that one of these points is a vertex of
    $\triangle'$, and the other point lies on a facing edge of
    $\triangle'$. %
\end{lemma}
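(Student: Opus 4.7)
The plan is to work in barycentric coordinates with respect to $\triangle$. Let $v_1, v_2, v_3$ denote the vertices of $\triangle$, and write $\pa$ and $\pb$ in barycentric coordinates as $(a_1, a_2, a_3)$ and $(b_1, b_2, b_3)$, with $a_l, b_l \geq 0$ and $a_1+a_2+a_3 = b_1+b_2+b_3 = 1$. The key observation is that every homothet $\triangle' \subseteq \triangle$ of $\triangle$ is parameterized by non-negative reals $c_1, c_2, c_3$ satisfying $c_1+c_2+c_3 \leq 1$, via
\begin{equation*}
    \triangle' = \Set{p \in \triangle}{\lambda_l(p) \geq c_l \text{ for every } l \in \IRX{3}},
\end{equation*}
where $\lambda_l(\cdot)$ denotes the $l$\th barycentric coordinate. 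Geometrically, $\triangle'$ is cut from $\triangle$ by three lines each parallel to an edge of $\triangle$. In these coordinates, the $i$\th vertex of $\triangle'$ (the analogue of $v_i$) has $\lambda_j = c_j$, $\lambda_k = c_k$, $\lambda_i = 1-c_j-c_k$, and the facing $i$\th edge of $\triangle'$ is $\brc{p \in \triangle' : \lambda_i(p) = c_i}$, where $\brc{i,j,k} = \brc{1,2,3}$.

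To place $\pa$ at the $i$\th vertex of $\triangle'$ and $\pb$ on the facing $i$\th edge, one must choose $c_j = a_j$, $c_k = a_k$, and $c_i = b_i$, together with the constraints $b_j \geq a_j$ and $b_k \geq a_k$ (so $\pb$ actually lies on the $i$\th edge of $\triangle'$). Writing $d_l = b_l - a_l$, this is precisely the condition $d_j \geq 0$ and $d_k \geq 0$; symmetrically, swapping the roles of $\pa$ and $\pb$ succeeds when $d_j \leq 0$ and $d_k \leq 0$. The constraint $c_1+c_2+c_3 \leq 1$ then follows automatically from $\sum_l a_l = \sum_l b_l = 1$, and the resulting homothet has positive scale $\cardin{d_i}$ whenever $\pa \neq \pb$.

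The lemma thus reduces to the following algebraic fact: given any $d_1, d_2, d_3 \in \Re$ with $d_1+d_2+d_3=0$, there is an index $i$ such that $d_j$ and $d_k$ are both non-negative or both non-positive. This is immediate by pigeonhole --- the three values cannot all be strictly positive nor all strictly negative (either possibility would contradict $\sum_l d_l = 0$), so at least two of them share a weak sign, and we take $i$ to be the remaining index. There is no substantive obstacle in the argument; the work lies in choosing the right coordinate system to make the matching condition transparent.
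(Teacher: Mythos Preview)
Your proof is correct and takes a genuinely different route from the paper's. The paper argues geometrically and procedurally: it first invokes the shrinking process of \lemref{shrink:shrank} (shrink toward $\pa$ until $\pb$ hits the boundary, then shrink toward $\pb$ until $\pa$ hits the boundary) to get both points on $\partial\triangle'$, and then, if neither point landed at a vertex, pushes the unique edge not containing either point inward until one of the two points meets it and becomes a vertex. Your approach instead sets up barycentric coordinates, parameterizes sub-homothets by three cutoff values $(c_1,c_2,c_3)$, and directly solves for the correct homothet via a pigeonhole on the signs of $d_l = b_l - a_l$. Your argument is more explicit --- it identifies in one shot which point becomes the vertex and at which corner --- and it makes the containment $\triangle'\subseteq\triangle$ and non-degeneracy ($|d_i|>0$ when $\pa\neq\pb$) transparent. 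The paper's argument, on the other hand, requires no coordinate setup, reuses \lemref{shrink:shrank} verbatim, and generalizes more readily in spirit to other convex shapes where barycentric coordinates are unavailable. One small point: your pigeonhole sentence is slightly loose (ruling out ``all strictly positive'' and ``all strictly negative'' does not by itself force two to share a weak sign), but the intended argument --- partition the $d_l$ into $\{\geq 0\}$ and $\{<0\}$ and apply pigeonhole to three items in two bins --- is of course fine.
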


\begin{proof}
    This follows by the same shrinking argument as
    \lemref{shrink:shrank}, with the addition of a single step. When a
    homothet $\triangle'$ with $\pa,\pb \in \partial\triangle'$ is
    found, if neither point is on a vertex, we ``push'' the only edge
    that does not contain one of the points towards the vertex $v$
    opposite of it (this the same mapping described in
    \lemref{shrink:shrank} with center $v$), until one of the points,
    say $\pa$ lies on the edge. $\pa$ now lies on two edges, meaning,
    at a vertex, while $q$ lies on the only remaining edge which must
    be opposite of that vertex. See \figref{shrink:triangle}.
\end{proof}

\begin{figure}[ht]
    \includegraphics[page=1,width=0.22\linewidth]{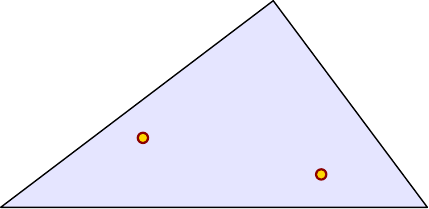}%
    \hfill
    \includegraphics[page=2,width=0.22\linewidth]{figs/narrow_triangle}%
    \hfill
    \includegraphics[page=3,width=0.22\linewidth]{figs/narrow_triangle}%
    \hfill
    \includegraphics[page=4,width=0.22\linewidth]{figs/narrow_triangle}
    \caption{ An illustration of the shrinking process of
       \lemref{shrink:triangles}. The three left figures illustrates
       the process of \lemref{shrink:shrank}, for the case that the
       convex region $\CC$ is a triangle, and the rightmost figure is
       the additional final step.}
    \figlab{shrink:triangle}
\end{figure}

\paragraph*{Local spanner property}
\begin{lemma}
    \lemlab{local_span:triangles}%
    Let $\triangle'$ be a homothet of $\triangle$. For any two points
    $\pa, \pb \in \PS \cap \triangle'$, we have a $(1+\eps)$-path in
    $\G' = \restrictY{\G}{\triangle'}$.
\end{lemma}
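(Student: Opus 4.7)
The plan is to run a standard $\theta$-graph style inductive argument, using the cone-nearest neighbor edges from the construction as the ``jumps'' and applying \lemref{cone_edge:triangles} for the dilation bound. The induction will be on the rank of $\dY{\pa}{\pb}$ among the $\binom{n}{2}$ pairwise distances in $\PS$.

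First, I will bring $\pa$ and $\pb$ into a canonical position inside $\triangle'$. Using \lemref{shrink:triangles} on the two points $\pa,\pb \in \triangle'$, I obtain a homothet $\triangle'' \subseteq \triangle'$ of $\triangle$ such that one of the points, say $\pa$, lies on a vertex $v$ of $\triangle''$, and the other point $\pb$ lies on the edge of $\triangle''$ facing $v$. Let $i \in \IRX{3}$ be the index so that $v$ corresponds to $v_i$; then the vector $\pb - \pa$ lies in the cone $\cone_i$ at $\pa$, and consequently in a unique subcone $\cone \in \ConeSet_i$ of the partition used in the construction. Set $\pc = \nnZ{i}{\pa}{\cone}$, so that $\pa\pc \in \EGX{\G}$.

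The key geometric point, which I expect to be the main obstacle, is to verify that the jump edge $\pa\pc$ actually survives in $\G' = \restrictY{\G}{\triangle'}$. For this I observe that $\pc \in \pa + \cone \subseteq \pa + \cone_i$ and $\DotProd{\pc - \pa}{\dir_i} \leq \DotProd{\pb - \pa}{\dir_i}$, by the choice of $\pc$ as nearest in direction $\dir_i$. Since $\pa$ is at the vertex $v$ of $\triangle''$ and $\pb$ lies on the facing edge (which is exactly the level set of $\DotProd{\cdot}{\dir_i}$ through $\pb$ inside the cone $\pa + \cone_i$), this forces $\pc \in \triangle'' \subseteq \triangle'$; convexity of $\triangle''$ then gives $\pa\pc \subseteq \triangle''$, so $\pa\pc \in \EGX{\G'}$.

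It remains to stitch the jump to an inductive path from $\pc$ to $\pb$. If $\pc = \pb$ we are done. Otherwise, the second inequality of \lemref{cone_edge:triangles} gives $\dY{\pc}{\pb} \leq \dY{\pa}{\pb}$, with strict inequality holding in general position, so the pair $(\pc,\pb)$ is of strictly smaller rank. Both $\pc$ and $\pb$ lie in $\triangle'' \subseteq \triangle'$, so by induction there is a $(1+\eps)$-path from $\pc$ to $\pb$ inside $\restrictY{\G}{\triangle''} \subseteq \G'$. Concatenating the edge $\pa\pc$ with this path and invoking the first inequality of \lemref{cone_edge:triangles} yields
\begin{equation*}
    \dGZ{\G'}{\pa}{\pb}
    \leq
    \dY{\pa}{\pc} + (1+\eps)\dY{\pc}{\pb}
    \leq
    (1+\eps)\dY{\pa}{\pb},
\end{equation*}
which is the desired $(1+\eps)$-path in $\G'$.
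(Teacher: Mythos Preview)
Your proof is correct and follows essentially the same route as the paper: shrink via \lemref{shrink:triangles} to put $\pa$ at a vertex of a homothet $\triangle''\subseteq\triangle'$, take the cone-nearest neighbor $\pc$, and induct using \lemref{cone_edge:triangles}. You are in fact more careful than the paper on two points it glosses over: you explicitly verify $\pc\in\triangle''$ (via $\pc-\pa\in\cone_i$ and $\DotProd{\pc-\pa}{\dir_i}\le\DotProd{\pb-\pa}{\dir_i}$, using that the facing edge is a level set of $\dir_i$), and you note that the inequality $\dY{\pc}{\pb}\le\dY{\pa}{\pb}$ is strict so the rank induction actually terminates---indeed this strictness already follows from the first inequality of \lemref{cone_edge:triangles}, no general-position assumption needed.
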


\begin{proof}
    Consider the closest pair $\pa,\pb \in \PS \cap \triangle$. They
    must be connected directly in $\G'$, as otherwise there is a point
    $\pc \in \PS' = \PS \cap \triangle'$ in the cone containing the
    segment $\pa\pb$, such that $\pa \pc \in \EGX{\G'}$. But then, by
    \lemref{cone_edge:triangles}, we have
    \begin{math}
        \dY{\pa}{\pc}%
        +%
        (1+\eps)\dY{\pb}{\pc}%
        \leq%
        (1+\eps)\dY{\pa}{\pb},
    \end{math}
    which implies that either $\pa\pc$ or $\pb\pc$ are the closest
    pair, which is a contradiction.

    For any other pair $\pa,\pb\in \PS'$ we have from
    \lemref{shrink:triangles} that there exists a homothet
    $\triangle''\subseteq \triangle'$ with one of the two points, say
    $\pa$, at a vertex, and the other on the opposite edge. We
    therefore have a cone $\cone$ with apex at $\pa$ such that
    $\pb\in \cone\cap \triangle''$. If $\pa\pb$ is an edge in $\G$
    then we are done. Otherwise, we have a vertex $\pc\in \cone$ such
    that $\pa\pc$ is an edge in $\G$, and by
    \lemref{cone_edge:triangles} we have
    $\dY{\pb}{\pc}\leq \dY{\pa}{\pb}$, which, by induction, means that
    there exists a $(1+\eps)$ path between $\pc$ and $\pb$ in
    $\G$. \lemref{cone_edge:triangles} now implies that
    $\dY{\pa}{\pc}+(1+\eps)\dY{\pb}{\pc}\leq(1+\eps)\dY{\pa}{\pb}$. Thus,
    there is a $(1+\eps)$ path between $\pa$ and $\pb$ in $\G'$, as
    stated.
\end{proof}

\paragraph*{Size and running time}

\begin{theorem}
    \thmlab{l:s:triangle}%
    Let $\PS$ be a set of $n$ points in the plane, and let
    $\eps \in (0,1)$ be an approximation parameter. The above
    algorithm computes a $\triangle$-local $(1+\eps)$-spanner $\G$ for
    an $\alpha$-fat triangle $\triangle$.  The construction time is
    $\Of\pth{(\alpha\eps)^{-1}n\log n}$, and the spanner $\G$ has
    $\Of\pth{(\alpha\eps)^{-1} n\bigr.}$ edges.
\end{theorem}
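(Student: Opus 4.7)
The main correctness claim, that $\restrictY{\G}{\triangle'}$ contains a $(1+\eps)$-path between any two points $\pa, \pb \in \PS \cap \triangle'$ for every homothet $\triangle'$ of $\triangle$, has already been established in \lemref{local_span:triangles}. What remains is to bound the size of $\G$ and the construction cost, both of which follow by direct counting, so no substantial new obstacle stands in the way.

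For the size, the plan is to enumerate the subcones across all three cone systems. By construction, $\cone_i$ has angular extent $\alpha_i$ (the angle of $\triangle$ at vertex $v_i$) and is partitioned into subcones each of angular extent at least $\beta/2 = \eps\alpha/(2\gamma)$, so $|\ConeSet_i| = O(\alpha_i/(\eps\alpha))$. Summing over $i \in \IRX{3}$ and using $\alpha_1 + \alpha_2 + \alpha_3 = \pi$, the total number of subcones is $O(1/(\eps\alpha))$. Each point $\pa \in \PS$ contributes at most one edge per subcone (namely, to $\nnZ{i}{\pa}{\cone}$, whenever that point exists), yielding $|\EGX{\G}| = O(n/(\alpha\eps))$ as claimed.

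For the running time, the only nontrivial primitive is, for each subcone $\cone \in \ConeSet_i$ and each point $\pa \in \PS$, computing the first point of $(\PS \setminus \{\pa\}) \cap (\pa + \cone)$ in direction $\dir_i$. This is exactly the primitive underlying the classical $\theta$-graph construction: apply an affine change of coordinates that sends $\cone$ to the positive quadrant and $\dir_i$ to the positive vertical axis, then run a plane sweep in the new vertical coordinate while maintaining a balanced search tree keyed on the new horizontal coordinate, so that $\nnZ{i}{\pa}{\cone}$ can be extracted via a predecessor query when $\pa$ is swept. One mild subtlety is that $\dir_i$ is the outer normal of the opposite edge $e_i$ rather than a bisector of $\cone_i$, so a brief check that the sweep order is consistent with this direction on each subcone is needed; this follows because $\dir_i$ has a positive component along every ray inside $\cone_i$ by the construction of the cones. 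Each of the $O(1/(\alpha\eps))$ subcones is thus processed in $O(n \log n)$ time, and the construction takes $O((\alpha\eps)^{-1} n \log n)$ time overall.
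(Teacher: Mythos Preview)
Your proposal is correct and follows essentially the same approach as the paper: you defer correctness to \lemref{local_span:triangles}, then bound the edge count by counting subcones per vertex (the paper phrases this as each point being the apex of $\Of(2\pi/(\eps\alpha))$ cones), and bound the running time by reducing to the standard $\theta$-graph construction. Your write-up is slightly more explicit than the paper's, which simply cites the $\theta$-graph running time rather than sketching the sweep; your remark about $\dir_i$ being the outer normal rather than a bisector is a nice observation, though the paper does not flag it.
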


\begin{proof}
    The local-spanning property is proven in
    \lemref{local_span:triangles}, and we are only left with bounding
    the size and the running time of the algorithm. The bound on the
    size is immediate from the construction, as every point $\pa$ is
    the apex of $\Of\pth{ \frac{2\pi}{\eps\alpha} }$ cones, each
    giving rise to a single edge incident to $\pa$.  The construction
    time is bounded by the construction time for a $\theta$-graph with
    cone size $\alpha\eps$, which is
    $\Of\pth{ (\alpha\eps)^{-1}n\log n}$ (\cite{c-aaspmp-87}).
\end{proof}

\subsection{A local spanner for nice polygons}
\seclab{k:gons}

\subsubsection{A good jump for narrow trapezoids}
\seclab{traps:jump}

As a reminder, a trapezoid is a quadrilateral with two parallel edges,
known as its \emph{bases}. The other two edges are its \emph{legs}.
For $\eps \in (0,1/4)$, a trapezoid $\Trap$ is \emphi{$\eps$-narrow}
if the length of each of its legs is at most
$\eps \cdot \diamX{\Trap}$.

\begin{lemma}
    \lemlab{good:jump:traps}%
   Let $\eps \in (0,1)$ be some parameter, and $\epsA = \eps/16$.  Let
   $\PX,\PY$ be two points sets that are $\epsA$-semi separated and
   $\epsA$-angularly separated (see \defref{angular:separated}), and
   let $\Trap$ be a $\epsA$-narrow trapezoid, with two points
   $\pa \in \PX$ and $\pb \in \PY$ lying on the two legs of
   $\Trap$. Then, one can compute a homothet $\Trap' \subseteq \Trap$
   of $\Trap$, such that:
   \begin{compactenumI}
       \item There are two points $\pa' \in \PX$ and $\pb' \in \PY$,
       such that $\pa'\pb'$ is an edge of the $\Trap$-Delaunay
       triangulation of $\PX \cup \PY$.
		
       \item We have that
       $(1+\eps)\dY{\pa}{\pa'} + \dY{\pa'}{\pb'} + (1+\eps)
       \dY{\pb'}{\pb} \leq (1+\eps) \dY{\pa}{\pb}$.
   \end{compactenumI}
\end{lemma}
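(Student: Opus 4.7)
The plan is to first locate an appropriate bridging edge using connectedness of the $\Trap$-Delaunay triangulation restricted to $\Trap$, and then verify the distance inequality in coordinates aligned with the axis of the angular-separation double-wedge.

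\textbf{Finding $\pa'$ and $\pb'$.} By \clmref{c:t:connected}, the restricted graph $\restrictY{\DG_\Trap(\PX \cup \PY)}{\Trap}$ is connected; since it contains both $\pa \in \PX$ and $\pb \in \PY$, some edge $\pa'\pb'$ along a path from $\pa$ to $\pb$ has $\pa' \in \PX$ and $\pb' \in \PY$ with both endpoints in $\Trap$, and it is by construction an edge of $\DG_\Trap(\PX \cup \PY)$. Applying \lemref{shrink:shrank} to $\Trap$ and the pair $\{\pa',\pb'\}$ gives the homothet $\Trap' \subseteq \Trap$ with $\pa', \pb' \in \partial \Trap'$ required by the statement.

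\textbf{Coordinates and ordering.} Place the apex $o$ of the $\epsA$-double-wedge at the origin, oriented so that $\PX$ sits in the cone opening in the $-x$ direction and $\PY$ in the opposite cone. Write $\pa = (-\alpha, a_y)$, $\pa' = (-\alpha', a'_y)$, $\pb = (\beta, b_y)$, $\pb' = (\beta', b'_y)$ with $\alpha, \alpha', \beta, \beta' \geq 0$. Each cone has half-angle at most $\epsA/2$, so every $|y|$-coordinate is bounded by the corresponding $|x|$-coordinate times $\tan(\epsA/2)$, placing the four points in a narrow horizontal strip. The $\epsA$-narrowness of $\Trap$ with $\pa, \pb$ on the two legs forces the extremal $x$-coordinates inside $\Trap$ to be realized within an $O(\epsA)\cdot\diamX{\Trap}$ neighborhood of $\pa$ and $\pb$; hence $-\alpha \leq x_{\pa'} \leq 0$ and $0 \leq x_{\pb'} \leq \beta$, and the $x$-coordinates are monotone non-decreasing along $\pa \to \pa' \to \pb' \to \pb$.

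\textbf{The inequality.} When all $y$-components are zero the distances are exactly $\dY{\pa}{\pa'} = \alpha - \alpha'$, $\dY{\pa'}{\pb'} = \alpha' + \beta'$, $\dY{\pb'}{\pb} = \beta - \beta'$ and $\dY{\pa}{\pb} = \alpha + \beta$, and the inequality collapses to the clean identity
\[
(1+\eps)(\alpha-\alpha') + (\alpha'+\beta') + (1+\eps)(\beta-\beta') \;=\; (1+\eps)(\alpha+\beta) - \eps(\alpha'+\beta') \;\leq\; (1+\eps)\dY{\pa}{\pb}.
\]
Reintroducing the $y$-components perturbs each distance by a multiplicative $1 + O(\epsA^2)$ factor whenever the corresponding $x$-gap dominates the $y$-variation, and otherwise by an additive term of order $\epsA \cdot \max(\alpha, \beta)$.

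\textbf{Main obstacle.} The delicate point is controlling this additive $\epsA$ error simultaneously on all three segments, because $\alpha-\alpha'$ (respectively $\beta-\beta'$) might be very small. Semi-separation resolves it on one side directly: WLOG $\diamX{\PX} \leq \epsA\dsY{\PX}{\PY} \leq \epsA\dY{\pa}{\pb}$, which gives $\dY{\pa}{\pa'} \leq \epsA\dY{\pa}{\pb}$ without going through coordinates. On the other side a two-case argument suffices --- either $\beta-\beta' \geq \beta\epsA$, in which case $\dY{\pb}{\pb'} \leq (\beta-\beta')(1+O(\epsA^2))$, or $\beta-\beta' < \beta\epsA$, in which case $\dY{\pb}{\pb'} \leq \sqrt{2}\,\beta\epsA$ is itself tiny and absorbed by the $\eps(\alpha'+\beta')$ slack in the identity, since $\beta' \geq \beta - \beta\epsA$ is then large. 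Choosing $\epsA = \eps/16$ leaves enough of the $\eps$-budget to cover all accumulated errors, completing the proof.
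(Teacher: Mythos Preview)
Your approach to finding the bridge edge $\pa'\pb'$ via connectedness of the restricted $\Trap$-Delaunay triangulation is the same as the paper's. The divergence is entirely in the proof of the inequality (II), and there your argument has a genuine gap.

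The problem is your choice of reference frame. You align coordinates with the double-wedge axis and try to show that all four distances are close to the corresponding $x$-gaps. But the $y$-coordinate of $\pb$ can be as large as $\beta\tan(\epsA/2)\approx\epsA L/2$, so the additive error you incur when replacing $\dY{\pb}{\pb'}$ by $\beta-\beta'$ is of order $\epsA L$ (where $L=\dY{\pa}{\pb}$). On the other hand, the only slack your one-dimensional identity produces is $\eps(\alpha'+\beta')$, which is of order $\eps\ell$ with $\ell=\dY{\pa'}{\pb'}$. Nothing in the hypotheses prevents $\ell\ll L$ --- take $\pa,\pa',\pb'$ all clustered near the wedge apex while $\pb$ sits far out on the other leg --- and in that regime an $O(\epsA L)$ error swamps an $O(\eps\ell)$ slack. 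Concretely, your Case~1 claim that $\dY{\pb}{\pb'}\le(\beta-\beta')(1+O(\epsA^2))$ is false at the threshold $\beta-\beta'=\beta\epsA$ (there the ratio is $\sqrt 2$, not $1+O(\epsA^2)$), and even if it held, the inequality $(2+\eps)\epsA L\le \eps\alpha+\eps\beta'$ that your argument reduces to simply fails when $\alpha$ and $\beta'$ are both $o(L)$.

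The paper sidesteps this by projecting $\pb'$ onto the \emph{line through $\pa$ and $\pb$} rather than onto the wedge axis. Angular separation bounds $\angle\pb'\pa\pb\le\epsA$, so the perpendicular offset of $\pb'$ from that line is at most $\dY{\pa}{\pb'}\sin\epsA\le(1+\epsA)\epsA\,\ell$ --- scaled by $\ell$, not $L$. This yields $\dY{\pb}{\pb'}\le\dY{\pa}{\pb}-(1-4\epsA)\ell$, which is exactly the bound needed for the $\eps\ell$ slack to absorb all errors. Your coordinate argument can be repaired along the same lines by rotating so that $\pa\pb$ is horizontal (rather than the wedge axis), but as written it does not go through.
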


\begin{figure}[ht]
    \phantom{}\hfill%
    \includegraphics[width=0.3\linewidth]{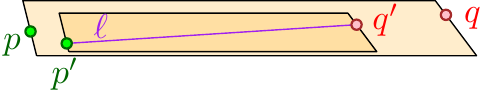}%
    \hfill%
    \includegraphics[page=2,width=0.3\linewidth]{figs/narrow_trap}%
    \hfill%
    \includegraphics[page=3,width=0.3\linewidth]{figs/narrow_trap}%
    \hfill\phantom{}%
    
    \phantom{}\hfill%
    (i)\qquad\qquad\qquad \hfill%
    (ii) \hfill%
    \qquad \qquad \qquad(iii) \hfill\phantom{}%
    
    \caption{Illustration of the settings in the proof of
       \lemref{good:jump:traps}. Left: A $\epsA$-narrow trapezoid with
       $\pa$ and $\pb$ on its legs. Center: $\pa$ and $\pb$ are
       $\epsA$-semi separated and $\epsA$-angularly separated. Right:
       The triangle of all the points of the trapezoids that their
       nearest point on $\pa \pb$ is $\pb$. }
    \figlab{narrow:trap}
\end{figure}
\begin{proof}
    Let $\DT = \DG_\Trap(\PX \cup \PY)$.  \clmref{c:t:connected}
    implies that $\DT \cap \Trap$ is connected. Thus, there is a path
    in $\DT \cap \Trap$ between $\pa$ and $\pb$, and thus, there must
    be an edge $\pa'\pb'$ along this path with $\pa' \in X$ and
    $\pb' \in Y$. This implies part (I).
    
    Let $\ell = \dY{\pa'}{\pb'}$. Assume for concreteness that
    $\dY{\pa}{\pa'} \leq \diamX{\PX} \leq \epsA \dsY{\PX}{\PY} \leq
    \epsA \ell\leq \epsA \diamC$, where $\diamC=\diamX{\Trap}$. Let
    $\pb''$ be the closest point on $\pa\pb$ to $\pb'$.
    
    We first consider the case that $\pb'' \in \interiorX{\pa\pb}$.
    We have that
    \begin{equation*}
        \dY{\pa}{\pb''}
        =%
        \dY{\pa}{\pb'} \cos \angle \pb' \pa \pb
        \geq
        \bigl(\dY{\pa'}{\pb'} - \dY{\pa}{\pa'} \bigr)
        \cos \angle \pb' \pa \pb
        \geq
        (1-\epsA) \ell \cdot (1-\epsA^2/2)
        \geq
        (1-2\epsA) \ell,
    \end{equation*}
    since $\cos \epsA \geq 1-\epsA^2 /2$, for $\epsA <1/2$.  Similar
    argumentation implies that $\dY{\pa}{\pb''} \leq (1+\epsA)
    \ell$. As such, we have
    \begin{equation*}
        \dY{\pb'}{\pb''} \leq ( 1+\epsA)\ell \sin \angle\pa'\pa \pb'
        \leq
        2 \epsA \ell.
    \end{equation*}
    Thus, we have that
    \begin{equation*}
        \dY{\pb}{\pb'}
        \leq
        \dY{\pb}{\pb''}  + \dY{\pb''}{\pb'}
        \leq%
        \dY{\pa}{\pb} -    \dY{\pa}{\pb''} + 2\epsA \ell
        \leq%
        \dY{\pa}{\pb} - (1-2\epsA)\ell + 2\epsA \ell
        \leq 
        \dY{\pa}{\pb} - \ell.
    \end{equation*}
    
    Thus, we have that
    \begin{align*}
      &(1+\eps)\dY{\pa}{\pa'} + \dY{\pa'}{\pb'} + (1+\eps)
        \dY{\pb'}{\pb}
        \leq%
        (1+\eps)\epsA \ell
        + \ell + (1+\eps)\bigl(
        \dY{\pa}{\pb} - \ell\bigr)
      \\&%
      \qquad=%
      (1+\eps)\dY{\pa}{\pb}
      +
      (1+\eps)\epsA \ell
      + \ell - (1+\eps)\ell
      \leq%
      (1+\eps)\dY{\pa}{\pb},
    \end{align*}
    for $\epsA \leq \eps/2$. Which establish the claim in this case.
    
    The case that $\pb''=\pa$ is impossible, because of the angular
    separation property. Thus, the only remaining possibility is that
    $\pb'' = \pb$. This however implies that $\pb'$ must be in the
    triangle of all the points of the trapezoids that their nearest
    point on $\pa \pb$ is $\pb$. The diameter of this triangle is
    bounded by the length of the leg of the trapezoid, which is
    bounded by $\epsA \, \diamC$. Namely, we have
    $\dY{\pb}{\pb'} \leq \epsA \diamC$. Similarly, we have
    \begin{math}
        (1 - 2\epsA) \diamC%
        \leq%
        \dY{\pa}{\pb} \leq%
        (1+2\epsA)\diamC.
    \end{math}
    Since $\dY{\pa}{\pa'}, \dY{\pb}{\pb'} \leq \epsA \diamC$, it
    follows that
    \begin{equation*}
        (1-4\epsA) \diamC%
        \leq%
        \ell%
        \leq%
        (1+4\epsA)\diamC.
    \end{equation*}
    As such, for $\epsA \leq \eps/8$ and $\eps \leq 1$, we have
    \begin{equation*}
        (1+\eps)\dY{\pa}{\pa'} + \ell + (1+\eps)
        \dY{\pb'}{\pb}
        \leq%
        4 \epsA \diamC + (1+4\epsA)\diamC
        =%
        (1+8\epsA)\diamC
        \leq
        (1+\eps) \dY{\pa}{\pb}.        
    \end{equation*}
\end{proof}

\subsubsection{Breaking a nice polygon into narrow %
   trapezoids}

For a convex polygon $\Body$, its \emphi{sensitivity}, denoted by
$\senseX{\Body}$, is the minimum distance between any two non-adjacent
edges (this quantity is no bigger than the length of the shortest edge
in the polygon).  A convex polygon $\Body$ is \emphi{$t$-nice}, if the
outer angle at any vertex of the polygon is at least $2\pi/t$, and the
length of the longest edge of $\Body$ is $\Of(\senseX{\Body})$.  As an
example, a $k$-regular polygon is $k$-nice.

\begin{lemma}
    \lemlab{narrow:traps:decomp}%
    Let $t$ be a positive integer.  Given a $t$-nice polygon $\Body$,
    and a parameter $\epsA$, one can cover it by a set $\Traps$ of
    $\Of(t^4/\epsA^3)$ $\epsA$-narrow trapezoids, such that for any
    two points $\pa, \pb \in \partial \Body$ that belong to two edges
    of $\Body$ that are not adjacent, there exists a narrow trapezoid
    $\Trap \in \Traps$, such that $\pa$ and $\pb$ are located on two
    different short legs of $\Trap$.%
\end{lemma}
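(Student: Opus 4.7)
The plan is to cover $\Body$ by constructing, for each pair of non-adjacent edges of $\Body$, a family of narrow trapezoids whose union fills the strip between them. Set $d = \senseX{\Body}$. Since $\Body$ is $t$-nice, every edge of $\Body$ has length $\Of(d)$, and the distance between any two non-adjacent edges is at least $d$.

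Fix a pair $(e_i, e_j)$ of non-adjacent edges of $\Body$, and let $d_{ij} \geq d$ be the distance between their supporting lines. I would partition each of $e_i$ and $e_j$ into sub-segments of length $\Of(\epsA \cdot d_{ij})$. For every pair $(s_i^a \subseteq e_i,\, s_j^b \subseteq e_j)$ of such sub-segments, I would build a quadrilateral $\Trap_{i,j}^{a,b}$ whose two legs contain $s_i^a$ and $s_j^b$ respectively, with bases chosen parallel in a canonical direction --- for instance, perpendicular to the bisector of the outer normals $\dir_i$ and $\dir_j$. The witness property is then immediate by construction: given any $\pa \in e_i$ and $\pb \in e_j$, pick the sub-segments $s_i^a \ni \pa$ and $s_j^b \ni \pb$, and $\Trap_{i,j}^{a,b}$ contains them on its two different short legs. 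Each such quadrilateral is $\epsA$-narrow, because its diameter is at least $d_{ij}$ while each leg has length $\Of(\epsA \cdot d_{ij})$.

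To ensure that $\bigcup \Traps$ covers $\Body$, I would argue that every interior point of $\Body$ lies in the strip between some pair of non-adjacent edges, using the $t$-nice hypothesis and convexity: through every interior point there is a chord whose endpoints lie on two non-adjacent edges of $\Body$. Slivers near $\partial\Body$ that are not automatically covered can be filled by $\Of(1/\epsA)$ additional filler trapezoids per edge pair, inflating the count only by constant factors. Counting $\Of(t^2)$ edge pairs, with each pair contributing at most $\Of(t^2/\epsA^3)$ trapezoids once one factors in the multi-scale refinement needed to simultaneously handle edge pairs at all distance scales $d_{ij}$ and to complete the cover, yields the stated bound of $\Of(t^4/\epsA^3)$.

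The hard part is actually constructing, in Step~1, a \emph{genuine} trapezoid (with strictly parallel bases) whose short legs contain two given sub-segments lying on two non-parallel supporting lines, while the trapezoid remains inside $\Body$ and retains the narrowness bound. The delicate cases are when $e_i$ and $e_j$ are nearly parallel (so the triangle obtained by extending the two edges is very long and thin, and picking a bad base direction destroys narrowness), and when a third edge of $\Body$ intrudes into the strip between $e_i$ and $e_j$ (so the naive quadrilateral escapes $\Body$). Both are handled by coupling the choice of base direction to the outer normals $\dir_i, \dir_j$ and to the local convex geometry of $\Body$, and by appealing to the $t$-nice assumption to control the blow-up.
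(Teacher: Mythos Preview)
Your approach is structurally different from the paper's, and the gap you yourself flag in the final paragraph is real and is where the argument currently breaks. Once you fix two sub-segments $s_i^a \subseteq e_i$ and $s_j^b \subseteq e_j$, you do not get to prescribe both legs of a trapezoid independently: a trapezoid with legs on the lines supporting $e_i,e_j$ and bases parallel to some direction $u$ is determined by $u$ and one base, and the second leg lands wherever the geometry sends it. When $e_i$ and $e_j$ are at a small angle, forcing the leg on $e_j$ to contain a prescribed $s_j^b$ while keeping the leg on $e_i$ short (and the whole trapezoid inside $\Body$) requires a very specific $u$, and ``coupling the base direction to the outer normals'' is a hope, not an argument. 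Your count is also not derived: partitioning each edge into $\Of(1/\epsA)$ pieces gives $\Of(1/\epsA^2)$ sub-segment pairs per edge pair, and the phrase ``multi-scale refinement'' does not explain where an extra $t^2/\epsA$ factor comes from.

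The paper sidesteps the per-pair trapezoid construction entirely. It places a set $\PS$ of $\Of(t/\epsA)$ points on $\partial\Body$ (spaced $\approx \epsA\cdot\senseX{\Body}$) and a refinement $\PSA$ of $\Of(t^2/\epsA)$ points. For each direction $u$ determined by a pair in $\PS\times\PSA$ it takes the ``vertical'' decomposition of $\Body$ in direction $u$ through the points of $\PS$: this automatically yields trapezoids contained in $\Body$ with bases parallel to $u$ and legs on $\partial\Body$ between consecutive points of $\PS$, hence of length $\Of(\epsA\cdot\senseX{\Body})$; after discarding those whose legs sit on adjacent edges, the survivors are $\epsA$-narrow. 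The witness property becomes a projection argument: for $\pa,\pb$ on non-adjacent edges, bracket $\pa$ by $\pa_1,\pa_2\in\PS$ and $\pb$ by $\pb_1,\pb_2\in\PSA$ and take $u=\overrightarrow{\pa_1\pb_2}$; the extra factor of $t$ in the density of $\PSA$ is exactly what guarantees that the projection of $\pa_1\pa_2$ onto the line of $\pb$ in direction $u$ swallows $\pb_1\pb_2$, so $\pa$ and $\pb$ lie on opposite legs of a single trapezoid in that decomposition. The count is $|\PS|\cdot|\PS\times\PSA|=\Of(t/\epsA)\cdot\Of(t^3/\epsA^2)=\Of(t^4/\epsA^3)$.
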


\begin{proof}
    We show a somewhat suboptimal but simple construction. A $t$-nice
    polygon has at most $t$ edges.  Let $\psi$ be the sensitivity of
    $\Body$, and place a minimum set of points $\PS$ on the boundary
    of $\Body$, which includes all the vertices of $\Body$, and such
    that the distance between any consecutive pair of points is in the
    range $[\constA, 2\constA]$, where
    $\constA = \epsA \psi/ \constB$, for some sufficiently large
    constant $\constB$. In particular, let
    $M =\max_{e \in \EGX{\Body}} \ceil{\| e\| / \constA} = \Of(
    1/\epsA)$.
    
    In addition, place $\constC t$ equally spaced points between any
    two consecutive points of $\PS$, where $\constC$ is a constant to
    be determined shortly. Let $\PSA$ be the set resulting from $\PS$
    after adding all these points.
    
    \begin{figure}[h]
        \phantom{}%
        \hfill%
        \includegraphics[page=2,width=0.3\linewidth]{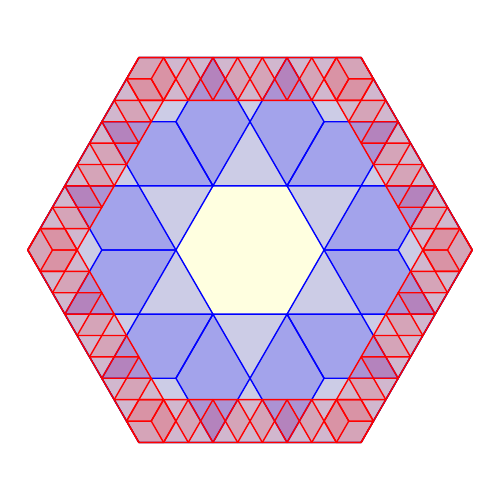}
        \hfill%
        \includegraphics[page=3,width=0.3\linewidth]{figs/decompose}
        \hfill%
        \includegraphics[page=1,width=0.3\linewidth]{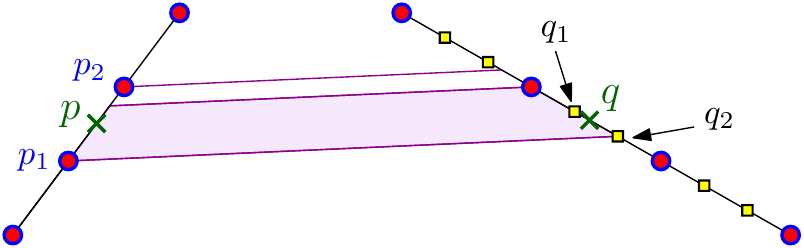}
        \hfill%
        \phantom{}%
        \caption{The points of $\PS$ (round), and all the points added
           to $\PS$ in order to create $\PSA$ (square). On the right,
           a ``vertical'' decomposition induced by one of the
           directions of $\PS \times \PSA$.}
        \figlab{v:decompose}
    \end{figure}
    
    We have that $|\PS| = \Of( t /\epsA)$ and
    $|\PSA| = \Of(t^2/\epsA)$. For a direction $v$, let $\Traps_v$ be
    the decomposition into trapezoids formed by shooting rays from
    inside $\Body$ in the direction of $v$ (or $-v$) from all the
    points of $\PS$, see \figref{v:decompose}. Let $\Traps_v'$ be the
    set resulting from throwing away trapezoids with legs that lie on
    adjacent edges.  It is easy to verify that all the trapezoids of
    $\Traps_v'$ are $\epsA$-narrow.  Let $U$ be the set of all
    directions induced by pairs of points of $\PS \times \PSA$, and
    let $\Traps = \cup_{u \in U} \Traps_u'$. We have that
    $|\Traps| = \Of( |\PS| \cdot |U| ) = \Of(|\PS|^2 |\PSA| ) =\Of(
    t^4 /\epsA^3)$.
    
    Consider any two points $\pa, \pb$ on non-adjacent edges of
    $\Body$, and let $\pa_1, \pa_2 $ be the two adjacent points of
    $\PS$ such that $\pa \in \pa_1 \pa_2$.  Now, let $\pb_1, \pb_2$ be
    the adjacent points of $\PSA$ such that $\pb \in \pb_1 \pb_2$.  We
    assume that $\pa_1, \pa_2, \pb_1,\pb_2$ are in this clockwise
    order along the boundary of $\Body$.
    
    Observe that when we project the interval $\pa_1 \pa_2$, to the
    line induced by $\pb_1 \pb_2$, in the direction
    $\overrightarrow{\pa_1 \pb_2}$, the projected interval contains
    $\pb_1 \pb_2$.  The last claim is intuitively obvious, but
    requires some work to see formally. The minimum height of a
    triangle involving three vertices of $\Body$ is formed by three
    consecutive vertices. In the worst case, this is an isosceles
    triangle with sidelength $\psi$ and base angle $\pi/t$. As such,
    the height of such a triangle is
    $h = \psi \sin( \pi/t) \geq \psi/t$.

    \begin{figure}[h]
        \includegraphics[page=2,width=0.23\linewidth]{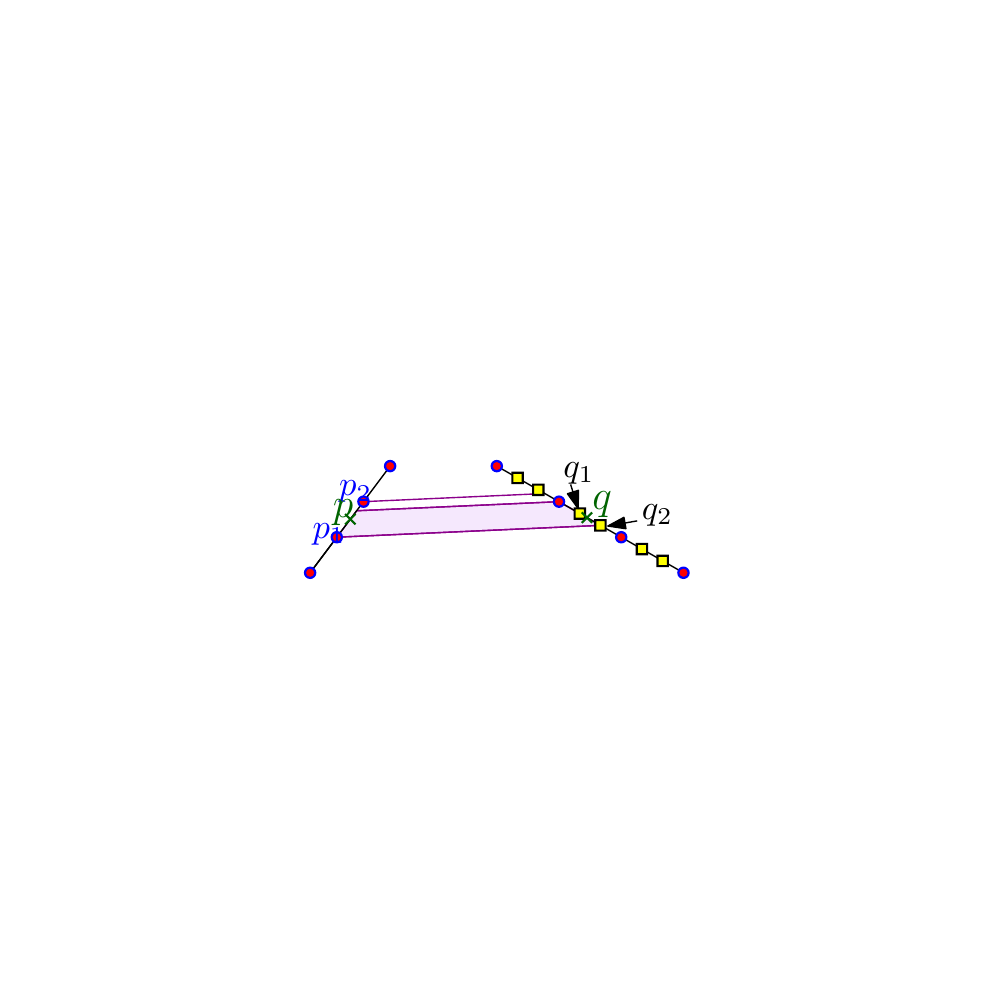}%
        \hfill%
        \includegraphics[page=3,width=0.23\linewidth]{figs/points_trap_2}
        \hfill%
        \includegraphics[page=4,width=0.23\linewidth]{figs/points_trap_2}
        \hfill%
        \includegraphics[page=5,width=0.23\linewidth]{figs/points_trap_2}
        \caption{The height of the triangle
           $\triangle \pa_1 \pa_2 \pb_2$ is minimized as $\pb_2$ and
           $\pa_1$ are moved to vertices of $\Body$.}%
        \figlab{points:move}
    \end{figure}
    
    The height of the triangle $\triangle \pa_1 \pa_2 \pb_2$ is
    minimized when $\pa_1$ or $\pa_2$ is a vertex of $\Body$, and
    $\pb_2$ is at a vertex of $\Body$, see
    \figref{points:move}. Assume, for concreteness, that $\pa_1$ is a
    vertex of $\Body$, and observe that
    $\dY{\pa_1}{\pa_2} \geq \| e\|/M$, where $e$ is the edge of
    $\Body$ containing this segment. Using similar triangles, it is
    straightforward to show that the height of this triangle is at
    least $h' = h/M = \Omega( \eps \psi/t )$. The quantity $h'$ is a
    lower bound on the length of the projection of $\pa_1 \pa_2$ on
    the line spanned by $\pb_1 \pb_2$. However,
    $\dY{\pb_1}{\pb_2} \leq 2\constA/\constC t = \Of( \epsA \psi
    /\constC t) < h'$, by picking $\constC$ to be sufficiently large
    constant.
    
    This readily implies that the trapezoid induced by the direction
    $ u = \overrightarrow{\pa_1 \pb_2}$ in $\Traps_u'$ that contains
    $\pa$ on its leg, contains $\pb$ on its other leg.
\end{proof}

\subsubsection{Constructing the local spanner for nice %
   polygons}

\begin{theorem}
    \thmlab{k:gon}%
    Let $\Body$ be a $k$-nice convex polygon, $\PS$ be a set of $n$
    points in the plane, and let $\eps \in (0,1)$ be a
    parameter. Then, one can construct a $\CC$-local
    $(1+\eps)$-spanner of $\PS$.  The construction time is
    $\Of\pth{\bigl. ({k^4}/{\eps^6})n \log^2 n}$, and the resulting
    graph has $\Of\pth{\bigl. ( {k^4}/{\eps^6})n \log n}$ edges. In
    particular these bounds hold if $\Body$ is a $k$-regular polygon.
\end{theorem}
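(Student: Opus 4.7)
The plan is to combine the narrow-trapezoid jump of \lemref{good:jump:traps} with per-vertex fat-triangle local spanners built via \thmref{l:s:triangle}. After shrinking the enclosing homothet $C$ to a sub-homothet $C'\subseteq C$ with $\pa,\pb\in\partial C'$ using \lemref{shrink:shrank}, there are two cases: if $\pa,\pb$ lie on two non-adjacent edges of $C'$, then the decomposition of \lemref{narrow:traps:decomp} provides a narrow trapezoid through them that hosts a good jump; otherwise the pair lies in a homothet of the triangle $\triangle_v = v\,v_1\,v_2$ associated with the shared vertex $v$ of $\Body$ (with $v_1,v_2$ the neighbors of $v$), which for a $k$-nice $\Body$ is $\Omega(1/k)$-fat.

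\textbf{Construction.} Let $\epsA = \eps/16$. \emph{(a)} Compute an $(1/\epsA)$-\SSPD $\WS$ of $\PS$ that is $\epsA$-angularly separated via \corref{S:S:P:D:angular}, of weight $\Of(n\epsA^{-3}\log n)$. \emph{(b)} Apply \lemref{narrow:traps:decomp} to $\Body$ with parameter $\epsA$, obtaining $\Traps$ with $|\Traps|=\Of(k^4/\epsA^3)$. \emph{(c)} For every pair $\Pair=\{\PX,\PY\}\in\WS$ and every $\Trap\in\Traps$, compute $\DG_\Trap(\PX\cup\PY)$ via \thmref{c:t:construct_delaunay_graph} and add the edges of $\DG_\Trap(\PX\cup\PY)\cap(\PX\otimes\PY)$ to $\G$. \emph{(d)} For each vertex $v$ of $\Body$ build a $\triangle_v$-local $(1+\eps)$-spanner via \thmref{l:s:triangle} (with $\alpha=\Omega(1/k)$) and union its edges into $\G$. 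The size bound $\Of((k^4/\eps^6)\,n\log n)$ follows from $|\Traps|\cdot\WeightX{\WS}$ dominating $k\cdot\Of((k/\eps)n)$; the construction time is an additional $\log n$ factor from building the Delaunay triangulations.

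\textbf{Correctness.} Fix a homothet $C$ of $\Body$ and $\pa,\pb\in\PS\cap C$; the proof proceeds by induction on the rank of $\dY{\pa}{\pb}$ among pairwise distances in $\PS$. Pass via \lemref{shrink:shrank} to a homothet $C'\subseteq C$ of $\Body$ with $\pa,\pb\in\partial C'$. If $\pa,\pb$ lie on the same edge or on two adjacent edges of $C'$ meeting at a vertex $v_{C'}$, then the homothet of $\triangle_v$ anchored at $v_{C'}$ of minimum scale containing $\pa,\pb$ has scaling factor at most that of $C'$ (since all relevant point-to-vertex distances are bounded by the corresponding edge lengths in $C'$), hence fits inside $C'$; step (d) then yields the required $(1+\eps)$-path inside $\restrictY{\G}{C}$. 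Otherwise \lemref{narrow:traps:decomp} applied to $C'$ gives an $\epsA$-narrow trapezoid $\Trap_0\subseteq C'$ with $\pa,\pb$ on its two legs. Let $\{\PX,\PY\}\in\WS$ be the pair with $\pa\in\PX, \pb\in\PY$. Since $\Trap_0$ is a homothet of some $\Trap\in\Traps$ and homothety preserves the test-shape family, $\DG_{\Trap_0}=\DG_\Trap$, so \lemref{good:jump:traps} produces a bridge edge $\pa'\pb'\in\EGX{\G}$ with $\pa',\pb'\in\Trap_0\subseteq C$ and
\begin{equation*}
(1+\eps)\dY{\pa}{\pa'}+\dY{\pa'}{\pb'}+(1+\eps)\dY{\pb'}{\pb}\leq (1+\eps)\dY{\pa}{\pb}.
\end{equation*}
Since $\dY{\pa'}{\pb'}>0$, both $\dY{\pa}{\pa'}$ and $\dY{\pb'}{\pb}$ are strictly smaller than $\dY{\pa}{\pb}$, so the inductive hypothesis (applied inside $C$) supplies $(1+\eps)$-paths from $\pa$ to $\pa'$ and from $\pb'$ to $\pb$ in $\restrictY{\G}{C}$; concatenating them with $\pa'\pb'$ yields the desired path.

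\textbf{Main obstacle.} The delicate point is the implicit direction-matching required by \lemref{good:jump:traps}: the \SSPD{} pair's angular-separation direction must align with the axis of the narrow trapezoid. Iterating step (c) over every (trapezoid, pair) combination sidesteps this: whenever $\pa,\pb$ happen to lie on the two legs of a homothet of some $\Trap\in\Traps$ while sharing an \SSPD{} pair, the separation direction of that pair aligns with $\Trap$'s axis automatically, and the precomputed Delaunay triangulation of $\PX\cup\PY$ already contains the needed bridge. A secondary technicality is checking that the translation-and-scaling-only family of $\triangle_v$ covers every adjacent-edge configuration in every homothet $C$ of $\Body$; this follows because any such configuration is the homothet image of an analogous configuration in $\Body$ anchored at $v$, and $\triangle_v\subseteq\Body$ by construction.
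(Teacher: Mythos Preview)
Your proof follows essentially the same route as the paper: an angularly-separated \SSPD combined with the narrow-trapezoid cover of \lemref{narrow:traps:decomp} handles pairs on non-adjacent edges via \lemref{good:jump:traps}, while a family of fat-triangle local spanners (\thmref{l:s:triangle}) handles the remaining boundary configurations; correctness goes by induction on distance rank after shrinking with \lemref{shrink:shrank}.

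The one substantive difference is the fat-triangle family. The paper takes the $\Of(k^{2})$ triangles spanned by a vertex of $\Body$ and a \emph{non-adjacent} edge, and its case (A) is ``one point at a vertex of the shrunken homothet, the other on a (non-adjacent) edge''. You instead take the $k$ triangles $\triangle_v = v\,v_1\,v_2$ spanned by each vertex and its two neighbours, and your corresponding case is ``the two points lie on the same edge or on two adjacent edges''. Your variant is a mild simplification: the case split falls out directly from \lemref{shrink:shrank} without any further shrinking, and only $k$ rather than $\Of(k^{2})$ triangle spanners are built (this does not affect the stated bounds, since the trapezoid part dominates). The fatness of $\triangle_v$ for a $k$-nice polygon does hold---the angle at $v$ is bounded away from $0$ by the sensitivity condition (a tiny interior angle would force two non-adjacent edges to be much closer than the longest edge) and away from $\pi$ by the outer-angle condition, whence the base angles are $\Omega(1/k)$---but it would be worth making this explicit, since it is not entirely immediate from the definition.
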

\begin{proof}
    Let $\epsA = \eps/\constD$, for $\constD$ sufficiently large
    constant.  We construct $\Triangles$, a family of triangles
    induced by a vertex of $\Body$, and an non-adjacent edge of
    $\Body$. This family has $\Of(k^2)$ triangles. Each such triangle
    is $\Omega(1/k)$-fat, and for each such triangle we construct the
    $(1+\epsA)$-spanner of \thmref{l:s:triangle} for \PS. Next, we
    cover $\CC$ by a set $\Traps$ of $k' = \Of(k^4/\epsA^3)$
    $\epsA$-narrow trapezoids using \lemref{narrow:traps:decomp}.

    We compute an $\epsA$-angular $(1/\epsA)$-\SSPD $\WS$
    decomposition of $\PS$ using \corref{S:S:P:D:angular} -- the total
    weight of the decomposition is
    $w = \Of\pth{ n \epsA^{-3} \log n }$. For each pair
    $\{\PX, \PY\} \in \WS$, and each trapezoid $\Trap \in \Traps$, we
    compute the $\Trap$-Delaunay triangulation of $\PX \cup \PY$.

    Let $\G$ denote the union of all these graphs. We claim that it is
    the desired spanner.  The construction time is
    \begin{equation*}
        \Of(( k^3 /\epsA) n \log n + k' w \log n)
        =%
        \Of\pth{
           \frac{k^3 }{\epsA} n \log n + \frac{k^4}{\epsA^3} \cdot
           \frac{n}{\epsA^{3}} \log n \cdot \log n
        }%
        =%
        \Of\pth{ \frac{k^4}{\epsA^6}n \log^2 n},%
    \end{equation*}
    and the resulting graph has
    $\Of\pth{\bigl. ( {k^4}/{\epsA^6})n \log n}$ edges.
    
    As for correctness, consider a homothet $\Body'$ of $\Body$ that
    contains two points $\pa, \pb \in \PS$. By \lemref{shrink:shrank},
    there is a homothet $\Body'' \subseteq \Body'$ of $\Body$ such
    that $\pa, \pb \in \partial \Body''$. There are two possibilities:

    \smallskip%
    \noindent
    \textbf{(A)} The point $\pa$ is on a vertex of $\Body''$ and $\pb$
    is on an edge. In this case, the vertex and the edge induce a fat
    triangle, that is a homothet of a triangle
    $\triangle \in \Triangles$. Since the graph $\G$ contains a
    $\triangle$-local $(1+\eps)$-spanner for $\PS$, it follows readily
    that $\G$ is a $(1+\eps)$-spanner for these points, and the path
    is strictly inside $\Body''$.

    \smallskip%
    \noindent%
    \textbf{(B)} %
    The points $\pa$ and $\pb$ are on two non-adjacent edges of
    $\Body''$. Then, there is an $\epsA$-narrow trapezoid $\Trap'$
    that has $\pa$ and $\pb$ on its two legs, and a homothet of
    $\Trap'$, denoted by $\Trap$, is in $\Traps$. There is a pair
    $\{\PX, \PY\} \in \WS$ that is $(1/\epsA)$-semi separated (and
    $\epsA$-angularly separated), such that $\pa \in \PX$ and
    $\pb \in \PY$.  By \lemref{good:jump:traps}, there are two points
    $\pa' \in \PX$ and $\pb' \in \PY$, such that $\pa'\pb'$ is an edge
    of the $\Trap$-Delaunay triangulation of $\PX \cup \PY$, and by
    construction this edge is in $\G$. We now use induction on the
    shortest paths from $\pa$ to $\pa'$ and from $\pb$ to $\pb'$ in
    $\G$.  By induction, and \lemref{good:jump:traps}, we have that
    \begin{equation*}
        \dGY{\pa}{\pb} %
        \leq%
        \dGY{\pa}{\pa'} + \dY{\pa'}{\pb'} + \dGY{\pb'}{\pb} 
        \leq             
        (1+\eps)\dY{\pa}{\pa'} + \dY{\pa'}{\pb'} + (1+\eps)
        \dY{\pb'}{\pb}            
        \leq%
        (1+\eps) \dY{\pa}{\pb},
    \end{equation*}
    which implies that the there is $(1+\eps)$-path from $\pa$ to
    $\pb$ inside $\Body'$.
\end{proof}

\begin{remark}
    \remlab{improved}%
    For axis-parallel squares \thmref{k:gon} implies a local spanner
    with $\Of\pth{\eps^{-6} n \log n}$ edges.  However, for this
    special case, the decomposition into narrow trapezoid can be
    skipped. In particular, in this case, the resulting spanner has
    $\Of( \eps^{-3} n \log n)$ edges. We do not provide the details
    here, as it is only a minor improvement over the above, and
    requires quite a bit of additional work -- essentially, one has to
    prove a version of \lemref{good:jump:traps} for squares.
\end{remark}

\section{Weak local spanners for axis-parallel rectangles}
\apndlab{rectangles}

\subsection{Quadrant separated pair decomposition}
\apndlab{qspd}

For two points $\pa = (p_1, \ldots, p_d)$ and
$\pb = (q_1, \ldots, q_d)$ in $\Re^d$, let $\pa \prec \pb$ denotes
that $\pb$ \emphi{dominates} $\pa$ coordinate-wise. That is
$p_i < q_i$, for all $i$. More generally, let $\pa <_i \pb$ denote
that $\pa_i < \pb_i$. For two point sets $\PSX, \PSY \subseteq \Re^d$,
we use $\PSX <_i \PSY$ to denote that
$\forall \px \in \PSX, \py \in \PSY \quad \px <_i \py$.  In particular
$\PSX$ and $\PSY$ are \emphw{$i$-coordinate separated} if
$\PSX <_i \PSY$ or $\PSY <_i \PSX$. A pair $\{ \PSX, \PSY\}$ is
\emphi{quadrant-separated}, if $\PSX$ and $\PSY$ are $i$-coordinate
separated, for $i=1,\ldots, d$.

A \emphi{quadrant-separated pair decomposition} of a point set
$\PS \subseteq \Re^d$, is a pair decomposition (see
\defref{pair:decomposition})
$\WS = \bigl\{ \{ \PSX_1, \PSY_1 \}, \ldots, \{ \PSX_s, \PSY_s \}
\bigr\}$ of $\PS$, such that $\{ \PSX_i, \PSY_i\}$ are
quadrant-separated for all $i$.

\begin{lemma}
    \lemlab{d:1}%
    Given a set $\PS$ of $n$ points in $\Re$, one can compute, in
    $\Of( n \log n)$ time, a \QSPD of $\PS$ with $\Of(n)$ pairs, and
    of total weight $\Of( n \log n)$.
\end{lemma}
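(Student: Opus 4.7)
The plan is a straightforward sort-and-divide-and-conquer. First, sort the points of $\PS$ in $\Of(n \log n)$ time so that
$\pa_1 <_1 \pa_2 <_1 \cdots <_1 \pa_n$. In one dimension the quadrant-separation condition reduces to $\PSX <_1 \PSY$ or $\PSY <_1 \PSX$, so any pair $\{\PSX,\PSY\}$ obtained by cutting the sorted sequence at a single index is automatically quadrant-separated.

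Next, I would run the following recursive procedure on a contiguous subsequence $\PS' = \{\pa_\ell, \ldots, \pa_r\}$. If $|\PS'| \leq 1$, do nothing. Otherwise let $m = \lfloor (\ell + r)/2 \rfloor$, set $L = \{\pa_\ell, \ldots, \pa_m\}$ and $R = \{\pa_{m+1}, \ldots, \pa_r\}$, emit the single pair $\{L, R\}$ into the output $\WS$, and recurse on $L$ and on $R$. Every unordered pair $\{\pa_i, \pa_j\} \in \PS \otimes \PS$ with $i<j$ is covered exactly once, at the deepest level of recursion whose subsequence contains both $\pa_i$ and $\pa_j$; at that level the median split necessarily separates them, so they lie in $L \otimes R$. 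Thus $\WS$ is a valid pair decomposition, and by construction each output pair is quadrant-separated.

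For the size and weight bounds, the number of emitted pairs satisfies $T(n) = 2T(n/2) + 1 = \Of(n)$, and the total weight satisfies $W(n) = 2W(n/2) + n = \Of(n \log n)$, since each pair contributes $|L|+|R|$ equal to the size of the current subproblem. The running time is dominated by the initial sort plus the work of enumerating the pairs; storing each $L$ and $R$ as an index range $[\ell, m]$, $[m+1, r]$ makes the recursion itself run in time proportional to the number of pairs, i.e., $\Of(n)$, so the overall time is $\Of(n \log n)$.

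There is no real obstacle here; the only thing to be slightly careful about is the accounting of the weight, which uses the standard master-theorem analysis for the $W(n) = 2W(n/2) + n$ recurrence. This one-dimensional lemma will serve as the base case for the higher-dimensional \QSPD construction that follows.
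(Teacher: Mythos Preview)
Your proof is correct and follows essentially the same median-split divide-and-conquer as the paper's own argument; the only cosmetic difference is that you sort once up front and then work with index ranges, whereas the paper phrases each step as a median computation. The size and weight recurrences you give are exactly those implicit in the paper's proof.
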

\begin{proof}
    If $\PS$ is a singleton then there is nothing to do. If
    $\PS = \{ \pa, \pb \}$, then the decomposition is the pair formed
    by the two singleton points.
	
    Otherwise, let $x$ be the median of $\PS$, such that
    $\PS_{\leq x} = \Set{ \pa \in \PS }{\pa \leq x}$ contains exactly
    $\ceil{n/2}$ points, and $\PS_{> x} = \PS \setminus \PS_{\leq x}$
    contains $\floor{n/2}$ points. Construct the pair
    $\Pair = \{ \PS_{\leq x}, \PS_{> x} \}$, and compute recursively a
    \QSPD{}s $\QS_{\leq x}$ and $\QS_{> x}$ for $\PS_{\leq x}$ and
    $\PS_{> x}$, respectively. The desired \QSPD is
    \begin{math}
	\QS_{\leq x} \cup \QS_{> x} \cup \{ \Pair \}.
    \end{math}
    The bounds on the size and weight of the desired \QSPD are
    immediate.
\end{proof}

\begin{lemma}
    Given a set $\PS$ of $n$ points in $\Re^d$, one can compute, in
    $\Of( n \log^d n)$ time, a \QSPD of $\PS$ with
    $\Of(n \log^{d-1} n)$ pairs, and of total weight
    $\Of( n \log^d n)$.
\end{lemma}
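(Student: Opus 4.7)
The plan is to proceed by induction on the dimension $d$, using \lemref{d:1} as the base case $d=1$. Under the standard general-position assumption (no two points share any coordinate value), any two distinct points of $\PS$ are quadrant-separated, so a \QSPD covering all of $\PS \otimes \PS$ does exist.

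For the inductive step, first invoke \lemref{d:1} on the projection of $\PS$ onto the first coordinate, obtaining a $1$-dimensional \QSPD
\begin{math}
\QS_1 = \bigl\{\{A_1, B_1\}, \ldots, \{A_s, B_s\}\bigr\}
\end{math}
with $\sum_i (|A_i|+|B_i|) = \Of(n \log n)$; assume WLOG that $A_i <_1 B_i$. For each $i$, project the points of $A_i \cup B_i$ onto the last $d-1$ coordinates and apply the induction hypothesis to obtain a $(d-1)$-dimensional \QSPD $\QS_i'$ of this set. For every pair $\{X, Y\} \in \QS_i'$, add the two refined pairs $\{X \cap A_i,\, Y \cap B_i\}$ and $\{X \cap B_i,\, Y \cap A_i\}$ to the output \QSPD $\QS$.

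Correctness splits into two checks. Each output pair is quadrant-separated in all $d$ coordinates: separation along coordinate $1$ follows from $A_i <_1 B_i$, and separation along coordinates $2,\ldots, d$ is inherited from $\{X, Y\} \in \QS_i'$. For coverage, fix any pair $\pa, \pb \in \PS$; by the 1D \QSPD they lie in a unique pair $\{A_i, B_i\}$, say $\pa \in A_i$ and $\pb \in B_i$. In $\QS_i'$, the pair $(\pa, \pb)$ is covered by some $\{X, Y\}$, say with $\pa \in X$ and $\pb \in Y$, so the refined pair $\{X \cap A_i,\, Y \cap B_i\}$ contains both; the second split handles the symmetric case where the $(d-1)$-dim assignment places $\pa \in Y$ and $\pb \in X$.

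The quantitative bounds follow by plugging into the recurrence. Writing $w(n,d)$, $p(n,d)$, $c(n,d)$ for the weight, number of pairs, and construction time, the construction gives $w(n,d) \leq 2 \sum_i w(|A_i|+|B_i|,\,d-1) = \Of\bigl(\sum_i (|A_i|+|B_i|) \log^{d-1} n \bigr) = \Of(n \log n \cdot \log^{d-1} n) = \Of(n \log^d n)$, and analogously $p(n,d) = \Of(n\log^{d-1} n)$ and $c(n,d) = \Of(n \log^d n)$ (the split step runs in time linear in the weight of $\QS_i'$). The main subtlety, rather than a genuine obstacle, is keeping the two splits straight: the split $\{X \cap A_i,\, Y \cap B_i\}$ accounts for pairs whose $(d-1)$-dim placement agrees with the 1D direction $A_i <_1 B_i$, while $\{X \cap B_i,\, Y \cap A_i\}$ catches those where it disagrees, and both are required for full coverage of $\PS \otimes \PS$.
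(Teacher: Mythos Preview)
Your proof is correct and rests on the same underlying idea as the paper, but the recursion is organized differently. The paper interleaves two recursions: at each step it does a single median split in the $d$\th coordinate (recursing on the two halves in dimension $d$) and simultaneously computes a $(d{-}1)$-dimensional \QSPD on the projection of the whole current set, lifting and splitting each pair by above/below. You instead carry out the entire median-split recursion on coordinate~$1$ up front---which is exactly what \lemref{d:1} hands you---and then recurse in dimension $d{-}1$ on each resulting pair $A_i\cup B_i$. If one unrolls the paper's recursion on $n$, one obtains precisely your structure (up to which coordinate is peeled off first), so the two constructions yield the same decomposition. Your presentation has the advantage that the induction is purely on $d$, with \lemref{d:1} used as a black box, and the bounds follow directly from $\sum_i(|A_i|+|B_i|)=\Of(n\log n)$ rather than from the two-variable recurrence $N(n,d)=2N(n,d-1)+2N(n/2,d)$ that the paper solves.
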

\begin{proof}
    The construction algorithm is recursive on the dimensions, using
    the algorithm of \lemref{d:1} in one dimension.
	
    The algorithm computes a value $\alpha_d$ that partitions the
    values of the points' $d$\th coordinates roughly equally (and is
    distinct from all of them), and let $h$ be a hyperplane parallel
    to the first $d-1$ coordinate axes, and having value $\alpha_d$ in
    the $d$\th coordinate.
	
    Let $\PS_{\uparrow}$ and $\PS_{\downarrow}$ be the subset of
    points of $\PS$ that are above and below $h$, respectively. The
    algorithm recursively computes \QSPD{}s $\QS_\uparrow$ and
    $\QSdown$ for $\PSup$ and $\PS_{\downarrow}$, respectively.  Next,
    the algorithm projects the points of $\PS$ on $h$, let $\PS'$ be
    the resulting $d-1$ dimensional point set (after we ignore the
    $d$\th coordinate), and recursively computes a \QSPD $\QS'$ for
    $\PS'$.

    For a point set $\PSX' \subseteq \PS'$, let $\liftX{\PSX'}$ be the
    subset of points of $\PS$ whose projection on $h$ is $\PSX'$.  The
    algorithm now computes the set of pairs
    \begin{equation*}
	\widehat{\QS}%
	=%
	\Set{\Bigl.
           \{ \liftX{ \PSX'} \cap \PSup , \liftX{ \PSY' } \cap \PSdown
           \}, \,\,
           \{ \liftX{ \PSX'} \cap \PSdown , \liftX{ \PSY' } \cap \PSup
           \}
	}%
	{ \{ \PSX', \PSY' \} \in \QS'}     .
    \end{equation*}
    The desired \QSPD is $\widehat{\QS} \cup \QSup \cup \QSdown$.
	
    To observe that this is indeed a \QSPD, observe that all the pairs
    in $\QSup, \QSdown$ are quadrant separated by induction. As for
    pairs in $\widehat{\QS}$, they are quadrant separated in the first
    $d-1$ coordinates by induction on the dimension, and separated in
    the $d$ coordinate since one side of the pair comes from $\PSup$,
    and the other side from $\PSdown$.
	
    As for coverage, consider any pair of points $\pa, \pb \in \PS$,
    and observe that the claim holds by induction if they are both in
    $\PSup$ or $\PSdown$. As such, assume that $\pa \in \PSup$ and
    $\pb \ni \PSdown$. But then there is a pair
    $\{\PSX', \PSY'\} \in \QS'$ that separates the two projected
    points in $h$, and clearly one of the two lifted pairs that
    corresponds to this pair quadrant-separates $\pa$ and $\pb$ as
    desired.
	
    The number pairs in the decomposition is
    \begin{math}
	N(n,d) = 2N(n,d-1) + 2N\pth{ n/2, d }
    \end{math}
    with $N(n,1) = \Of(n)$. The solution to this recurrence is
    $N(n,d) = \Of( n \log^{d-1} n)$.  The total weight of the
    decomposition is
    \begin{math}
	W(n,d) = 2W(n,d-1) + 2W\pth{ n/2, d }
    \end{math}
    with $W(n,1) = \Of(n \log n)$. The solution to this recurrence is
    $W(n,d) = \Of( n \log^{d} n)$. Clearly, this also bounds the
    construction time.
\end{proof}

\subsection{Weak local spanner for axis parallel rectangles}
\apndlab{w:l:s:rect}

For a parameter $\delta \in (0,1)$, and an interval $I = [b,c]$, let
$(1-\delta)I = [t - (1-\delta)r, t+ (1-\delta)r]$, where
$t = (b+c)/2$, and $r = (c-b)/2$, be the shrinking of $I$ by a factor
of $1-\delta$.

Let $\Rects$ be the set of all axis parallel rectangles in the
plane. For a rectangle $\rect \in \Rects$, with $\rect = I \times J$,
let $(1-\delta)\rect = (1-\delta)I \times (1-\delta)J$ denote the
rectangle resulting from shrinking $\rect$ by a factor of $1-\delta$.

\begin{defn}
    Given a set $\PS$ of $n$ points in the plane, and parameters
    $\eps, \delta \in (0,1)$, a graph $\G$ is a
    \emphw{$(1-\delta)$-local $(1+\eps)$-spanner} for rectangles, if
    for any axis-parallel rectangle $\rect$, we have that
    $\restrictY{\G}{\rect}$ is a $(1+\eps)$-spanner for all the points
    in $\bigl((1-\delta)\rect\bigr) \cap \PS$.
\end{defn}

Observe that rectangles in $\Rects$ might be quite ``skinny'', so the
previous notion of shrinkage used before is not useful in this case.

\subsubsection{Construction for a single quadrant separated pair}
\seclab{pair:edges}

\begin{figure}[t]
    \centering%
    \includegraphics{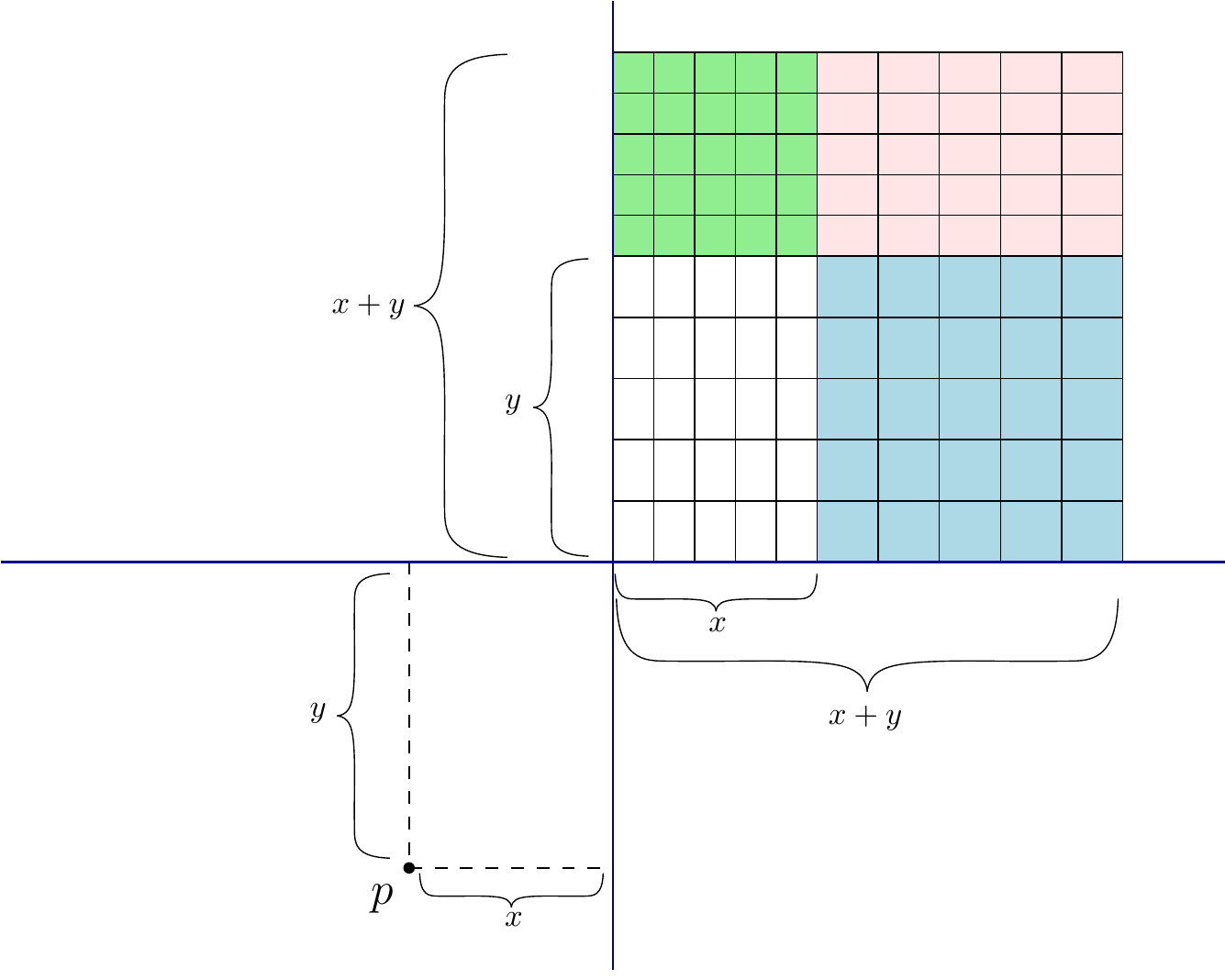}%
	\caption{The construction of the grid $\grid(\pa,\Pair)$ for a
		point $\pa=(-x,-y)$ and a pair $\Pair$.
             }
             \figlab{grid}%
         \end{figure}

         Consider a pair $\Pair = \{\PSX, \PSY \}$ in a \QSPD of
         $\PS$. The set $\PSX$ is quadrant-separated from $\PSY$. That
         is, there is a point $\cen_\Pair$, such that $\PSX$ and
         $\PSY$ are contained in two opposing quadrants in the
         partition of the plane formed by the vertical and horizontal
         line through $\cen_\Pair$.

         For simplicity of exposition, assume that
         $\cen_\Pair = (0,0)$, and $\PSX \prec (0,0 ) \prec
         \PSY$. That is, the points of $\PSX$ are in the negative
         quadrant, and the points of $\PSY$ are in the positive
         quadrant.

         We construct a non-uniform grid $\grid(\pa, \Pair)$ in the
         square $[0,x+y]^2$.  To this end, we first partition it into
         four subrectangles
         \begin{equation*}
             \begin{array}{l|l}
               \rectA_\nwarrow = [0,x] \times [y,x+y]
               &
                 \rectA_\nearrow = [x,x+y] \times [y,x+y]\Bigr.\\
               \hline
               \rectA_{\swarrow} = [0,x]\times [0,y]
               &
                 \rectA_\searrow = [x,x+y] \times [0,y].\Bigr.\\
             \end{array}
         \end{equation*}

         Let $\gConst \geq 4/\eps + 4/\delta$ be an integer number.
         We partition each of these rectangles into a
         $\gConst \times \gConst$ grid, where each cell is a copy of
         the rectangle scaled by a factor of $1/\gConst$.  See
         \figref{grid}. This grid has $\Of(\gConst^2)$ cells. For a
         cell $\cell$ in this grid, let $\PSY \cap \cell$ be the
         points of $\PSY$ contained in it. We connect $\pa$ to the
         left-most and bottom-most points in $\PSY \cap \cell$. This
         process generates two edges in the constructed graph for each
         grid cell (that contains at least two points), and
         $\Of( \gConst^2)$ edges overall.

         The algorithm repeats this construction for all the points
         $\pa \in \PSX$, and does the symmetric construction for all
         the points of $\PSY$.

         \subsubsection{The construction algorithm}

         The algorithm computes a \QSPD $\WS$ of $\PS$. For each pair
         $\Pair \in \WS$, the algorithm generates edges for $\Pair$
         using the algorithm of \secref{pair:edges} and adds them to
         the generated spanner $\G$.

\begin{figure}
    \phantom{}\hfill%
    \includegraphics[page=1]{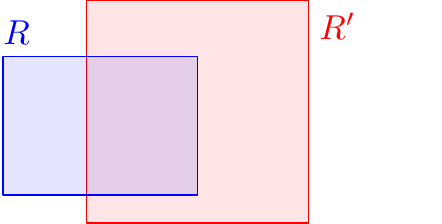}%
    \hfill%
    \includegraphics[page=2]{figs/expand}%
    \hfill\phantom{}%
    \caption{Left: The two rectangles $\rect$, $\rect'$. Right: In
       green $\xSlabX{\rect} \cap \rect'$, the restriction of the slab
       $\xSlabX{\rect}$ to the rectangle $\rect'$.}
    \figlab{x:expand}
\end{figure}

\subsubsection{Correctness}

For a rectangle $\rect$, let
$\xSlabX{\rect} = \Set{(x,y)\in \Re^2}{\exists (x',y) \in \rect}$ be
its expansion into a horizontal slab. Restricted to a rectangle
$\rect'$, the resulting set is $\xSlabX{\rect} \cap \rect'$, depicted
in \figref{x:expand}.  Similarly, we denote
\begin{equation*}
    \ySlabX{\rect} = \Set{(x,y)\in \Re^2}{\exists (x,y') \in \rect}.
\end{equation*}

\begin{figure}[h]
    \includegraphics[page=2]{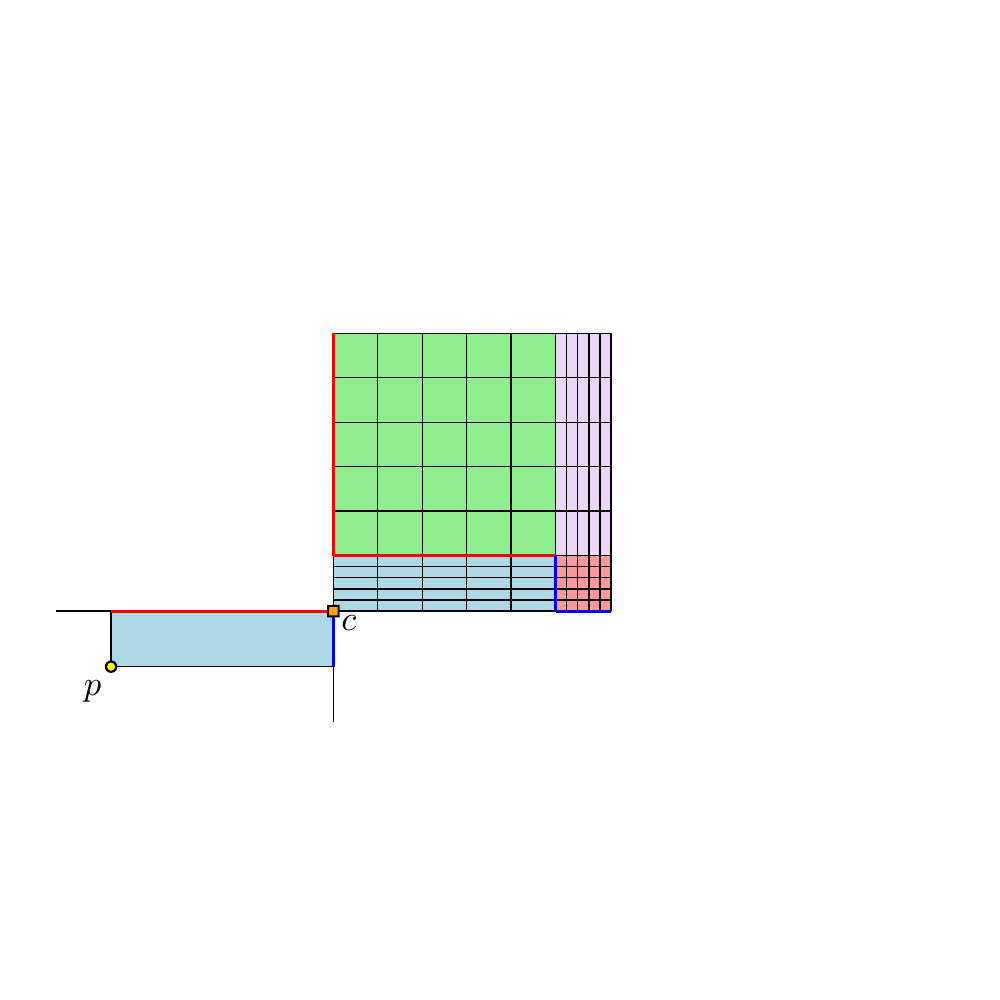}%
    \hfill%
    \includegraphics[page=3]{figs/grid_2}%
    \hfill%
    \includegraphics[page=4]{figs/grid_2}
    \caption{An illustration of $\grid(\pa, \Pair)$ with three
       rectangles and their shrunken version.}
    \figlab{grid:2}
\end{figure}
\begin{lemma}
    \lemlab{cases}%
    Assume that $\gConst \geq \ceil{20/\eps + 20/\delta}$.  Consider a
    pair $\Pair = \{\PSX, \PSY \}$ in the above construction, and a
    point $\pa =(-x,-y) \in \PSX$ with its associated grid
    $\grid = \grid(\pa, \Pair)$. Consider any axis parallel rectangle
    $\rect$, such that $\pa \in (1-\delta)\rect = I\times J$, and
    $(1-\delta)\rect$ intersects a cell $\cell \in \grid$. We have
    that:
    \begin{compactenumI}
        \smallskip%
        \item \itemlab{i:c} If $\cell \subseteq (1-\delta)\rect$ then
        $(1-\delta)^{-1} \cell \subseteq \rect$.
		
        \item \itemlab{small:diam}
        $\diameterX{\cell} \leq (\eps/4) \dsY{\pa}{\cell}$.

        \item \itemlab{x:bigger} If $x \geq y$ and
        $\cell \subseteq \rect_\swarrow \cup \rect_\searrow$ then
        $(1-\delta)^{-1} \cell \subseteq \rect$.
		
        \item \itemlab{y:bigger} If $x \leq y$ and
        $\cell \subseteq \rect_\swarrow \cup \rect_\nwarrow$ then
        $(1-\delta)^{-1} \cell \subseteq \rect$.
		
        \smallskip%
        \item \itemlab{n:w} If $x \geq y$ and
        $\cell \subseteq \rect_\nwarrow$, then
        $(1-\delta)^{-1} \bigl(\xSlabX{ (1-\delta)\rect} \cap \cell
        \bigr) \subseteq \rect$.
		
        \smallskip%
        \item \itemlab{s:e} If $x \leq y$ and
        $\cell \subseteq \rect_\searrow$, then
        $(1-\delta)^{-1} \Bigl(\ySlabX{\bigl( (1-\delta)\rect \bigr)}
        \cap \cell \Bigr) \subseteq \rect$.
    \end{compactenumI}
\end{lemma}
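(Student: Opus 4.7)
The plan is to verify each of the six items by a coordinate-wise analysis. Write $\rect = [t_x - r_x, t_x + r_x] \times [t_y - r_y, t_y + r_y]$, and interpret $(1-\delta)^{-1}R$, for an axis-parallel rectangle $R$, as the scaling of $R$ around its own center by factor $1/(1-\delta)$, consistent with the definition of $(1-\delta)R$. The first step is to extract the quantitative consequences of the hypotheses: the containment $\pa = (-x,-y) \in (1-\delta)\rect$ gives $t_x - (1-\delta)r_x \leq -x$ and $t_y - (1-\delta)r_y \leq -y$, and the nonempty intersection $\cell \cap (1-\delta)\rect$, together with the location of $\cell$ inside one of the four subrectangles $\rectA_\star$, forces $(1-\delta)I$ (respectively $(1-\delta)J$) to reach the corresponding coordinate range of $\rectA_\star$. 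Subtracting endpoint inequalities yields $r_x \geq x/(2(1-\delta))$ and $r_y \geq y/(2(1-\delta))$, strengthened to $r_x \geq x/(1-\delta)$ when $\cell \subseteq \rectA_\searrow \cup \rectA_\nearrow$, and to $r_y \geq y/(1-\delta)$ when $\cell \subseteq \rectA_\nwarrow \cup \rectA_\nearrow$.

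Items (I) and (II) are routine. For (I), the containment $\cell \subseteq (1-\delta)\rect$ forces $h_\star \leq (1-\delta)r_\star$ in each coordinate; expanding $\cell$ by $1/(1-\delta)$ around its center extends it by $h_\star \delta/(1-\delta) \leq \delta r_\star$ per side, which is precisely the slack between $(1-\delta)\rect$ and $\rect$. For (II), the explicit cell dimensions in each subrectangle yield $\diameterX{\cell} \leq \sqrt{2}\max(x,y)/\gConst$, while $\dsY{\pa}{\cell}$ is bounded below by the distance from $\pa$ to the nearest corner of the containing subrectangle, which is always at least $\max(x,y)$. The ratio is at most $\sqrt{2}/\gConst \leq \eps/4$ using $\gConst \geq 20/\eps$.

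Items (III) and (IV) are mirror images under swapping $x$ and $y$. For (III), with $x \geq y$, a cell in $\rectA_\swarrow \cup \rectA_\searrow$ has both half-widths bounded by $x/(2\gConst)$. Fitting $(1-\delta)^{-1}\cell$ into $\rect$ reduces, in each coordinate, to the inequality $h_\star (2-\delta)/(1-\delta) \leq \delta r_\star$, which closes using the above bounds on $r_\star$ and $\gConst \geq 4/\delta$. Item (IV) is analogous with the roles of $x$ and $y$ swapped.

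Items (V) and (VI) are the delicate cases, and symmetric to each other. In (V), the cell sits in $\rectA_\nwarrow$ with dimensions $(x/\gConst) \times (x/\gConst)$, which can exceed $r_y \approx y/(1-\delta)$ when $x \gg y$; hence the naive bound $(1-\delta)^{-1}\cell \subseteq \rect$ fails in that regime. The slab clipping $\xSlabX{(1-\delta)\rect} \cap \cell$ restricts the $y$-extent of the cell to $J_\cell' = J_\cell \cap (1-\delta)J$, yielding a clipped rectangle with $y$-half-width $h_y' \leq h_y$. I split into two subcases: if $J_\cell \subseteq (1-\delta)J$ then $h_y' = h_y \leq (1-\delta)r_y$ is automatic from the containment; otherwise $J_\cell$ protrudes out the top of $(1-\delta)J$, and combining $y_{\cell,\text{bot}} \geq y$ (from $\cell \subseteq \rectA_\nwarrow$) with $t_y + (1-\delta)r_y \leq 2(1-\delta)r_y - y$ (from $t_y \leq -y + (1-\delta)r_y$) gives $h_y' \leq (1-\delta)r_y - y$. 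Either way $h_y' \leq (1-\delta)r_y$, which after scaling up gives $h_y'/(1-\delta) \leq r_y$, exactly what is needed for the scaled $y$-range to sit inside $[t_y - r_y, t_y + r_y]$; the $x$-direction is handled as in (III). The main obstacle is precisely this verification: without the slab clipping the expansion would overshoot $\rect$, and making the clipped bound tight enough requires using simultaneously that $\pa$ sits at the extreme $-y$ and that $\cell$ lies above $y$, leaving the margin $(1-\delta)r_y - y$ on the clipped half-width.
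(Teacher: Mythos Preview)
Your strategy matches the paper's: item (I) is the one-line observation that any sub-interval of $(1-\delta)J$ expands into $J$; items (IV), (VI) follow by symmetry; (II) is a direct diameter-versus-distance estimate; (III) is a margin argument comparing cell dimensions to $\delta r_\star$; and (V) handles the $x$-direction as in (III) while the $y$-direction is just (I) applied to the clipped interval $J_\cell' \subseteq (1-\delta)J$. Your (V) case-split into ``$J_\cell \subseteq (1-\delta)J$'' versus ``protrudes out the top'' is correct but unnecessary: once you note $J_\cell' \subseteq (1-\delta)J$, the paper simply invokes (I) and is done.

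There is one genuine slip in your (III). You bound \emph{both} half-widths by $x/(2\gConst)$ and then try to close $h_y(2-\delta)/(1-\delta) \le \delta r_y$ using $r_y \ge y/(2(1-\delta))$. With $h_y \le x/(2\gConst)$ this requires $\gConst \ge x(2-\delta)/(\delta y)$, which is unbounded when $x \gg y$. The fix is trivial but essential: cells in $\rectA_\swarrow$ have height $y/\gConst$ and cells in $\rectA_\searrow$ have height $y/\gConst$ as well, so in fact $h_y \le y/(2\gConst)$, and then the $y$-inequality closes exactly as the $x$-inequality does, needing only $\gConst \ge (2-\delta)/\delta$. The paper avoids this trap by arguing separately that the height margin $\delta r_y$ dominates the cell height (not width), using that $(1-\delta)\rect$ spans at least $y$ vertically.
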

\begin{proof}
    \itemref{i:c} is immediate, \itemref{y:bigger} and \itemref{s:e}
    follows by symmetry from \itemref{x:bigger} and \itemref{n:w},
    respectively.
	
    \smallskip%
    \noindent%
    \itemref{small:diam} We have that
    \begin{math}
	\diameterX{\cell}%
	\leq%
	(x+y) /\gConst%
	=%
	\|\pa\|_1 /\gConst \leq %
	(\eps/4) \dsY{\pa}{\cell}.
    \end{math}
	
    \smallskip%
    \noindent%
    \itemref{x:bigger} The width, denoted $\widthX{\cdot}$, of
    $(1-\delta)\rect$ is at least $x$, as it contains both $\pa$ and
    the origin. As such,
    \begin{equation*}
	\bigl( \widthX{\rect} - \widthX{(1-\delta)\rect} \bigr)/2 \geq
	2(x /\tau) \geq 2 \widthX{\cell}.
    \end{equation*}
    That is, the width of the ``expanded'' rectangle $\rect$ is enough
    to cover $\cell$, and a grid cell adjacent to it to the right.
	
    A similar argument about the height shows that $\rect$ covers the
    region immediately above $\cell$ -- in particular, the vertical
    distance from $\cell$ to the top boundary of $\rect$ is at least
    the height of $\cell$. This implies that the expanded cell
    $(1-\delta)^{-1}\cell$ is contained in $\rect$, as claimed, as
    $\delta < 1/2$.

    \smallskip%
    \noindent%
    \itemref{n:w} We decompose the claim to the two dimensions of the
    region. Let
    $\rectA = \bigl(\xSlabX{ (1-\delta)\rect} \cap \cell
    \bigr)$. Observe that containment in the $x$-axis follows by
    arguing as in \itemref{x:bigger}. As for the $y$-interval of $B$,
    observe that it is contained in the $y$-interval of
    $(1-\delta)\rect$, which implies that when expanded by
    $(1-\delta)^{-1}$, it would be contained in the $y$-interval of
    $\rect$. Combining the two implies the result.
\end{proof}

\begin{lemma}
    \lemlab{local:spanner}%
    For any axis-parallel rectangle $\rect$, and any two points
    $\pa, \pb \in (1-\delta)\rect \cap \PS$, there exists a
    $(1+\eps)$-path between $\pa$ and $\pb$ in $\G$.
\end{lemma}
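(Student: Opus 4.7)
The plan is to proceed by strong induction on the rank of $\dY{\pa}{\pb}$ among pairwise distances in $\PS$. Let $\Pair = \{\PSX, \PSY\} \in \WS$ be the pair quadrant-separating $\pa, \pb$, WLOG with $\pa \in \PSX, \pb \in \PSY$. Translate so $\cen_\Pair = (0,0)$ and write $\pa = (-x, -y)$, $\pb = (x'', y'')$ with $x,y,x'',y''>0$. Swapping $\pa \leftrightarrow \pb$ and using the symmetric grid $\grid(\pb, \Pair)$ if needed, assume $\|\pa\|_1 \geq \|\pb\|_1$. Then $\pb \in [0, x+y]^2$, the domain of $\grid = \grid(\pa, \Pair)$, and $\pb \notin \interiorX{\rectA_\nearrow}$ (the equality case forces $\pb = (x,y)$, a boundary point that can be assigned to $\rectA_\swarrow$). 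Let $\cell$ be the grid cell containing $\pb$; it lies in $\rectA_\swarrow \cup \rectA_\nwarrow \cup \rectA_\searrow$.

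The next step is to locate $\pb' \in \PSY \cap \cell$ connected to $\pa$ in $\G$ with $\pb' \in \rect$, via case analysis on the position of $\cell$. In the \emph{aligned} case ($x \geq y$ with $\cell \subseteq \rectA_\swarrow \cup \rectA_\searrow$, or $x \leq y$ with $\cell \subseteq \rectA_\swarrow \cup \rectA_\nwarrow$), \lemref{cases}\itemref{x:bigger} or \itemref{y:bigger} gives $(1-\delta)^{-1}\cell \subseteq \rect$, so any left-most or bottom-most point $\pb' \in \PSY \cap \cell$, which is connected to $\pa$ by the construction of \secref{pair:edges}, satisfies $\pb' \in \cell \subseteq \rect$. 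In the \emph{slab} case ($x \geq y$ with $\cell \subseteq \rectA_\nwarrow$, or its mirror), take $\pb'$ to be the bottom-most (resp.\ left-most) point of $\PSY \cap \cell$. One verifies $\pb' \in \xSlabX{(1-\delta)\rect} \cap \cell$: $y(\pb') \leq y(\pb)$ since $\pb'$ is bottom-most in $\PSY \cap \cell \ni \pb$, while $\cell \subseteq \rectA_\nwarrow$ gives $y(\pb') \geq y$; because $y(\pa) = -y$ and $y(\pb)$ both lie in the $y$-range $J$ of $(1-\delta)\rect$, we have $\min J \leq -y < y \leq y(\pb') \leq y(\pb) \leq \max J$, so $y(\pb') \in J$. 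Then \lemref{cases}\itemref{n:w} (resp.\ \itemref{s:e}) gives $\pb' \in (1-\delta)^{-1}(\xSlabX{(1-\delta)\rect} \cap \cell) \subseteq \rect$. By convexity of $\rect$, the edge $\pa\pb' \in \restrictY{\G}{\rect}$.

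To invoke induction on $(\pb', \pb)$, choose an intermediate rectangle $\rect' \subseteq \rect$ with $\pb, \pb' \in (1-\delta)\rect'$: set $\rect' = (1-\delta)^{-1}\cell$ in the aligned case or $\rect' = (1-\delta)^{-1}(\xSlabX{(1-\delta)\rect} \cap \cell)$ in the slab case. Both satisfy $\rect' \subseteq \rect$ by \lemref{cases}, while $(1-\delta)\rect'$ equals $\cell$ (resp.\ $\xSlabX{(1-\delta)\rect} \cap \cell$), which contains both $\pb$ and $\pb'$. Since $\dY{\pb'}{\pb} \leq \diameterX{\cell} \leq (\eps/4)\dsY{\pa}{\cell} \leq (\eps/4)\dY{\pa}{\pb}$ by \lemref{cases}\itemref{small:diam}, induction provides a $(1+\eps)$-path $\pi'$ from $\pb'$ to $\pb$ in $\restrictY{\G}{\rect'} \subseteq \restrictY{\G}{\rect}$. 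Concatenating $\pa\pb'$ with $\pi'$ gives total length at most $\dY{\pa}{\pb'} + (1+\eps)\dY{\pb'}{\pb} \leq \dY{\pa}{\pb} + (2+\eps)\dY{\pb'}{\pb} \leq \bigl(1 + (2+\eps)\eps/4\bigr)\dY{\pa}{\pb} \leq (1+\eps)\dY{\pa}{\pb}$ for $\eps \leq 2$. The base case (closest pair) is immediate: any alternative $\pb' \neq \pb$ in $\cell$ would have $\dY{\pb'}{\pb} < \dY{\pa}{\pb}$, contradicting minimality, so $\pb' = \pb$ and the direct edge suffices.

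The hard part will be the slab case: $\cell$ may protrude outside $(1-\delta)\rect$, so a naive choice of $\pb' \in \PSY \cap \cell$ may leave $\rect$. The construction's double edge-assignment to both the bottom-most and left-most points of $\PSY \cap \cell$ is tailored precisely to guarantee that a suitable candidate endpoint lies in the appropriate slab of $(1-\delta)\rect$, and \lemref{cases} packages together the two containments ($\pb' \in \rect$ and $\rect' \subseteq \rect$) required to simultaneously exhibit the edge $\pa\pb'$ in $\restrictY{\G}{\rect}$ and close the induction via a smaller sub-rectangle.
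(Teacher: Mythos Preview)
Your proof is correct and follows essentially the same route as the paper's: locate the grid cell $\cell$ of $\grid(\pa,\Pair)$ containing $\pb$, pick the bottom-most or left-most point of $\PSY\cap\cell$ as the intermediate jump $\pb'$, and recurse inside the sub-rectangle supplied by \lemref{cases}. The only structural difference is that you induct on the rank of $\dY{\pa}{\pb}$ among pairwise distances, whereas the paper inducts on the size of the rectangle $\rect$; both are valid well-founded orders here. You are also more explicit than the paper on two points the paper leaves implicit: the exclusion of $\rectA_\nearrow$ via $\|\pb\|_1\le\|\pa\|_1$, and the verification that the bottom-most point $\pb'$ actually lands in the slab $\xSlabX{(1-\delta)\rect}\cap\cell$ in the $\rectA_\nwarrow$ case.
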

\begin{proof}
    The proof is by induction over the size of $\rect$ (i.e. area,
    width, or height). Let $\Pair = \{ \PSX, \PSY \} \in \WS$ be the
    pair in the \QSPD that separates $\pa$ and $\pb$, let $\cen$ be
    the separation point of the pair, and assume for the simplicity of
    exposition that $\pa \in \PSX$, $\PSX \prec \cen \prec \PSY$, and
    $\cen = (0,0)$. Furthermore, assume that
    $\|\pa\|_1 \geq \|\pb\|_1$.
	
    Let $\pa = (-x,-y)$, and let $\cell$ be the grid cell of
    $\grid(\pa, \Pair)$ that contains $\pb$. If
    $\cell \subseteq (1-\delta)\rect$, then
    $(1-\delta)^{-1}\cell \subseteq \rect$ by \lemref{cases}
    \itemref{i:c}. As such, let $\pc$ be the leftmost point in
    $\cell \cap \PS$. Both $\pb, \pc \in (1-\delta)^{-1}\cell$, and by
    induction, there is an $(1+\eps)$-path $\pi$ between them in $\G$
    (note that the induction applies to the two points, and the
    ``expanded'' rectangle $(1-\delta)^{-1}\cell$). Since $\pa \pc$ is
    an edge of $\G$, prefixing $\pi$ by this edge results in an
    $(1+\eps)$-path, as $\dY{\pb}{\pc} \leq (\eps/4) \dY{\pa}{\pb}$,
    by \lemref{cases} \itemref{small:diam} (verifying this requires
    some standard calculations which we omit).
	
    Otherwise, one need to apply the same argument using the
    appropriate case of \lemref{cases}.  So assume that $x \geq y$
    (the case that $y \geq x$ is handled symmetrically). If
    $\cell \subseteq \rect_\swarrow \cup \rect_\searrow$, then
    \itemref{x:bigger} implies that
    $(1-\delta)^{-1} \cell \subseteq \rect$. Which implies that
    induction applies, and the claim holds.
	
    The remaining case is that $x \geq y$ and
    $\cell \subseteq \rect_\nwarrow$.  Let
    $\rectB= \xSlabX{ (1-\delta)\rect} \cap \cell$.  By \itemref{n:w},
    we have $(1-\delta)^{-1}\rectB \subseteq \rect$. Namely,
    $\pb \in (1-\delta)\rect \cap \cell \subseteq \rectB$, and let
    $\pc$ be the lowest point in $\cell \cap \PS$. By construction
    $\pa \pc \in \EGX{\G}$, $\pb, \pc \in \rectB$,
    $(1-\delta)^{-1} \rectB \subseteq \rect$. As such, we can apply
    induction to $\pb$, $\pc$, and $(1-\delta)^{-1} \rectB$, and
    conclude that $\dGZ{\G}{\pb}{\pc} \leq (1+\eps) \dY{\pb}{\pc}$.
    Plugging this into the regular machinery implies the claim.
\end{proof}

\begin{theorem}
    \thmlab{a:l:s:rectangles}%
    Let $\PS$ be a set of $n$ points in the plane, and let
    $\eps, \delta \in (0,1)$ be parameters. The above algorithm
    constructs, in $\Of((1/\eps^2 + 1/\delta^2) n \log^2 n)$ time, a
    graph $\G$ with $\Of((1/\eps^2 + 1/\delta^2) n \log^2 n)$
    edges. The graph $\G$ is a $(1-\delta)$-local $(1+\eps)$-spanner
    for axis parallel rectangles. Formally, for any axis-parallel
    rectangle $\rect$, we have that $\rect \cap \PS$ is an
    $(1+\eps)$-spanner for all the points of
    $\bigl((1-\delta)\rect \bigr)\cap \PS$.
\end{theorem}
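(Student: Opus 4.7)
The plan is to combine three ingredients: correctness via Lemma~\lemref{local:spanner}, an edge-count argument from the per-pair contributions, and an implementation argument for the construction time. Most of the real work has already been done in Lemma~\lemref{local:spanner} and the case analysis of Lemma~\lemref{cases}; the theorem is essentially the packaging of these results together with a weight bound for the \QSPD.

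\textbf{Correctness.} Lemma~\lemref{local:spanner} already provides, for every axis-parallel rectangle $\rect$ and every pair $\pa, \pb \in (1-\delta)\rect \cap \PS$, a $(1+\eps)$-path between $\pa$ and $\pb$ in $\G$. The only point that deserves explicit verification is that this path actually lies inside $\rect$, so that it realizes a $(1+\eps)$-path in $\restrictY{\G}{\rect}$ (and not merely in $\G$). This follows by inspecting the induction used in the proof of Lemma~\lemref{local:spanner}: each base-step edge $\pa \pc$ has $\pc \in \cell$ (or $\pc \in \rectB$) with $(1-\delta)^{-1}\cell \subseteq \rect$ (resp.\ $(1-\delta)^{-1}\rectB \subseteq \rect$) by the relevant case of Lemma~\lemref{cases}, and each recursive sub-path is produced inside one of these expanded regions, which is itself contained in $\rect$. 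Hence every edge used has both endpoints in $\rect$, so $\restrictY{\G}{\rect}$ is a $(1+\eps)$-spanner of $(1-\delta)\rect \cap \PS$.

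\textbf{Size.} Set $\gConst = \Theta(1/\eps + 1/\delta)$, as in the construction. For each pair $\Pair = \{\PSX, \PSY\} \in \WS$ in the \QSPD, every point of $\PSX \cup \PSY$ generates at most two edges per cell of the $\Of(\gConst^2)$-cell grid $\grid(\pa, \Pair)$, so the per-pair contribution is $\Of(\gConst^2 (|\PSX| + |\PSY|))$. Summing and invoking the planar \QSPD weight bound yields
\begin{equation*}
    |\EGX{\G}|
    \;=\;
    \Of\bigl(\gConst^2 \cdot \WeightX{\WS}\bigr)
    \;=\;
    \Of\bigl(\gConst^2 \cdot n\log^2 n\bigr)
    \;=\;
    \Of\bigl((1/\eps^2 + 1/\delta^2)\, n\log^2 n\bigr),
\end{equation*}
using $(1/\eps + 1/\delta)^2 \leq 2(1/\eps^2 + 1/\delta^2)$.

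\textbf{Running time and main difficulty.} The \QSPD is built in $\Of(n \log^2 n)$ time. For each pair $\Pair$, presort $\PSX$ and $\PSY$ along each axis (a cost absorbed by $\WeightX{\WS}$); then, for each $\pa \in \PSX$, locate the $\Of(\gConst^2)$ cells of $\grid(\pa, \Pair)$ in the sorted orders. Because inside each of the four sub-rectangles of $\grid(\pa, \Pair)$ the grid is uniform with spacing determined solely by $\pa$ and $\cen_\Pair$, the leftmost and bottommost points per cell can be extracted via a linear sweep of $\PSY$, yielding a per-pair cost proportional to the number of edges generated plus $\Of(|\PSX| + |\PSY|)$. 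Summing over pairs gives the claimed bound. The principal obstacle is not this packaging but the correctness verification above: one must be certain that the inductive path from Lemma~\lemref{local:spanner} never leaves $\rect$, since the spanner guarantee is for $\restrictY{\G}{\rect}$ rather than for $\G$ globally. Once that is in hand, the remaining counting and timing arguments follow routinely from Lemmas~\lemref{cases} and~\lemref{local:spanner} together with the planar \QSPD bound.
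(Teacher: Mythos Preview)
Your correctness and size arguments match the paper's almost verbatim; in fact, your explicit check that the inductive path of \lemref{local:spanner} stays inside $\rect$ is a point the paper itself glosses over, so that part is fine (indeed, slightly more careful).

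The gap is in the running-time argument. You write that ``for each $\pa\in\PSX$ \ldots\ the leftmost and bottommost points per cell can be extracted via a linear sweep of $\PSY$, yielding a per-pair cost proportional to the number of edges generated plus $\Of(|\PSX|+|\PSY|)$.'' A linear sweep of $\PSY$ costs $\Theta(|\PSY|)$, and you must repeat it for every $\pa\in\PSX$ because the grid $\grid(\pa,\Pair)$ depends on the coordinates of $\pa$; hence the per-pair cost is $\Theta(|\PSX|\cdot|\PSY|)$, not $\Of(\gConst^2(|\PSX|+|\PSY|))$. In a \QSPD the top-level pair has $|\PSX|,|\PSY|=\Theta(n)$, so $\sum_{\Pair}|\PSX_\Pair|\cdot|\PSY_\Pair|=\Omega(n^2)$, and your implementation is quadratic rather than near-linear. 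Presorting does not rescue this: finding the leftmost point in an arbitrary axis-parallel cell is a genuine two-dimensional range-minimum query, and the cell boundaries change with every $\pa$.

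The paper handles this by building an orthogonal range tree and answering each ``leftmost/bottommost point in cell'' query in $\Of(\log^2 n)$ time; since the total number of queries is $\Of(\gConst^2\,\WeightX{\WS})$, this gives the stated bound (up to polylogarithmic slack that the paper does not dwell on). You should replace the sweep by such a data structure, or otherwise explain how to amortize across the varying grids.
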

\begin{proof}
    Computing the \QSPD $\WS$ takes $\Of(n \log^2 n)$ time. For each
    pair $\{\PSX, \PSY\}$ in the decomposition with
    $m = |\PSX| + |\PSY|$ points, we need to compute the lowest and
    leftmost points in $(\PSX \cup \PSY) \cap \cell$, for each cell in
    the constructed grid. This can readily be done using orthogonal
    range trees in $\Of( \log^2 n)$ time per query (a somewhat faster
    query time should be possible by using that offline nature of the
    queries, etc). This yields the construction time. The size of the
    computed graph is
    $\Of(\WeightX{\WS} \gConst^2) = O\bigl((1/\delta^2 + 1/\eps^2) n
    \log^2 n\bigr)$.
	
    The desired local spanner property is provided by
    \lemref{local:spanner}.
\end{proof}

\BibTexMode{%
   \SoCGVer{%
      \bibliographystyle{plainurl}%
   }%
   \NotSoCGVer{%
      \bibliographystyle{alpha}%
   }%
   \bibliography{ft_spanner}%
}%
\BibLatexMode{\printbibliography}

\appendix

\section{Proof of \TPDF{\lemref{refine:d:w}}{{refine:d:w}}}
\apndlab{refine:d:w}

\RestatementOf{\lemref{refine:d:w}}%
{%
   \LemmaRefineDWBody{}%
}
   
\begin{figure}[ht]
    \centerline{\includegraphics{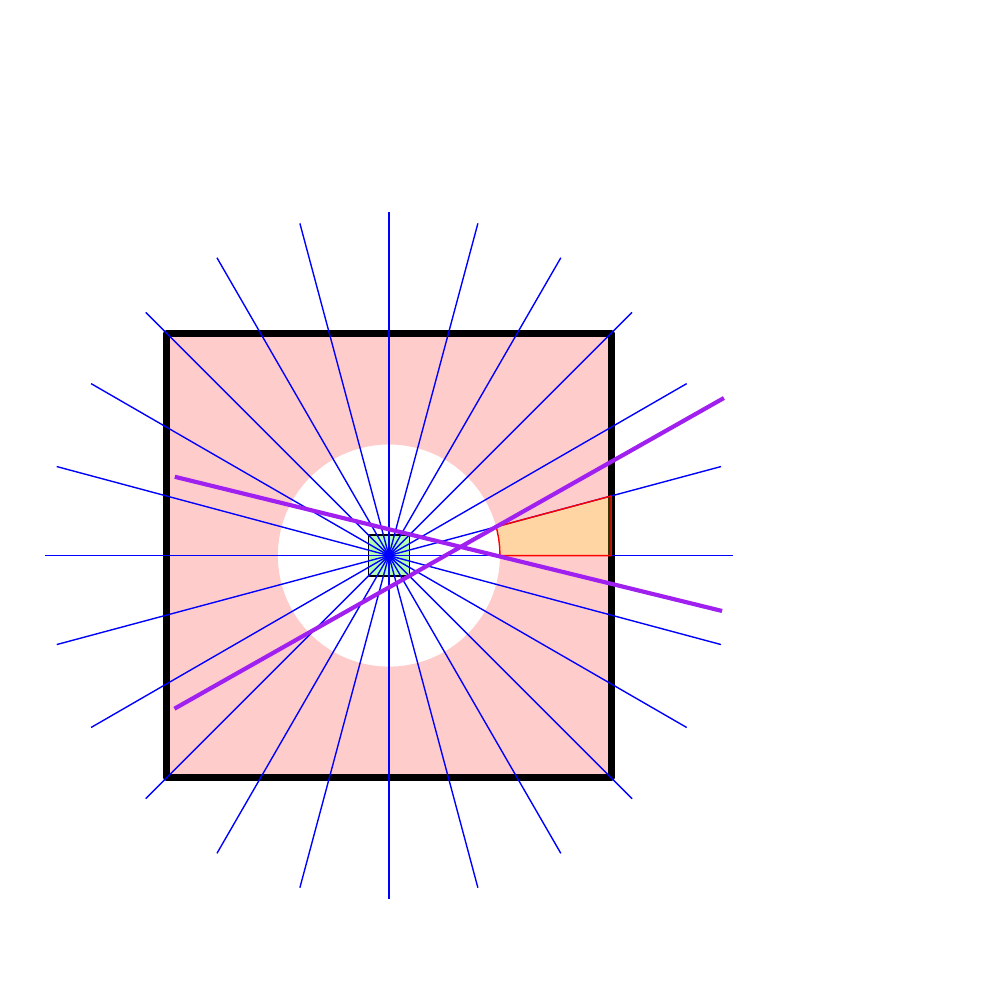}}
    \caption{An illustration of refining the pairs in a \SSPD into
       pairs contained in opposite parts of an
       $\eps$-double-wedge. $\PSX$ is contained in the green square
       $\square$, while $\PSY$ is contained in the red square, and the
       white gap between them is a result of the separation
       property. The set of cones with the apex at the center of
       $\square$ gives us the desired partition as demonstrated by the
       purple double-wedge. }
    \figlab{partition}
\end{figure}

\begin{proof}
    By using \lemref{chop:easy}, we can assume that $\WS$ is (say)
    $(10/\eps)$-separated.  Now, the algorithm scans the pairs of
    $\WS$. For each pair $\Pair = \{ \PSX, \PSY \} \in \WS$, assume
    that $\diameterX{\PSX} < \diameterX{\PSY}$. Let $\square$ be the
    smallest axis-parallel square containing $\PSX$, centered at point
    $\origin$.  Partition the plane around $\origin$, by drawing
    $\Of(1/\eps)$ lines intersecting $\origin$ with the angle between
    any two consecutive lines being at most (say) $\eps/4$, see
    \figref{partition}. This partitions the plane into a set of cones
    $\ConeSet$. For a cone $\cone \in \ConeSet$, we show that there
    exists an $\eps$-double-wedge that contains $\PSX$ in one side,
    and $\PSY \cap \cone$ in the other.

    To see that, take the double-wedge formed by the cross tangents
    between $\CHX{\PSX}$ and $\CHX{\PSY \cap \cone}$, where
    $\CHX{\PSX}$ denotes the convex-hull of $\PSX$. Assume w.l.o.g
    that $\square$ has side length 1, and let $\coneB$ be a cone of
    angle $\eps / 4$ with apex $\origin$, whose angular bisector is a
    horizontal ray in the positive direction of the $x$ axis. See
    figure \figref{double-wedge} for an illustration.

    \begin{figure}[h]
    \phantom{}\hfill%
    \includegraphics[page=2, width=0.48\linewidth]{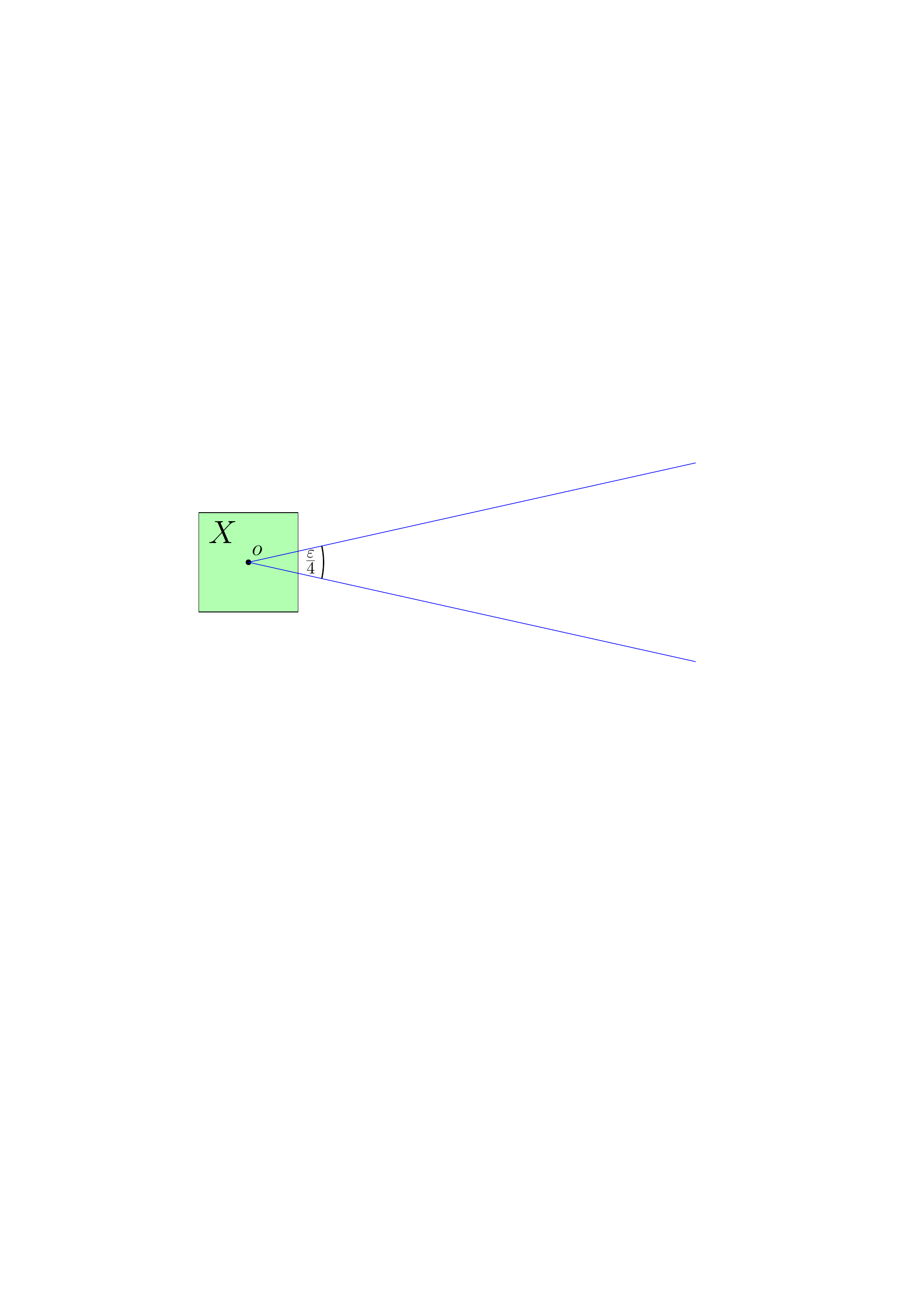}%
    \hfill%
    \includegraphics[page=3, width=0.48\linewidth]{figs/double_wedge}%
    \hfill%
    \phantom{}%
    \caption{An illustration of the proof for \lemref{refine:d:w}}
    \figlab{double-wedge}
\end{figure}

    We would like to find a vertical segment $\seg$ such that all
    points of $\PSY$ lie to its right, with one endpoint on the upper
    line of $\coneB$, and the other on the lower line of
    $\coneB$. Using the segments' height and distance from the right
    side of $\square$ we will be able to get a bound on the angle of
    the cross tangents. We first find a segment $\seg$ with all points
    of $\PSY$ to its right. A trivial bound on that distance is given
    by the segment from, say, the lower left corner of $\square$,
    denoted $\pa$, of length $10/\eps$ with its right endpoint on the
    upper line of $\coneB$, denote this point by $\pb$. We know that
    all points of $\PSY$ lie to the right of $\pb$ due to the
    $10/\eps$ separation property of the \SSPD. The segment $\pa\pb$
    creates an angle $\leq\pi/4$ with the $x$-axis (by the choice of
    the angle of $\coneB$).  We therefore get that the $x$-coordinate
    difference between $\square$ and $\pb$ is at most
    $10/\eps\cdot \cos\frac{\pi}{4}-1\leq 7/\eps-1\leq 6/\eps$. So let
    $\seg'$ be a vertical segment between the upper and lower rays of
    $\coneB$, with $x$-coordinate distance of $6/\eps-\frac{1}{2}$
    from $\square$ (in order to make calculations easier). We get that
    $s'$ is of length $2\cdot \frac{6}{\eps}\tan
    \frac{\eps}{8}$. Finally, we take $\seg$ to be a vertical segment
    of length $\frac{12}{\eps}\tan \frac{\eps}{8}$, with its center on
    the $x$-axis at a distance of $5/\eps+\frac{1}{2}$ away from
    $\origin$. The angle of the $x$-axis and the segment between the
    lower end of the right side of $\square$ and the upper end of
    $\seg$ is now given by:

    \begin{equation*}
 	\arctan
        \pth{\frac{\frac{6}{\eps}
              \tan\frac{\eps}{8}+\frac{1}{2}}{\frac{5}{\eps}}
        }%
        =%
        \arctan \pth{ \frac{6}{5}\tan\frac{\eps}{8}+\frac{\eps}{10} }%
        \leq%
        \eps%
    \end{equation*}
\end{proof}

\end{document}